\newcolumntype{L}[1]{>{$}p{#1}<{$}}
\newcolumntype{C}[1]{>{\centering$}p{#1}<{$}}
\newcolumntype{R}[1]{>{\raggedleft$}p{#1}<{$}}
\newcommand\maketab[2]{\newenvironment{#1}{\begin{quote}\noindent\begin{tabular}{#2}}{\end{tabular}\end{quote}}}
\newlength{\mylength}
\newenvironment{frameqn}%
{\setlength{\fboxsep}{4pt}
\setlength{\mylength}{\linewidth}%
\addtolength{\mylength}{-2\fboxsep}%
\addtolength{\mylength}{-2\fboxrule}%
\Sbox
\minipage{\mylength}%
$$}%
{$$\endminipage\endSbox
{
\[\fbox{\TheSbox}\]
}}
\newenvironment{frametxt}%
{
\setlength{\fboxsep}{4pt}
\setlength{\mylength}{\linewidth}%
\addtolength{\mylength}{-2\fboxsep}%
\addtolength{\mylength}{-2\fboxrule}%
\Sbox
\minipage{\mylength}%
}%
{\endminipage\endSbox
{
\[\fbox{\TheSbox}\]
}}
\newcommand\shs{\mathrel{\Delta}}
\def\equiv{=}
\newcommand\raws{{\mathtt{s}}}
\newcommand\rawr{{\mathtt{r}}}
\newcommand\rawphi{{\mathtt{phi}}}
\newcommand\rawpsi{{\mathtt{psi}}}
\newcommand\dottimes{\mathrel{\text{$\dot*$}}}
\newcommand\dotzero{\text{$\dot 0$}}
\newcommand\dotsucc{\text{$\dot{\tf{succ}}$}}
\newcommand\dotbot{\text{$\dot\bot$}}
\newcommand\dotforall{\text{$\dot\forall$}}
\newcommand\dotlimp{\mathrel{\text{$\dot\limp$}}}
\newcommand\dotoeq{\mathrel{\text{$\dot\oeq$}}}
\newcommand{\denot}[3]{\llbracket #3 \rrbracket_{\scalebox{.6}{$#2$}}^{\hspace{-.1ex}\scalebox{.4}{$#1$}}}
\newcommand\den[1]{{\hspace{-.1ex}\scalebox{.45}{$#1$}}}
\newcommand\iden{{\scalebox{.4}{$\mathcal I$}}}
\newcommand\nden{\den{\mathcal N}}
\newcommand\mden{\den{\mathcal M}}
\newcommand\Deriv{\Delta}
\newcommand\pmss{\f{pmss}}
\newcommand\sort{\f{sort}}
\newcommand\somerel{\mathrel{\mathcal R}}
\newcommand\theory[1]{\ensuremath{\mathsf{#1}}}
\newcommand\pnleq{\approx}
\newcommand\oeq{\approx}
\newcommand\Chi{\raisebox{.15em}{\large$\chi$}}    
\newbox\tempa
\newbox\tempb
\newdimen\tempc
\def\mud#1{\hfil $\displaystyle{\mathstrut #1}$\hfil}
\def\rig#1{\hfil $\displaystyle{#1}$}
\def\irulehelp#1#2#3{\setbox\tempa=\hbox{$\displaystyle{\mathstrut #2}$}%
		        \setbox\tempb=\vbox{\halign{##\cr
	\mud{#1}\cr
	\noalign{\vskip\the\lineskip}%
	\noalign{\hrule height 0pt}%
	\rig{\vbox to 0pt{\vss\hbox to 0pt{${\; #3}$\hss}\vss}}\cr
	\noalign{\hrule}%
	\noalign{\vskip\the\lineskip}%
	\mud{\copy\tempa}\cr}}%
		      \tempc=\wd\tempb
		      \advance\tempc by \wd\tempa
		      \divide\tempc by 2 }
\def\irule#1#2#3{{\irulehelp{#1}{#2}{#3}%
		     \hbox to \wd\tempa{\hss \box\tempb \hss}}}
\newcommand{\fa}{\forall}
\newcommand\basesort{\tau}
\newcommand\ns[1]{{\mathsf{#1}}}
\newcommand{\model}[1]{\denot{\mathcal I}{}{#1}}
\newcommand{\amod}[1]{\ensuremath{(#1)^{.}}}
\newcommand\deffont[1]{{\bf #1}}
\newcommand\tf[1]{{\mathsf{#1}}}
\newcommand\f[1]{\mathit{#1}}
\newcommand\act{{\cdot}}
\newcommand\liff{\mathrel{\Leftrightarrow}}
\newcommand\limp{\Rightarrow}
\newcommand\Forall[1]{\forall #1.}
\newcommand\Exists[1]{\exists #1.}
\newcommand\lam[1]{\lambda #1.}
\newcommand\aeq{\mathrel{=_{\alpha}}}
\newcommand\id{\f{id}}
\newcommand\Id{\f{Id}}
\newcommand\cent{\vdash}
\newcommand\sm{{\mapsto}}
\newcommand\ssm{{{:}{:}{=}}}
\newcommand\mone{{}^{\text{-}1}}
\newcommand\rulefont[1]{\ensuremath{\bf (#1)}\xspace}
\newcommand\new{\reflectbox{\ensuremath{\mathsf{N}}}}
\newcommand\New[1]{\new #1.}
\newcommand\atomsup{\mathbb A^{\hspace{-.25ex}\scalebox{.6}{$>$}}}
\newcommand\atomsdown{\mathbb A^{\hspace{-.25ex}\scalebox{.6}{$<$}}}
\newcommand\atoms{{\mathbb A}}
\newtheoremstyle{jamiestyle}
  {4pt}
  {0pt}
  {\it}
  {0pt}
  {\bf}
  {.}
  { }
  {}
\theoremstyle{jamiestyle}
\newtheorem{thrm}{Theorem}[section]
\newtheorem{prop}[thrm]{Proposition}
\newtheorem{lemm}[thrm]{Lemma}
\newtheorem{corr}[thrm]{Corollary}
\newtheoremstyle{jamienfstyle}
  {4pt}
  {0pt}
  {\normalfont}
  {0pt}
  {\bf}
  {.}
  { }
  {}
\theoremstyle{jamienfstyle}
\newtheorem{nttn}[thrm]{Notation}
\newtheorem{defn}[thrm]{Definition}
\newtheorem{xmpl}[thrm]{Example}
\newtheorem{rmrk}[thrm]{Remark}
\begin{document}

\markboth{Dowek, Gabbay}{Permissive-nominal logic}

\title{Permissive-Nominal Logic (journal version)}
\author{\href{http://www-roc.inria.fr/who/Gilles.Dowek/}{Gilles Dowek}\  
\affil{INRIA, France. \qquad\ \ \url{http://www-roc.inria.fr/who/Gilles.Dowek}}
\href{http://www.gabbay.org.uk}{Murdoch J. Gabbay}
\affil{Heriot-Watt University, UK. \qquad \url{http://www.gabbay.org.uk}}
}

\begin{abstract}
Permissive-Nominal Logic (PNL) is an extension of first-order predicate
logic in which term-formers can bind names in their arguments.

This allows for direct axiomatisations with binders, such as of the $\lambda$-binder of the lambda-calculus or the $\forall$-binder of first-order logic.  
It also allows us to finitely axiomatise arithmetic, and similarly to axiomatise `nominal' datatypes-with-binding. 

Just like first- and higher-order logic, equality reasoning is not necessary to $\alpha$-rename.

This gives PNL much of the expressive power of higher-order logic, but models and derivations of PNL are first-order in character, and the logic seems to strike a good balance between expressivity and simplicity. 
\end{abstract}

\category{F.4.1}{Mathematical Logic and Formal Languages}{Mathematical Logic}[Set theory \and Lambda-calculus and related systems]
\category{I.2.3}{Artificial Intelligence}{Deduction and Theorem Proving}[Metatheory]

\terms{Theory, Languages}

\keywords{Permissive-nominal logic, nominal sets, names and binding, permutation}

\acmformat{Dowek, Gilles, Gabbay, Murdoch J., 2012. Permissive-nominal logic (journal version)}

\begin{bottomstuff}
Thanks to the anonymous referees.
Supported by grant RYC-2006-002131 at the Polytechnic University of Madrid.
\end{bottomstuff}

\maketitle

\tableofcontents

\section{Introduction}

In the early 20th century, the expressivity of logics was considered 
\emph{in principle}. 
For example, 
first-order predicate logics with or without
term-formers are equally expressive, in principle.

In the early 21st century, more attention is paid to what we like to call `ergonomics'.
First-order predicate logic with term-formers is more ergonomic than first-order
predicate logic without term-formers; terms, propositions and
proofs are shorter and more natural in the former than in the
latter.

Another imperative is for a \emph{weak} logic---the fewer `bells and whistles' we have to worry about, the easier it will be to verify, implement, and prove its meta-theoretic properties. 
This can be in tension with being ergonomic, as the example of first-order predicate logic with or without term-formers illustrates.

Thus there enters a fruitful design tension: we aim for logics that are so ergonomic that they `just work', yet so weak and well-behaved that we can still prove good properties for them.

Now we come to the issue of \emph{binding}.
Binding is ubiquitous in logical specifications in mathematics---binding features in function definitions via $\lambda$-abstraction, 
and binders are also used to define sets in comprehension, and to define finite and infinite sums, 
integrals, derivations, quantifiers, and so on.
A logic for mathematics that provides support for this central and essential notion, 
is likely to be more ergonomic than a logic that does not.

First-order logic is weak, computationally tractable, and has good theoretical properties (unification of first-order terms is decidable; proof-search is simple and well-understood; models are simple).
However, first-order logic is unergonomic in the sense that it does not admit term-formers that can bind.
Thus it is hard to give direct, finite, first-order axiomatisations of set theory, arithmetic, higher-order logic, or the $\lambda$-calculus.

This is one reason that e.g. higher-order logic is often used as a specification language in theory (see \cite{farmer:sevvst} for an excellent exposition) and implementations (like Isabelle \cite{paulson:isanst})---higher-order logic has a binder ($\lambda$) built-in to terms, and so is more ergonomic.

However, higher-order logic is also stronger than first-order logic, less computationally tractable, and models tend to be more complex.
The jump from first- to higher-order logic is quite large.

This motivates the study of direct extensions of first-order predicate logic with term-formers that can bind.
The topic of this paper is the construction of one such extension, which we call \emph{permissive-nominal logic} (\deffont{PNL}). 
PNL has a clear first-order flavour, and it admits term-formers that bind.

\subsubsection*{Technical summary and overview}

The main technical contributions of this paper are: the definition of permissive-nominal logic, in particular how it handles freshness side-conditions and how this permits the addition of universal quantification to nominal terms; the Tarski-style models we then construct; and the finite yet fully first-order axiomatisations of substitution, first-order logic, and arithmetic which we then exhibit.

Soundness, completeness, and cut-elimination follow by fairly routine arguments, but we see this as a good sign: that the definition of permissive-nominal logic remains close to first-order logic, while allowing terms with binders. 

An overview of this paper is as follows:
\begin{itemize}
\item
We introduce the syntax and derivation rules of permissive-nominal logic (Section~\ref{sect.pnl}).

The impatient reader can jump directly to Figure~\ref{Seq} on page~\pageref{Seq} and see that these derivation rules are virtually indistinguishable from those of first-order logic; only an extra side-condition in \rulefont{\forall L} hints at any difference.\footnote{The language of terms is significantly different, though; see Definition~\ref{defn.syntax}.} 
\item
We prove soundness and completeness (Theorems~\ref{thrm.soundness} and~\ref{thrm.completeness}) with respect to a suitable notion of \emph{permissive-nominal set} (Definition~\ref{defn.nominal.set}).
\item
We axiomatise arithmetic in PNL and prove a correctness and conservative extension result Theorem~\ref{thrm.arithmetic}.
The axiomatisation of arithmetic is \emph{finite}; the induction schema normally given in first-order logic arithmetic is represented by a single axiom with a universal quantification $\forall X$ over a permissive-nominal terms unknown.
\item
We indicate how to axiomatise nominal inductive datatypes, the $\new$-quantifier, and semantic freshness (Section~\ref{sect.more.examples}).
\item
We prove cut-elimination (Theorem~\ref{thrm.cut}).
The proof is virtually identical to that of first-order logic.
\end{itemize}

\subsubsection*{Permissive-nominal logic, nominal logic, and nominal terms}

Permissive-nominal logic follows the \emph{nominal logic} of \cite{pitts:nomlfo-jv} in its name, which coined the term `nominal', but nominal logic from \cite{pitts:nomlfo-jv} is a first-order theory of (set of axioms for) the equivariant Fraenkel-Mostowski sets and associated constructions used in \cite{gabbay:newaas-jv}. 
The syntax, semantics, and derivability of PNL are new, as indeed is the application to arithmetic. 

The term-language of PNL goes back to the \emph{nominal terms} of \cite{gabbay:nomu-jv}.
It is permissive, which means that the freshness contexts from \cite{gabbay:nomu-jv} become a kind of sorting constraint called \emph{permission sets}.
For more discussion see \cite{gabbay:perntu-jv} which introduced permissive-nominal terms, and amongst other things gave correspondences with nominal terms and also higher-order patterns.

In this paper we extend nominal syntax further by introducing $\f{shift}$-permutations (Definition~\ref{defn.permutation}).
Also, unlike \cite{gabbay:nomu-jv,gabbay:perntu-jv} in PNL there is quantification over unknowns $\forall X$.

This journal paper follows a conference paper \cite{gabbay:pernl}.
With respect to that paper, we have made the following changes:
\begin{itemize*}
\item
The treatment of $\alpha$-equivalence has been streamlined, leading to simplified proofs.
Two structural rules \rulefont{\alpha_L} and \rulefont{\alpha_R} have been eliminated.
\item
The rule \rulefont{\new} from \cite{gabbay:pernl} is now part of the axiom rule, further simplifying the proof-theory. 
\item
The notion of permutation includes $\f{shift}$-permutations; these permutations `shift all atoms up by one'.
Some readers will see in this a de Bruijn-like `shift' function \cite{abadi:exps}.
This gives desirable extra power to $\forall$-quantification and, perhaps surprisingly, turns out to be compatible with nominal techniques' characteristic \emph{small support} property.
\item
We include proofs of completeness by a standard term-model construction, and a sketch proof of cut-elimination.
\end{itemize*}

\section{Permissive-Nominal Logic}
\label{sect.pnl}

\subsection{Syntax}

\begin{defn}
\label{defn.sort.sig}
A \deffont{sort-signature} is a pair $(\mathcal A,\mathcal B)$ of \deffont{name} and \deffont{base} sorts.
$\nu$ will range over name sorts; $\basesort$ will range over base sorts.
A \deffont{sort language} is then defined by
\begin{frameqn}
\alpha ::= \nu \mid \basesort \mid 
(\alpha,\ldots,\alpha) \mid [\nu]\alpha
.
\end{frameqn}
We admit the possibility of empty tuples, so that $()$ is a sort (the \emph{unit sort}).
\end{defn}

\begin{xmpl}
Examples of base sorts are: `$\lambda$-terms',\ `formulae',\ `$\pi$-calculus processes',\ and `program environments', `functions', `truth-values', `behaviours',\ and `valuations'.

Examples of name sorts are `variable symbols',\ `channel names',\ or `memory locations'.
\end{xmpl}

\begin{defn}
\label{defn.term.signature}
A \deffont{term-signature} over a sort-signature $(\mathcal A,\mathcal B)$ is a tuple $(\mathcal F,\mathcal P,\f{ar})$ where:
\begin{itemize*}
\item
$\mathcal F$ and $\mathcal P$ are disjoint sets of \deffont{term-} 
and \deffont{proposition-formers}.
\item $\f{ar}$ assigns to each ${\tf f\in\mathcal F}$ a
\deffont{term-former arity} $(\alpha)\tau$
and to each $\tf P\in\mathcal P$ a \deffont{proposition-former arity}
$\alpha$, where $\alpha$ and $\tau$ are in the sort-language
determined by $(\mathcal A,\mathcal B)$.

We will write $((\alpha_1,\ldots,\alpha_n))\tau$ just as $(\alpha_1,\ldots,\alpha_n)\tau$.
\end{itemize*}
\label{defn.signature}
A \deffont{signature} $\mathcal S$ is then a tuple $(\mathcal A,\mathcal B,\mathcal F,\mathcal P,\f{ar})$.
\end{defn}

\begin{nttn}
We write $\tf f:(\alpha)\tau$ for $\f{ar}(\tf f)=(\alpha)\tau$ and similarly we write $\tf P:\alpha$ for $\f{ar}(\tf P)=\alpha$. 
\end{nttn}

\begin{rmrk}
The reader familiar with higher-order logic can read $\f{ar}(f)=(\alpha)\tau$ as $f:\alpha\to\tau$ and no harm will come of it.
We do not do this because we are following a first-order logic notation---and because we want to avoid any possible confusion that $(\alpha\to\alpha)\to\alpha$ might be a sort.
It is not. 
\end{rmrk}
 
\begin{xmpl}
A signature for the $\lambda$-calculus would have one name-sort $\nu$, one base sort $\iota$, and term-formers $\tf{lam}:([\nu]\iota)\iota$, $\tf{app}:(\iota,\iota)\iota$, and $\tf{var}:(\nu)\iota$.

A proposition-former for nominal freshness $\#$ would have arity $(\nu,\iota)$, though the arity $([\nu]\iota)$ would also be possible (this would handle more of the properties of names at the level of the logic).
More on this in Subsection~\ref{subsection.freshness}.

Plenty more examples of PNL theories will follow.
\end{xmpl}

\begin{defn}
\label{defn.atoms}
For each name sort $\nu$ fix a disjoint countably infinite set of \deffont{atoms} $\atoms_\nu$ (\emph{level 1 names}).

For each $\nu$ also fix a bijective function $f_\nu$ from $\atoms_\nu$ to the integers $\mathbb Z=\{0,\text{-}1,1,\text{-}2,2,\ldots\}$ 
(that we can do this follows from our assumption that atoms are countable).

Write 
$$
\atomsdown_\nu=\{f_\nu(i)\mid i<0\}
\qquad
\atomsup_\nu=\{f_\nu(i)\mid i\geq 0\}.
$$
Finally, write 
$$
\atomsdown=\bigcup\atomsdown_\nu
\qquad
\atomsup=\bigcup\atomsup_\nu
\qquad
\mathbb A=\bigcup \mathbb A_\nu
$$ 
$a,b,c,\ldots$ will range over \emph{distinct} atoms (we call this the \deffont{permutative} convention).

A \deffont{permission set} has the form $(\atomsdown \cup A)\setminus B$ where $A\subseteq\atomsup$ and $B\subseteq\atomsdown$ are finite.
$S$, $T$, and $U$ will range over permissions sets. 
\end{defn}
The use of $\atomsdown$ and $\atomsup$ ensures that permission sets are infinite and also co-infinite (their complement with respect to $\mathbb A$ is also infinite).

\begin{rmrk}[Representing permission sets]
A permission set $S$ may be finitely represented 
\begin{itemize*}
\item
either as the pair of finite sets $(A,B)$ where $A\subseteq\atomsdown$ and $B\subseteq\atomsup$ and $S=(\atomsdown\setminus A)\cup B$, 
\item
or perhaps more elegantly as a single finite set $C\subseteq\mathbb A$ such that $S=\atomsdown\shs C$ where $X\shs Y=\{z\mid (z\in X\land z\not\in Y)\lor (z\not\in X\land z\in Y)$ (exclusive or). 
\end{itemize*}
Permission sets are a sorting/typing annotation which will be associated to variables in Definition~\ref{defn.terms.sorts}.
\end{rmrk}

\begin{frametxt}
\begin{defn}
\label{defn.permutation}
Given $a,b\in\mathbb A_\nu$ let a \deffont{(level 1) swapping} $(a\ b)$ be the bijection on atoms that maps $a$ to $b$, $b$ to $a$, and all other $c$ to themselves.

Also define a bijection $\f{shift}_\nu$ on atoms by: 
$$
\begin{array}{r@{\ }l@{\qquad}l}
\f{shift}_\nu(a)=&f_\nu(f\mone_\nu(a)+1)& (a\in\mathbb A_\nu) 
\\
\f{shift}_\nu(a)=&a & a\in\mathbb A\setminus\mathbb A_\nu
\end{array}
$$ 

Let the \deffont{(level 1) permutations} be the group of bijections on atoms generated by all swappings and $\f{shift}_\nu$.

Call a permutation $\pi$ \deffont{finite} when it is generated just by swappings; thus, when
$\f{nontriv}(\pi)=\{a{\in}\mathbb A\mid \pi(a){\neq} a\}$ is finite.
Otherwise, call $\pi$ \deffont{non-finite}.

$\pi$ will range over permutations.
Write $\mathbb P$ for the set of all permutations
and write $\mathbb P_{\text{fin}}$ for the set of all finite permutations. 
\end{defn}
\end{frametxt}

\begin{rmrk}
Swappings are used to manage $\alpha$-equivalence in nominal terms.
This is standard and goes back (at least) to \cite{gabbay:newaas-jv} and the second author's thesis \cite{gabbay:thesis}.

The true importance of $\f{shift}_\nu$ is that it bijects $\atomsdown$ with $\atomsdown\cup\{a\}$ for some $a\not\in\atomsdown$---this cannot be achieved using a \emph{finite} permutation.
The relevance of this is that later when we build $\Forall{X}\phi$, this really will mean `for all $X$' even though the permission set $S$ of $X$ makes it range over terms with free atoms in $S$. 

Permutations, like permission sets, easily admit finite representations.
$\f{shift}$ corresponds via the bijection with numbers to the operation `add 1'.
\end{rmrk}

\begin{defn}
\label{defn.terms.sorts}
For each signature $\mathcal S=(\mathcal A,\mathcal B,\mathcal F,\mathcal P,\f{ar})$ and each sort $\alpha$ over $(\mathcal A,\mathcal B)$ and permission set $S$ fix a countably infinite set of \deffont{unknowns} (\emph{level 2 names}) of that sort and permission set.
$X,Y,Z$ will range over distinct unknowns.
Write $\sort(X)$ for the sort and $\pmss(X)$ for the permission set of $X$.
\end{defn}

\begin{rmrk}
So an unknown $X$ has two type attributes: its \emph{sort} $\alpha$, which intuitively describes what kind of data it denotes, and its \emph{permission set} $S$ which describes the permitted free atoms of the terms, and also the nominal support of the denotations, with which it may be associated by a substitution or valuation---see the definitions of substitution and valuation in Definitions~\ref{defn.level.2.sub} and~\ref{defn.valuation} respectively.

If $X$ has sort `integers' and permission set $\atomsdown$, then intuitively $X$ represents `a term denoting an integer, with free atoms in $\atomsdown$'.

Another name for permission set $S$ might be \emph{freshness set}, since equally $X$ represents ``terms with free atoms \emph{not} in $\mathbb A\setminus S$''.\footnote{Via this intuition, permission sets correspond to the \emph{freshness constraints} $a\#X$ of \cite{gabbay:nomu-jv}.
For the reader familiar with freshness constraints, another way to view permission sets is as fixing a single global freshness context with `enough' freshnesses (the germ of this was already in \cite{gabbay:newcc}) where `enough' means that for any term, we can always pick an atom not free in that term.
However the implications of doing this go beyond a syntactic tweak to nominal terms; permission sets are what make it possible for us to reconcile level 2 quantification $\forall X$ with level 1 atoms-abstraction $[a]r$.} 
\end{rmrk}

\begin{defn}
\label{defn.syntax}
For each signature $\mathcal S$, define \deffont{raw terms} and \deffont{raw propositions} over $\mathcal S$ by: 
\begin{frameqn}
\begin{array}{c@{\qquad}c@{\qquad}c}
\begin{prooftree}
(a\in\mathbb A_\nu)
\justifies
a:\nu
\end{prooftree}
&
\begin{prooftree}
\rawr_1:\alpha_1 \ \ldots\ \rawr_n:\alpha_n
\justifies
(\rawr_1,\ldots,\rawr_n):(\alpha_1,\ldots,\alpha_n)
\end{prooftree}
&
\begin{prooftree}
\rawr:\alpha\quad (\f{ar}(\tf f)=(\alpha)\tau)
\justifies
\tf f(\rawr):\tau
\end{prooftree}
\\[4ex]
\begin{prooftree}
\rawr:\alpha\ (a\in\mathbb A_\nu)
\justifies
[a]\rawr:[\nu]\alpha
\end{prooftree}
&
\begin{prooftree}
(\sort(X)=\alpha)
\justifies
\pi\act X:\alpha
\end{prooftree}
\\[4ex]
\begin{prooftree}
\phantom{h}
\justifies
\bot\text{ prop.}
\end{prooftree}
&
\begin{prooftree}
\rawphi\text{ prop.}\ \ \rawpsi\text{ prop.}
\justifies
\rawphi\limp\rawpsi\text{ prop.}
\end{prooftree}
&
\begin{prooftree}
\rawr:\alpha\ \ (\f{ar}(\tf P)=\alpha)
\justifies
\tf P(\rawr)\text{ prop.}
\end{prooftree}
\\[4ex]
\begin{prooftree}
\rawphi\text{ prop.}
\justifies
\Forall{X}\rawphi\text{ prop.}
\end{prooftree}
\end{array}
\end{frameqn}
As in Definition~\ref{defn.sort.sig}, we admit the possiblity of empty tuples so that $()$ the empty tuple of terms is a term and has sort $()$. 
\end{defn}
We will quotient raw terms and propositions by $\alpha$-equivalence (to obtain terms $r$ and propositions $\phi$), later.

\begin{xmpl}
Consider $\tf{lam}([b]\tf{app}(X,\tf{var}(b)))$ where $b\not\in\pmss(X)$;
this represents the $\lambda$-term schema $\lam{y}(ty)$ where $y\not\in\f{fv}(t)$.

Recall that $\tf{app}$ and $\tf{lam}$ are term-formers of arities $(\iota,\iota)\iota$ and $([\nu]\iota)\iota$.
The sorts of $b$ and $X$ are $\nu$ (names) and $\iota$ (individuals) respectively. 
\end{xmpl}

\begin{rmrk}
\label{rmrk.no.exists}
Our version of PNL has connectives $\bot$, $\limp$, and $\forall$.
We could easily add other connectives like $\top$, $\neg$, $\land$, $\lor$, and $\exists$.
Instead we treat them as a definable extension using the standard `de Morgan' encoding. 
\end{rmrk}
We may write $\id\act X$ just as $X$.

\subsection{Permutation actions and free atoms/unknowns}

Nominal techniques suggest handling $\alpha$-renaming using permutations.
To a first approximation, if wherever the reader sees `permutation action' they substitute `$\alpha$-renaming', then they will not go too far wrong.

\begin{nttn}
\label{nttn.permutations}
We use the following notation:
\begin{itemize*}
\item
Write $\pi\circ\pi'$ for \deffont{functional composition}, so $(\pi\circ\pi')(a)=\pi(\pi'(a))$).
\item
Write $\id$ for the \deffont{identity permutation}, so $\id(a)=a$ always. 
\item
Write $\pi\mone$ for \deffont{inverse}, so $\pi\circ\pi\mone=\id$.
\item
Define $\pi^n$ by
$\pi^0=\id$
\ and\ 
$\pi^{n+1}=\pi^n\circ\pi$.
\end{itemize*}
\end{nttn}

\maketab{tab2}{R{7em}@{\ }L{6em}@{\ }R{8em}@{\ }L{10em}}
\begin{defn}
\label{defn.permutation.action}
Define a (level 1) \deffont{permutation action} on syntax by:
\begin{tab2}
\pi\act a\equiv& \pi(a)
&
\pi\act (\rawr_1,\ldots,\rawr_n) \equiv&  (\pi\act \rawr_1,\ldots,\pi\act \rawr_n)
\\
\pi\act [a]\rawr \equiv&  [\pi(a)]\pi\act \rawr
&
\pi\act(\pi'\act X) \equiv&  (\pi{\circ}\pi')\act X
\\
\pi\act \tf f(\rawr) \equiv&  \tf f(\pi\act \rawr)
\\
\pi\act\bot \equiv&  \bot
&
\pi\act (\rawphi\limp\rawpsi)\equiv&  (\pi\act \rawphi)\limp(\pi\act \rawpsi)
\\
\pi\act \tf P(\rawr)\equiv&  \tf P(\pi\act \rawr)
&
\pi\act (\Forall{X}\rawphi) \equiv&  \Forall{X}\pi\act\rawphi
\end{tab2}
\end{defn}

\begin{defn}
\label{defn.permutation.action.2}
Let $\Pi$ range over sort- and permission-set-preserving bijections on unknowns 
(so $\sort(\Pi(X)){=}\sort(X)$ and $\pmss(\Pi(X)){=}\pmss(X)$)
such that $\{X\mid \Pi(X)\neq X\}$ is finite.

Write $\Pi\circ\Pi'$ for functional composition,\ $\Id$ for the identity permutation, and $\Pi\mone$ for inverse, much as in Notation~\ref{nttn.permutations}.

Define a (level 2) \deffont{permutation action} by:
\begin{tab2}
\Pi\act a\equiv&  a
&
\Pi\act (\rawr_1,\ldots,\rawr_n) \equiv&  (\Pi\act \rawr_1,\ldots,\Pi\act \rawr_n)
\\
\Pi\act [a]\rawr \equiv&  [a]\Pi\act \rawr
&
\Pi\act(\pi\act X) \equiv&  \pi\act(\Pi(X))
\\
\Pi\act \tf f(\rawr) \equiv&  \tf f(\Pi\act \rawr)
\\
\Pi\act\bot \equiv&  \bot
&
\Pi\act (\rawphi\limp\rawpsi)\equiv&  (\Pi\act \rawphi)\limp(\Pi\act \rawpsi)
\\
\Pi\act \tf P(\rawr)\equiv&  \tf P(\Pi\act \rawr)
&
\Pi\act (\Forall{X}\rawphi) \equiv&  \Forall{\Pi(X)}\Pi\act\rawphi
\end{tab2}
\end{defn}

\subsection{Free level 1 and level 2 variables}

\begin{defn}
\label{defn.pointwise}
Suppose $A$ is a set of atoms and $\pi$ is a level 1 permutation.
Suppose $U$ is a set of unknowns and $\Pi$ is a level 2 permutation.
Define $\pi\act A$ and $\Pi\act U$ by
$$
\pi\act A = \{\pi(a)\mid a\in A\} \qquad\text{and}\qquad
\Pi\act U = \{\Pi(X)\mid X\in U\}.
$$
This is the standard \deffont{pointwise} permutation action on sets.
\end{defn}

\begin{defn}
\label{defn.fa}
Define \deffont{free atoms} $\f{fa}(\rawr)$ and $\f{fa}(\rawphi)$ by:
\begin{tab2}
\f{fa}(\pi\act X)=& \pi\act\pmss(X) 
&
\f{fa}([a]\rawr)=& \f{fa}(\rawr)\setminus\{a\}
\\
\f{fa}(\tf f(\rawr)) =&  \f{fa}(\rawr)
&
\f{fa}((\rawr_1,\ldots,\rawr_n)) =& 
\bigcup\f{fa}(\rawr_i)
\\
\f{fa}(a)=& \{a\}
&&
\\[1.5ex]
\f{fa}(\bot) =& \varnothing
&
\f{fa}(\rawphi\limp\rawpsi)=& \f{fa}(\rawphi)\cup \f{fa}(\rawpsi)
\\
\f{fa}(\tf P(\rawr)) =&  \f{fa}(\rawr)
&
\f{fa}(\Forall{X}\rawphi)=& \f{fa}(\rawphi) 
\end{tab2}
Define \deffont{free unknowns} $\f{fV}(r)$ and $\f{fV}(\rawphi)$ by:
\begin{tab2}
\f{fV}(a)=& \varnothing
&
\f{fV}(\pi\act X)=& \{X\}
\\
\f{fV}([a]\rawr)=& \f{fV}(\rawr)
&
\f{fV}((\rawr_1,\ldots,\rawr_n)) =&  
\bigcup\f{fV}(\rawr_i)
\\
\f{fV}(\tf f(\rawr)) =&  \f{fV}(\rawr)
&&
\\[1.5ex]
\f{fV}(\bot) =& \varnothing
&
\f{fV}(\rawphi\limp\rawpsi)=& \f{fV}(\rawphi)\cup \f{fV}(\rawpsi)
\\
\f{fV}(\tf P(\rawr)) =&  \f{fV}(\rawr)
&
\f{fV}(\Forall{X}\rawphi)=& \f{fV}(\rawphi)\setminus\{X\} 
\end{tab2}
\end{defn}

\begin{lemm}
\label{lemm.fa.pi.r}
$\f{fa}(\pi\act \rawr)=\pi\act \f{fa}(\rawr)$ and $\f{fa}(\pi\act\rawphi)=\pi\act\f{fa}(\rawphi)$.

Also,
$\f{fV}(\Pi\act \rawr)=\Pi\act \f{fV}(\rawr)$ and $\f{fV}(\Pi\act\rawphi)=\Pi\act\f{fV}(\rawphi)$.
\end{lemm}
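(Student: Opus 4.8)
The plan is to prove all four equations by structural induction, grouping them into two simultaneous inductions: one over raw terms and raw propositions for the two $\f{fa}$-equations, and a parallel one for the two $\f{fV}$-equations. The simultaneity (terms and propositions handled together) is forced, because propositions contain terms through the $\tf P(\rawr)$ clause, so the propositional cases will invoke the term statement on their subterms. First I would record two elementary facts about the pointwise action of Definition~\ref{defn.pointwise}: that it is a group action, so $\pi\act(\pi'\act A)=(\pi\circ\pi')\act A$, and that because each $\pi$ is a bijection it commutes with the Boolean operations, in particular $\pi\act(A\setminus\{a\})=(\pi\act A)\setminus\{\pi(a)\}$ and $\pi\act\bigcup_i A_i=\bigcup_i(\pi\act A_i)$.

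With these in hand, each case of the $\f{fa}$ statement is a direct computation that unfolds the definition of $\pi\act(-)$ (Definition~\ref{defn.permutation.action}) and of $\f{fa}$ (Definition~\ref{defn.fa}). For the unknown case, $\f{fa}(\pi\act(\pi'\act X))=\f{fa}((\pi\circ\pi')\act X)=(\pi\circ\pi')\act\pmss(X)=\pi\act(\pi'\act\pmss(X))=\pi\act\f{fa}(\pi'\act X)$; the atom case uses $\f{fa}(\pi(a))=\{\pi(a)\}=\pi\act\{a\}$; and the function-former, tuple, and all the propositional cases reduce immediately to the inductive hypotheses via the union fact, with the $\Forall{X}\rawphi$ case additionally noting that $\f{fa}$ simply ignores the bound unknown.

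The one case that carries genuine content is the abstraction $[a]\rawr$. Here $\f{fa}(\pi\act [a]\rawr)=\f{fa}([\pi(a)](\pi\act\rawr))=\f{fa}(\pi\act\rawr)\setminus\{\pi(a)\}$, which by the inductive hypothesis equals $(\pi\act\f{fa}(\rawr))\setminus\{\pi(a)\}$, and this agrees with $\pi\act(\f{fa}(\rawr)\setminus\{a\})=\pi\act\f{fa}([a]\rawr)$ precisely because $\pi$ is a bijection, so removing $a$ before acting matches removing $\pi(a)$ after acting. This is the step I expect to be the crux: it is the only place where injectivity is actually used, and the only clause where the argument would fail for a non-injective map.

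The $\f{fV}$ statements follow the same template, with the level 2 action $\Pi$ of Definition~\ref{defn.permutation.action.2} in place of $\pi$ and $\f{fV}$ in place of $\f{fa}$, and they are if anything easier since $\Pi$ fixes all atoms. The role played above by the abstraction clause is now played by the quantifier clause: from $\Pi\act\Forall{X}\rawphi=\Forall{\Pi(X)}(\Pi\act\rawphi)$ we get $\f{fV}(\Pi\act\Forall{X}\rawphi)=\f{fV}(\Pi\act\rawphi)\setminus\{\Pi(X)\}=(\Pi\act\f{fV}(\rawphi))\setminus\{\Pi(X)\}=\Pi\act(\f{fV}(\rawphi)\setminus\{X\})$, again using bijectivity of $\Pi$ to move the removed unknown across the action. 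The remaining clauses, in particular $\f{fV}(\Pi\act(\pi\act X))=\f{fV}(\pi\act\Pi(X))=\{\Pi(X)\}=\Pi\act\{X\}$, are immediate from the defining equations.
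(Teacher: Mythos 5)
Your proposal is correct and is exactly the ``routine induction on $\rawr$ (and $\rawphi$)'' that the paper invokes without writing out: unfold Definitions~\ref{defn.permutation.action}, \ref{defn.permutation.action.2} and \ref{defn.fa} clause by clause, using that the pointwise action is a group action commuting with unions and, via injectivity, with removal of a single element. Your identification of the abstraction clause (and its level-2 analogue, the quantifier clause) as the only places where bijectivity is genuinely used is accurate.
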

\begin{proof}
By routine inductions on $\rawr$.
\end{proof}

\subsection{$\alpha$-equivalence}

\begin{defn}
Call an equivalence relation $\somerel$ on terms and on propositions a \deffont{congruence} when it is closed under the following rules:
$$
\begin{array}{c@{\qquad\quad}c}
\begin{prooftree}
\rawr_i\somerel \raws_i\quad 1\leq i\leq n
\justifies
(\rawr_1,\ldots,\rawr_n)\somerel (\raws_1,\ldots,\raws_n)
\end{prooftree}
&
\begin{prooftree}
\rawr\somerel \raws\ \ (\tf f:(\alpha)\tau,\ \rawr,\raws:\alpha)
\justifies
\tf f(\rawr)\somerel\tf f(\raws)
\end{prooftree}
\\[3ex]
\begin{prooftree}
\rawr\somerel \raws
\justifies
[a]\rawr\somerel [a]\raws
\end{prooftree}
&
\begin{prooftree}
\rawphi\somerel\rawphi'\quad \rawpsi\somerel\rawpsi'
\justifies
\rawphi\limp\rawpsi\somerel \rawphi'\limp\rawpsi'
\end{prooftree}
\\[3ex]
\begin{prooftree}
\rawr\somerel \raws\quad (\tf P:\alpha,\ \rawr,\raws:\alpha)
\justifies
\tf P(\rawr)\somerel \tf P(\raws)
\end{prooftree}
&
\begin{prooftree}
\rawphi\somerel \rawphi'
\justifies
\Forall{X}\rawphi\somerel \Forall{X}\rawphi'
\end{prooftree}
\end{array}
$$
\end{defn}

\begin{defn}
\label{defn.aeq}
Write $(a\ b)$ for the \deffont{(level 1) swapping} permutation which maps $a$ to $b$, $b$ to $a$, and all other $c$ to themselves.
Similarly write $(X\ Y)$ for the \deffont{(level 2) swapping}.

Define \deffont{$\alpha$-equivalence} on terms and propositions to be the least congruence $\aeq$ such that:
\begin{frameqn}
\begin{array}{c@{\qquad}c}
\begin{prooftree}
(b\ a)\act \rawr\aeq \raws \quad (b\not\in\f{fa}(\rawr))
\justifies
[a]\rawr\aeq [b]\raws
\end{prooftree}
&
\begin{prooftree}
(\pi(a)=\pi'(a)\ \text{ for all }\ a{\in}\pmss(X))
\justifies
\pi\act X\aeq \pi'\act X
\end{prooftree}
\\[4ex]
\begin{prooftree}
(Y\ X)\act \rawphi\aeq \rawpsi \quad (Y\not\in\f{fV}(\rawphi))
\justifies
\Forall{X}\rawphi\aeq\Forall{Y}\rawpsi
\end{prooftree}
\end{array}
\end{frameqn}
\end{defn}

\begin{xmpl}
We $\alpha$-convert $X$ and $a$ in $\Forall{X}\tf P([a]X)$.

Let $\sort(Y)=\sort(X)$ and $\pmss(Y)=\pmss(X)=\atomsdown$.
Suppose $a\in\atomsdown$ and $b\not\in\atomsdown$.
Using $(a\ b)$ and $(X\ Y)$ we deduce:
$$
\begin{array}{r@{\quad}c@{\quad}l}
\Forall{X}\tf P([a]X) &\stackrel{(a\ b)}{\aeq}& \Forall{X}\tf P([b](b\ a)\act X) 
\\
&\stackrel{(X\ Y)}{\aeq}& \Forall{Y}\tf P([b](b\ a)\act Y) . 
\end{array}
$$  
It is routine to convert this sketch into a full derivation-tree.

Furthermore, if we take syntax as above except that $a\not\in\atomsdown$ and $b\not\in\atomsdown$, then we deduce $\Forall{X}\tf P([a]X)\aeq \Forall{Y}\tf P([b]Y)$.
\end{xmpl}

\begin{rmrk}
Note that $\alpha$-equivalence is
highly symmetric between levels 1 and 2,\ based on permutations instead of substitutions, and avoids equality reasoning in the logic.

In \cite{gabbay:newaas-jv,pitts:nomlfo-jv,gabbay:frelog,gabbay:seqcnl,cheney:simptn} it is not in general possible to `just $\alpha$-convert' a level 1 abstraction.
We must appeal instead to equality reasoning describing atoms-abstraction in nominal sets.
But this is harder; derivable equality is more complex than syntactic equivalence. 
\end{rmrk}

\begin{lemm}
\label{lemm.aeq.equivar}
For every $\pi$, $\Pi$, $\rawr$, $\raws$, $\rawphi$, and $\rawpsi$, the following hold:
\begin{itemize*}
\item
$\rawr\aeq \raws$ if and only if $\pi\act \rawr\aeq \pi\act \raws$ and similarly $\rawphi\aeq\rawpsi$ if and only if $\pi\act\rawphi\aeq\pi\act\rawpsi$.
\item
$\rawr\aeq \raws$ if and only if $\Pi\act \rawr\aeq\Pi\act \raws$, and similarly $\rawphi\aeq\rawpsi$ if and only if $\Pi\act\rawphi\aeq\Pi\act\rawpsi$.
\end{itemize*}
\end{lemm}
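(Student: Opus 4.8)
The plan is to prove, for each biconditional, only the left-to-right implication for an arbitrary permutation; the converse then follows by instantiating that implication with $\pi\mone$ (resp.\ $\Pi\mone$) and using the composition law $\pi\mone\act(\pi\act\rawr)=\rawr$. So it suffices to show, for every fixed $\pi$, that $\rawr\aeq\raws$ implies $\pi\act\rawr\aeq\pi\act\raws$ (and the analogue on propositions), and likewise for $\Pi$.

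For the level 1 claim, fix $\pi$ and define a relation $\somerel$ by $\rawr\somerel\raws$ iff $\pi\act\rawr\aeq\pi\act\raws$, and correspondingly on propositions. Since $\aeq$ is an equivalence and the permutation action is a bijection on syntax, $\somerel$ is again an equivalence relation, and using the congruence rules for $\aeq$ together with the clauses of Definition~\ref{defn.permutation.action} (e.g.\ $\pi\act\tf f(\rawr)=\tf f(\pi\act\rawr)$) one checks immediately that $\somerel$ is a congruence. Because $\aeq$ is by definition the \emph{least} congruence closed under the generating rules of Definition~\ref{defn.aeq}, it then suffices to verify that $\somerel$ is closed under those three generating rules; this gives $\aeq\,\subseteq\,\somerel$, which is exactly the desired implication.

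Checking the three generating clauses is where the real work lies, and the atoms-abstraction rule is the crux. Suppose $(b\ a)\act\rawr\aeq\raws$ with $b\notin\f{fa}(\rawr)$, so that $[a]\rawr\aeq[b]\raws$; I must show $[\pi(a)]\pi\act\rawr\aeq[\pi(b)]\pi\act\raws$. The side-condition transports by Lemma~\ref{lemm.fa.pi.r}: since $\f{fa}(\pi\act\rawr)=\pi\act\f{fa}(\rawr)$ and $\pi$ is injective, $b\notin\f{fa}(\rawr)$ gives $\pi(b)\notin\f{fa}(\pi\act\rawr)$. For the premise I use the conjugation identity $(\pi(b)\ \pi(a))\circ\pi=\pi\circ(b\ a)$ for swappings, which yields $(\pi(b)\ \pi(a))\act(\pi\act\rawr)=\pi\act((b\ a)\act\rawr)$; applying the hypothesis attached to the premise $(b\ a)\act\rawr\aeq\raws$ gives $\pi\act((b\ a)\act\rawr)\aeq\pi\act\raws$, and the abstraction rule for $\aeq$ then delivers the conclusion. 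The unknown rule $\pi'\act X\aeq\pi''\act X$ is easy: composing with $\pi$ on the left preserves the hypothesis that $\pi'$ and $\pi''$ agree on $\pmss(X)$. The $\forall$ rule reduces to the abstraction case modulo the observation that level 1 and level 2 actions commute and that $\f{fV}(\pi\act\rawphi)=\f{fV}(\rawphi)$, both routine inductions.

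The level 2 claim is entirely symmetric, replacing $\pi$ by $\Pi$ and exchanging the roles of the two naming levels. The only points that require the specific hypotheses on $\Pi$ are: in the abstraction rule, $\Pi$ fixes atoms, so $\f{fa}(\Pi\act\rawr)=\f{fa}(\rawr)$ and the freshness side-condition is preserved verbatim; in the unknown rule, $\pmss(\Pi(X))=\pmss(X)$ because $\Pi$ preserves permission sets, so agreement of the two permutations on $\pmss(X)$ is exactly agreement on $\pmss(\Pi(X))$; and in the $\forall$ rule I use the level 2 conjugation identity $(\Pi(Y)\ \Pi(X))\circ\Pi=\Pi\circ(Y\ X)$ together with the equivariance $\f{fV}(\Pi\act\rawphi)=\Pi\act\f{fV}(\rawphi)$ from Lemma~\ref{lemm.fa.pi.r}. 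I expect the abstraction case, specifically making the conjugation identity and the freshness side-condition line up, to be the only genuinely delicate step; everything else is bookkeeping.
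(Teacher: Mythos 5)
The paper gives no proof of this lemma: it is stated bare and only later invoked, in the proof of Proposition~\ref{prop.aeq.equivalence}, as part of a ``standard argument''. Your proof is correct and is precisely that standard argument: reduce each biconditional to one implication via $\pi\mone$ (resp.\ $\Pi\mone$) and the composition law, then show by rule induction that the relation defined by $\rawr\somerel\raws$ precisely when $\pi\act\rawr\aeq\pi\act\raws$ is closed under every generating clause of Definition~\ref{defn.aeq}, the crux being the abstraction rule, which you handle correctly with the conjugation identity $(\pi(b)\ \pi(a))\circ\pi=\pi\circ(b\ a)$ and Lemma~\ref{lemm.fa.pi.r} to transport the freshness side-condition; the level~2 half correctly isolates where $\Pi$ fixing atoms and preserving permission sets is used. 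One caution: you justify that $\somerel$ is an equivalence relation by appealing to ``$\aeq$ is an equivalence'', but that is Proposition~\ref{prop.aeq.equivalence}, which appears \emph{after} this lemma and cites it in its proof; if Definition~\ref{defn.aeq} is read as an inductive definition by rules (as the presence of Proposition~\ref{prop.aeq.equivalence} suggests it should be), that appeal is circular. The fix costs nothing: for the rule induction you never need $\somerel$ to be an equivalence relation, only that it is closed under the congruence rules and the three generating clauses, which is exactly what you verify. With that sentence deleted, the proof stands.
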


\begin{lemm}
\label{lemm.fa.aeq}
If $\rawr\aeq \raws$ then $\f{fa}(\rawr)=\f{fa}(\raws)$ and $\f{fV}(\rawr)=\f{fV}(\raws)$.
\end{lemm}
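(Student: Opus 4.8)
The plan is to prove both claims simultaneously by rule induction on the derivation of $\rawr \aeq \raws$ (and $\rawphi \aeq \rawpsi$), since $\aeq$ is defined as the least congruence closed under the rules of Definition~\ref{defn.aeq}. The statement asserts that $\f{fa}$ and $\f{fV}$ are both invariants of $\aeq$, so the natural strategy is to check that each generating rule preserves the two quantities, using the inductive hypotheses on the premises. Because $\aeq$ is an equivalence relation, I would first note that reflexivity, symmetry, and transitivity are trivially handled (the invariants are equalities, which are themselves reflexive, symmetric, and transitive), so the real content lies in the congruence rules and the three $\alpha$-rules.

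For the straightforward congruence rules (tupling, $\tf f(\cdot)$, $\tf P(\cdot)$, $\limp$), I would simply apply the defining clauses of $\f{fa}$ and $\f{fV}$ from Definition~\ref{defn.fa} together with the inductive hypotheses: since $\f{fa}$ and $\f{fV}$ of a compound term are built from those of the immediate subterms by union (or are unchanged), equality of the subterm values propagates upward. The two binding rules require a little more care but are still routine. For the level 1 abstraction rule, from $(b\ a)\act\rawr \aeq \raws$ with $b\notin\f{fa}(\rawr)$ I would use Lemma~\ref{lemm.fa.pi.r} to compute $\f{fa}(\raws)=\f{fa}((b\ a)\act\rawr)=(b\ a)\act\f{fa}(\rawr)$, and then check that removing the bound atom gives $\f{fa}([b]\raws)=((b\ a)\act\f{fa}(\rawr))\setminus\{b\}=\f{fa}(\rawr)\setminus\{a\}=\f{fa}([a]\rawr)$, the middle step following because $b\notin\f{fa}(\rawr)$ means the swapping $(b\ a)$ only relocates $a$ to $b$ within $\f{fa}(\rawr)$. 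The $\f{fV}$ claim here is immediate since abstraction does not affect free unknowns. Symmetrically, for the level 2 rule $\Forall{X}\rawphi \aeq \Forall{Y}\rawpsi$ I would apply the level 2 part of Lemma~\ref{lemm.fa.pi.r} to handle $\f{fV}$ (removing the bound unknown) and observe that $\f{fa}$ is unaffected by level 2 quantification.

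The one case that genuinely needs attention is the suspended-permutation rule $\pi\act X \aeq \pi'\act X$ under the side-condition that $\pi(a)=\pi'(a)$ for all $a\in\pmss(X)$. Here I would compute directly from Definition~\ref{defn.fa}: $\f{fa}(\pi\act X)=\pi\act\pmss(X)=\{\pi(a)\mid a\in\pmss(X)\}$ and likewise $\f{fa}(\pi'\act X)=\{\pi'(a)\mid a\in\pmss(X)\}$, and these two sets are equal precisely because $\pi$ and $\pi'$ agree pointwise on $\pmss(X)$; the $\f{fV}$ values are both $\{X\}$ and hence trivially equal. I expect this to be the main (and only) real obstacle, since it is the sole rule where the two related terms are not built congruently from $\aeq$-related pieces, and it is exactly where the side-condition of the rule does the work — but the computation is short once the definitions are unfolded.
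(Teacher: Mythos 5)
Your proof is correct; the paper states this lemma without proof, treating it as routine, and your rule induction over the generating rules of $\aeq$ is exactly the expected argument, with the right computations in the two binding cases and in the suspended-permutation case (which is indeed the only place the side-condition does real work). The one detail you gloss over is that the two abstraction cases also need the cross-level facts $\f{fV}(\pi\act\rawr)=\f{fV}(\rawr)$ and $\f{fa}(\Pi\act\rawphi)=\f{fa}(\rawphi)$ (the latter because $\Pi$ is permission-set-preserving, so $\f{fa}(\Pi\act(\pi\act X))=\pi\act\pmss(\Pi(X))=\pi\act\pmss(X)$); both are immediate inductions in the style of Lemma~\ref{lemm.fa.pi.r}.
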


\begin{prop}
\label{prop.aeq.equivalence}
$\aeq$ is an equivalence relation on terms and propositions.
\end{prop}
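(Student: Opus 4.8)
The plan is to take the definition of $\aeq$ at face value. Since a \emph{congruence} is by definition already an equivalence relation (one that is additionally closed under the structural rules), the genuine content of the Proposition is not that $\aeq$ has the three equivalence properties per se, but that the phrase ``the least congruence such that the three $\alpha$-rules of Definition~\ref{defn.aeq} hold'' denotes a well-defined object at all. So I would reduce the statement to a well-definedness argument in the lattice of congruences: exhibit the family of congruences satisfying the $\alpha$-rules, show it is nonempty and closed under intersection, and conclude that its intersection is the least element and is a congruence — hence an equivalence relation.

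First I would check nonemptiness: the relation relating every term to every term and every proposition to every proposition is an equivalence relation, is trivially closed under the structural rules, and satisfies the three $\alpha$-rules vacuously, since their conclusions always hold. Next I would show the family is closed under arbitrary intersection. The two key observations are: (i) an intersection of equivalence relations is again an equivalence relation; and (ii) every structural rule and every $\alpha$-rule is \emph{monotone} (positive Horn) — each asserts that if certain pairs are related then a further pair is related, with no negative occurrence of the relation — and an intersection of relations each closed under a monotone rule is again closed under it. Crucially, the side-conditions $b\notin\f{fa}(\rawr)$, agreement of $\pi$ and $\pi'$ on $\pmss(X)$, and $Y\notin\f{fV}(\rawphi)$ do not mention the relation, so intersection leaves them untouched. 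Therefore $\aeq$, as the intersection of this nonempty family, is itself a congruence satisfying all three rules, and by minimality it is the least such; being a congruence, it is in particular an equivalence relation.

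The only delicate point in clause (ii) is the abstraction rule, whose premise relates the \emph{swapped} term $(b\ a)\act\rawr$ rather than $\rawr$ itself; for the intersection argument this is harmless, but it is exactly where a more hands-on proof — verifying reflexivity, symmetry, and transitivity directly by induction on syntax — would become the real work. In that alternative style reflexivity and symmetry go through routinely (symmetry using the equivariance Lemma~\ref{lemm.aeq.equivar} to turn $(b\ a)\act\rawr\aeq\raws$ into the reversed statement), whereas \emph{transitivity is the hard case}: composing $[a]\rawr\aeq[b]\raws$ with $[b]\raws\aeq[c]\rawr'$ requires tracking composed swappings and invoking Lemma~\ref{lemm.fa.aeq} to keep the freshness side-conditions aligned across the two derivations. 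I would therefore record the intersection argument as the primary proof and relegate the inductive verification to a remark, noting that the supporting lemmas feed into the latter route should one prefer it.
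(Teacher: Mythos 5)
Your primary argument is correct but takes a genuinely different route from the paper. The paper's proof is the ``standard argument as in \cite{gabbay:nomr-jv}'': a direct inductive verification of reflexivity, symmetry, and transitivity for the relation generated by the displayed rules, using Lemmas~\ref{lemm.fa.pi.r}, \ref{lemm.aeq.equivar}, and~\ref{lemm.fa.aeq} --- that is, precisely the ``hands-on'' route you relegate to a remark, with transitivity as the hard case for exactly the reason you identify (composing abstraction derivations while keeping the freshness side-conditions aligned via Lemma~\ref{lemm.fa.aeq}). Your lattice-of-congruences argument is valid as far as it goes: since the paper defines a congruence to be an equivalence relation closed under the structural rules, and since the family of congruences satisfying the three $\alpha$-rules of Definition~\ref{defn.aeq} is nonempty (the total relation) and closed under intersection (all rules being positive Horn with relation-free side-conditions), the least such congruence exists and is an equivalence relation by fiat. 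What this buys is a clean well-definedness proof with no induction on syntax; what it loses is the substantive fact the paper's route delivers and that the rest of the development implicitly relies on, namely that symmetry and transitivity are \emph{admissible} --- the relation inductively generated by the displayed rules alone, read forwards as in the derivation-tree sketches following Definition~\ref{defn.aeq}, already coincides with the least congruence. Your argument shows the least congruence is an equivalence relation without showing it is reachable by such forward derivations, so if one adopts it as the primary proof one should still carry out (or cite) the inductive verification to justify reasoning about $\aeq$ by derivation trees. Since you explicitly sketch that verification and locate its difficulty correctly, the proposal as a whole covers the paper's proof; I would only suggest promoting the inductive argument from remark to main proof, or at least recording the admissibility statement as a separate lemma.
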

\begin{proof}
By a standard argument as in \cite{gabbay:nomr-jv}, using 
Lemmas~\ref{lemm.fa.pi.r},\ \ref{lemm.aeq.equivar},\ and~\ref{lemm.fa.aeq}.
\end{proof}

\begin{lemm}
\label{lemm.fa.fix}
If $\pi(a)=a$ for every $a\in\f{fa}(\rawr)$
then $\pi\act \rawr\aeq \rawr$. 
\end{lemm}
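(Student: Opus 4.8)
The plan is to prove the statement by structural induction on $\rawr$, carried out simultaneously with the analogous statement for propositions $\rawphi$ (since propositions are built from terms), and to prove the version that is universally quantified over $\pi$: for every $\pi$, if $\pi(a)=a$ for all $a\in\f{fa}(\rawr)$ then $\pi\act\rawr\aeq\rawr$. Generalising over $\pi$ matters, because the abstraction case will invoke the induction hypothesis at a different permutation. The base case $\rawr=a$ is immediate: $\f{fa}(a)=\{a\}$ forces $\pi(a)=a$, so $\pi\act a=a$ and we finish by reflexivity. The cases of tuples $(\rawr_1,\ldots,\rawr_n)$, term-formers $\tf f(\rawr)$, and the propositional constructs $\bot$, $\rawphi\limp\rawpsi$, $\tf P(\rawr)$ and $\Forall{X}\rawphi$ are all routine: the free atoms of the compound are the union of (or equal to) those of the immediate subterms, so $\pi$ fixes the free atoms of each subterm, the induction hypothesis applies to each, and we close under the corresponding congruence rule for $\aeq$. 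The moderated-unknown case $\rawr=\pi'\act X$ does not even use the induction hypothesis: here $\f{fa}(\pi'\act X)=\pi'\act\pmss(X)$ and $\pi\act(\pi'\act X)=(\pi\circ\pi')\act X$, so by the unknown clause in the definition of $\aeq$ it suffices to check $(\pi\circ\pi')(a)=\pi'(a)$ for all $a\in\pmss(X)$, which is precisely the hypothesis that $\pi$ fixes every atom of $\pi'\act\pmss(X)$.

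The abstraction case $\rawr=[a]\rawr'$ is where the content lies. Here $\f{fa}([a]\rawr')=\f{fa}(\rawr')\setminus\{a\}$, so $\pi$ fixes every free atom of $\rawr'$ except possibly $a$ itself, and $\pi\act([a]\rawr')=[\pi(a)]\,\pi\act\rawr'$. Write $b=\pi(a)$. If $b=a$, then $\pi$ in fact fixes all of $\f{fa}(\rawr')$, the induction hypothesis gives $\pi\act\rawr'\aeq\rawr'$, and the abstraction congruence rule yields $[a]\,\pi\act\rawr'\aeq[a]\rawr'$, as required. If $b\neq a$, I first observe that $b\notin\f{fa}(\rawr')$: any $c\in\f{fa}(\rawr')$ with $c\neq a$ is fixed by $\pi$, so were $b=\pi(a)$ such a $c$, injectivity of $\pi$ would force $a=b$. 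Since $b$ is then fresh for $\rawr'$, the bound-atom rule of $\aeq$ reduces the goal $[a]\rawr'\aeq[b]\,\pi\act\rawr'$ to proving $(b\ a)\act\rawr'\aeq\pi\act\rawr'$. By equivariance of $\aeq$ under $(b\ a)$ (Lemma~\ref{lemm.aeq.equivar}), together with composition of the permutation action, this is equivalent to $\rawr'\aeq((b\ a)\circ\pi)\act\rawr'$, which follows from the induction hypothesis applied to $\rawr'$ once we check that $(b\ a)\circ\pi$ fixes every atom of $\f{fa}(\rawr')$: for $c=a$ we have $(b\ a)(\pi(a))=(b\ a)(b)=a$, and for $c\in\f{fa}(\rawr')$ with $c\neq a$ we have $\pi(c)=c$ and $c\neq b$ (since $b\notin\f{fa}(\rawr')$), hence $(b\ a)(c)=c$.

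I expect the only delicate point to be this abstraction case, and within it the need to have generalised the induction hypothesis over $\pi$ so that it can be applied to the composite $(b\ a)\circ\pi$ rather than to $\pi$ itself. The two supporting facts that make the renaming rule fire are the freshness observation $b\notin\f{fa}(\rawr')$, which is forced by injectivity of $\pi$, and equivariance of $\aeq$ (Lemma~\ref{lemm.aeq.equivar}); everything else is bookkeeping with the congruence rules.
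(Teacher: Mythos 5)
Your proof is correct. The paper states Lemma~\ref{lemm.fa.fix} without any proof, treating it as routine; your induction---generalised over $\pi$, with the moderated-unknown case discharged directly by the $\pi\act X\aeq\pi'\act X$ rule and the abstraction case handled via the observation that $\pi(a)\notin\f{fa}(\rawr')$ (by injectivity) together with equivariance (Lemma~\ref{lemm.aeq.equivar}) and the induction hypothesis at $(b\ a)\circ\pi$---is exactly the standard argument the authors leave implicit, and everything you invoke (including symmetry of $\aeq$ from Proposition~\ref{prop.aeq.equivalence}) is already available at this point in the development.
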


\begin{frametxt}
\begin{defn}
\label{defn.terms.and.propositions}
For each signature $\mathcal S$, define \deffont{terms} and \deffont{propositions} over $\mathcal S$ to be raw terms and propositions quotiented by $\alpha$-equivalence.

$r$ and $s$ will range over terms.
$\phi$ and $\psi$ will range over propositions.
\end{defn}
\end{frametxt}

\begin{rmrk}
Terms and propositions inherit the definitions and properties of raw terms and propositions.
Thus for example we may write $\f{fa}(r)$ to mean `$\f{fa}(\rawr)$ for some $\rawr\in r$' (Lemma~\ref{lemm.fa.aeq} proves this is well-defined).

It is possible to construct terms and propositions directly using a variant of nominal syntax-with-binding from \cite{gabbay:newaas-jv}, tweaked to include permutation and abstraction by unknowns.

It is also possible to retain the definitions above---reasoning on $\alpha$-equivalence classes of terms---and to use theorems of \emph{abstractive functions} developed in \cite{gabbay:genmn}, which are an alternative `nominal' way to guarantee well-definedness of functions defined on $\alpha$-equivalence classes.

We will not dwell on these issues in this paper because we are most interested in what PNL syntax can express rather than thinking about the syntax for its own sake.
However, the mathematics to do this exists and is well-understood.
\end{rmrk}

\subsection{Substitution}

\begin{frametxt}
\begin{defn}
\label{defn.level.2.sub}
A (level 2) \deffont{substitution} $\theta$ is a function from unknowns to terms such that:
\begin{itemize*}
\item
For all $X$, $\theta(X):\sort(X)$ and $\f{fa}(\theta(X))\subseteq \pmss(X)$.
\item
$\theta(X)\equiv \id\act X$ for all but finitely many $X$.
\end{itemize*}
$\theta$ will range over substitutions.
\end{defn}
\end{frametxt}

One kind of substitution will be particularly useful later:
\begin{defn}
\label{defn.atomic.sub}
Suppose $X$ is an unknown and suppose $t:\sort(X)$ and $\f{fa}(t)\subseteq\pmss(X)$.
Define $[X\ssm t]$ by:
\begin{frameqn}
\begin{array}{l@{\ }l@{\qquad}l}
{[X\ssm t]}(X)=&t
\\
{[X\ssm t]}(Y)=&\id\act Y &\text{all other }Y
\end{array}
\end{frameqn}
\end{defn}
 
By convention (Definition~\ref{defn.terms.sorts}) $X$ and $Y$ in Definition~\ref{defn.nontriv.theta} range over \emph{distinct} unknowns: 
\begin{defn}
\label{defn.nontriv.theta}
Define $\f{nontriv}(\theta)$ by:
$$
\f{nontriv}(\theta)\equiv 
\{X\mid \theta(X){\not\equiv} \id\act X \text{ or } X{\in}\f{fV}(\theta(Y))\text{ for some }Y\} 
$$
\end{defn}
$\f{nontriv}(\theta)$ is unknowns that can be produced or consumed by $\theta$, other than in the trivial manner that $\theta(X)\equiv\id\act X$.

\begin{defn}
\label{defn.subst.action}
Define a \deffont{substitution action} by:
\begin{frameqn}
\begin{array}{r@{\ }l@{\qquad}r@{\ }l}
a\theta\equiv&  a
&
(r_1,\ldots,r_n)\theta\equiv&  (r_1\theta,\ldots,r_n\theta)
\\
([a]r)\theta\equiv&  [a](r\theta)
&
(\pi\act X)\theta\equiv&  \pi\act \theta(X)
\\
\tf f(r)\theta\equiv&  \tf f(r\theta)
\\
\bot\theta\equiv& \bot
&
(\phi\limp\psi)\theta\equiv&  (\phi\theta)\limp\psi\theta
\\
(\tf P(r))\theta\equiv&  \tf P(r\theta)
&
(\Forall{X}\phi)\theta \equiv&  \Forall{X}(\phi\theta) \quad (X\not\in\f{nontriv}(\theta))
\end{array}
\end{frameqn}
\end{defn}

\begin{rmrk}
Level 2 substitution $r\theta$ is capturing for level 1 abstraction $[a]\text{-}$.
For example if $\theta(X)=a$ then $([a]X)\theta\equiv [a]a$.
This is the behaviour displayed by the informal meta-level when we write ``take $t$ to be $x$ in $\lam{x}t$''.

Only atoms in $\pmss(X)$ may be captured in this way.
Thus for instance, if $a\not\in\pmss(X)$ then $\theta(X)=a$ is impossible because it would violate the condition $\f{fa}(\theta(X))\subseteq\pmss(X)$ in Definition~\ref{defn.level.2.sub}.
\end{rmrk}

\begin{rmrk}
\label{rmrk.next.remark}
The condition $\f{fa}(\theta(X))\subseteq\pmss(X)$ in Definition~\ref{defn.level.2.sub}, and the condition $\f{fa}(t)\subseteq\pmss(X)$ in Definition~\ref{defn.atomic.sub}, are necessary for the substitution action in Definition~\ref{defn.subst.action} to be well-defined.

Consider a name sort $\nu$ and suppose $X:\nu$ and $a,b:\nu$.
Suppose $a,b\not\in\pmss(X)$, so that by Definition~\ref{defn.aeq} $(a\ b)\act X=\id\act X$.\footnote{Remember that we quotient raw terms by $\alpha$-equivalence to obtain terms so this is now a real equality.}

Suppose we drop the conditions on free atoms of terms, so that we admit $[X\ssm a]$ as a substitution. 
Then according to the definitions, $((a\ b)\act X)[X\ssm a]=b$ whereas $(\id\act X)[X\ssm a]=a$. 
\end{rmrk}

\begin{rmrk}
\label{rmrk.atoms.as.data}
In PNL, atoms are data; they are `bindable constant symbols'.
Atoms are not variables; they do not come with a substitution as a primitive in PNL. 
(Unknowns are variables; they have a substitution action.)

The reader should not expect atoms to populate every sort, like variables do.
Atoms populate their own special sorts, name-sorts, which are sorts for `bindable data'.

We can make atoms populate a base sort (e.g. variable symbols with sort `$\lambda$-terms' or `functions') with a term-former (Definition~\ref{defn.term.signature}), e.g. $\tf{var}$ in Sections~\ref{sect.pnl.arithmetic} and~\ref{sect.more.examples}.

We can make atoms behave like variables using axioms, like those of \theory{SUB} in Figure~\ref{fig.substitution}. 
In PNL, a substitution action for atoms is a matter of writing suitable axioms.
Fortunately, nominal techniques make this fairly easy to do.
\end{rmrk}

\subsection{Sequents and derivability}

\begin{defn}
\label{defn.seq}
$\Phi$ and $\Psi$ will range over sets of propositions.
We may write $\phi,\Phi$ and $\Phi,\phi$ as shorthand for $\{\phi\}\cup\Phi$. 
We may write $\Phi,\Psi$ as shorthand for $\Phi\cup\Psi$.
Finally, define $\f{fV}(\phi)$ by
$$
\f{fV}(\Phi)=\bigcup\{\f{fV}(\phi)\mid \phi\in\Phi\}.
$$ 
A \deffont{sequent} is a pair $\Phi\cent\Psi$.
\end{defn}

\begin{frametxt}
\begin{defn}[Derivable sequents]
The \deffont{derivable sequents} are defined in Figure~\ref{Seq}. 
\end{defn}
\end{frametxt}

\begin{figure*}[t]
$$
\begin{array}{c@{\qquad}c}
\begin{prooftree}
\phantom{h}
\justifies
\Phi,\,\phi\cent \pi\act\phi,\,\Psi
\using\rulefont{Ax}
\end{prooftree}
&
\begin{prooftree}
\phantom{h}
\justifies
\Phi,\,\bot\cent \Psi
\using\rulefont{\bot L}
\end{prooftree}
\\[4ex]
\begin{prooftree}
\Phi\cent \phi,\,\Psi
\quad
\Phi,\,\psi\cent \Psi
\justifies
\Phi,\,\phi\limp\psi\cent\Psi
\using\rulefont{{\limp}L}
\end{prooftree}
&
\begin{prooftree}
\Phi,\,\phi\cent \psi,\,\Psi
\justifies
\Phi\cent \phi\limp\psi,\,\Psi
\using\rulefont{{\limp}R}
\end{prooftree}
\\[4.2ex]
\begin{prooftree}
\raisebox{.8ex}{
$\begin{array}{c}
\Phi,\,\phi[X\ssm r]\cent \Psi
\\
(\f{fa}(r){\subseteq}\pmss(X), 
\ r{:}\sort(X))
\end{array}
$}
\justifies
\Phi,\,\Forall{X}\phi\cent \Psi
\using\rulefont{{\forall}L}
\end{prooftree}
&
\begin{prooftree}
\Phi\cent \phi,\,\Psi\quad {\small (X\not\in\f{fV}(\Phi,\Psi))}
\justifies
\Phi\cent \Forall{X}\phi,\,\Psi
\using\rulefont{{\forall}R}
\end{prooftree}
\\[5ex]
\begin{prooftree}
\Phi\cent \phi,\,\Psi
\qquad
\Phi,\,\phi\cent\Psi
\justifies
\Phi\cent\Psi
\using\rulefont{Cut}
\end{prooftree}
\end{array}
$$
\caption{Sequent derivation rules of Permissive-Nominal Logic}
\label{Seq}
\end{figure*}

\begin{rmrk}
As standard, the intuition of $\Phi\cent\Psi$ being derivable is ``the conjunction (logical and) of the propositions in $\Phi$ entails the disjunction (logical or) of the propositions in $\Psi$''.
So for instance, intuitively the axiom rule \rulefont{Ax} expresses that $\phi$ if and only if $\pi\act\phi$. 

The $\pi$ in \rulefont{Ax} is deliberate and represents equivariance (preservation of truth under permuting atoms) 
within permissive-nominal logic.\footnote{In this paper equivariance surfaces in a variety of technical features of PNL: the $\pi$ in \rulefont{Ax}; the way permutations distribute into terms in Definition~\ref{defn.permutation.action}; Definition~\ref{defn.finite.equivariant} and how it is used in Definition~\ref{defn.pnl.interp}; Lemma~\ref{lemm.pi.r.model}; and more.

In fact, equivariance is broad and useful phenomenon.  The interested reader is referred to \cite[Lemma~4.7]{gabbay:newaas-jv} and \cite[Subsection~4.2]{gabbay:fountl}, where it is treated in full generality.
See also \cite[Lemma~5.5]{gabbay:twolns}, where the conditions on free atoms and support in Definitions~\ref{defn.level.2.sub} and~\ref{defn.valuation} are also exhibited as forms of equivariance.
} 
Examples of how $\pi$ is used follow immediately in Subsection~\ref{subsect.all.terms}.
\end{rmrk}

\begin{nttn}
We may write $\Phi\cent\Psi$ as shorthand for `$\Phi\cent\Psi$ is a derivable sequent'.

We may write $\Phi\not\cent\Psi$ as shorthand for `$\Phi\cent\Psi$ is not a derivable sequent'.
\end{nttn}

\begin{rmrk}
Figure~\ref{Seq} includes rules for $\bot$, $\limp$, and $\forall$.
Following on from Remark~\ref{rmrk.no.exists}, note that including rules for other connectives like $\top$, $\neg$, $\land$, $\lor$, and $\exists$ would be easy.
Because the PNL in this paper is classical, we can treat derivation rules for them as a definable extension of what we already have.

We see no inherent difficulty with constructing an intuitionistic version of PNL.
\end{rmrk}

\subsection{Universal quantification, permission sorts, and $\f{shift}$-permutations}
\label{subsect.all.terms}

Recall the comment on `atoms as data' in Remark~\ref{rmrk.atoms.as.data}.
Because of permutations, in certain circumstances free atoms can still behave like variables ranging over distinct atoms (cf. the \emph{permutative convention} of Definition~\ref{defn.atoms}).
Atoms-substitution is not be primitive in PNL, but atoms-permutation is. 

Thus in PNL we can express a theory of atoms-inequality in the following interesting way:
Assume a name sort $\tf{Atm}$ and 
a proposition-former $\tf{neq}:(\tf{Atm},\tf{Atm})$, along with a single proposition $\tf{neq}(a,b)$ for two distinct atoms in $\tf{Atm}$---and, if we wish, a proposition $\tf{neq}(a,a)\limp\bot$.
The permutation $\pi$ in \rulefont{Ax} ensures that $a$ and $b$ represent \emph{any} two distinct atoms. 

This goes further.
The condition $\f{fa}(r)\subseteq\pmss(X)$ in \rulefont{\forall L} might suggest that $\Forall{X}\phi$ means ``$\phi[X\ssm r]$ for every $r$ such that $\f{fa}(r)\subseteq\pmss(X)$''.
Indeed this is so, but what $\pmss(X)$ in $\Forall{X}\phi$ really restricts is \emph{capture}, as we now discuss.
\begin{itemize}
\item
Suppose a name sort $\tf{Atm}$ and suppose $X:\tf{Atm}$ and a proposition-former $\tf P$ of arity $\tf{Atm}$.
Suppose $b\in\f{pmss}(X)$.
By considering the swapping $(b\ a)$ and \rulefont{Ax}, and \rulefont{{\forall}L}, 
$\Forall{X}\tf P(X) \cent \tf P(a)$ for \emph{all} $a$, even if $a\not\in\pmss(X)$, as follows: 
$$
\begin{prooftree}
\begin{prooftree}
\justifies
\tf P(b) \cent \tf P(a)
\using\rulefont{Ax}\quad \text{$\pi=(b\ a)$} 
\end{prooftree}
\justifies
\Forall{X}\tf P(X)\cent \tf P(a)
\using\rulefont{\forall L}\quad\text{$[X\ssm b]$}
\end{prooftree}
$$
In other words, we can derive $\tf P(a)$ from $\Forall{X}\tf P(X)$, even if $a$ is not permitted in $X$.

Thus, in the case of level 2 closed terms (without unknowns), these have finitely many atoms and we can use a finite permutation to place them in $\pmss(X)$.
\item
This may not work for the more general case of a term \emph{with} unknowns; for example there is no finite $\pi$ such that $\f{fa}(\pi\act(X,a))\subseteq\pmss(X)$ where $a\not\in\pmss(X)$.

So consider the general case of any sort $\alpha$ and suppose $X:\alpha$ and $\pmss(X)=S$.
Suppose $\tf Q:\alpha$.
Consider any other $Y:\alpha$ and $\pmss(Y)=T$.
We will show that $\Forall{X}\tf Q(X)\cent \tf Q(Y)$ is derivable.

Recall that shift permutation $\f{shift}_\nu$ from Definition~\ref{defn.permutation} and  the definition of $\f{shift}^n_\nu$ from Notation~\ref{nttn.permutations}.
Using $\f{shift}$ permutations we can construct a permutation $\pi$ such that $S\cup T\subseteq \pi\act\f{pmss}(X)$.\footnote{If permission sets were finite, or if all permutations were finite, then we could not do this in general.}

We derive as follows:
$$
\begin{prooftree}
\begin{prooftree}
\justifies
\tf Q(\pi\mone\act Y)\cent \tf Q(Y)
\using\rulefont{Ax} 
\end{prooftree}
\justifies
\Forall{X}\tf Q(X) \cent \tf Q(Y)
\using \rulefont{{\forall}L}\quad\text{$[X\ssm \pi\mone\act Y]$}
\end{prooftree}
$$
\item
Nevertheless, $\Forall{X}\phi$ does not mean ``$\phi[X\ssm r]$ for every $r$''.
This is because permutations are bijective.
For example, suppose $X:\tf{Atm}$,\ $a\not\in\pmss(X)$, and $\tf P:([\tf{Atm}]\tf{Atm})$. 
Then $\Forall{X}\tf P([a]X) \cent P([a]r)$ for all $r$ such that $a\not\in\f{fa}(r)$, and also $\Forall{X}\tf P([b]X)\cent P([b]r)$ for all $r$ and all $b$ such that $b\not\in\f{fa}(r)$.
However, 
$$
\Forall{X}\tf P([a]X)\not\cent P([a]a), \quad\text{and for all $b$,}\ \ 
\Forall{X}\tf P([a]X)\not\cent P([b]b).
$$
The fact that $a\not\in\pmss(X)$ forbids capture by an instantiation, in a suitable sense.
\end{itemize}

\section{Semantics of permissive-nominal logic}
\label{sect.permissive-nominal.sets}

Nominal sets were introduced in \cite{gabbay:newaas-jv} (they were called `FM sets').
Technically, a nominal set is a set with a finitely-supported permutation action for atoms.
Intuitively, a nominal set is a set with `free names' in a manner which parallels how names feature in abstract syntax, but without necessarily being syntactic structures.

Names in nominal sets are modelled as atoms.
They can be renamed by the permutation action; they can be bound by an atoms-abstraction construction; and they feature a \emph{finite support} property which guarantees that we can always pick a fresh name. 

The interested reader can consult a literature which includes \cite{gabbay:newaas-jv} or \cite{gabbay:fountl} for detailed discussions of nominal sets and their applications. 
 
We will interpret PNL using \emph{permissive-nominal} sets. 
The permissive-nominal sets we use here generalise nominal sets in two ways: 
\begin{itemize*}
\item
They allow \emph{infinite support}, since permission sets from Definition~\ref{defn.atoms} need not be finite.
\item
They allow (some) \emph{infinite permutations}, since permutations from Definition~\ref{defn.permutation} are generated as a group by swappings and by infinite $\f{shift}$ permutations, thus giving $\forall$-quantification extra power as discussed in Subsection~\ref{subsect.all.terms}.
\end{itemize*}

The main results are soundness (Theorem~\ref{thrm.soundness}) and completeness (Theorem~\ref{thrm.completeness}).
The main definitions are of permissive-nominal sets in Definition~\ref{defn.nominal.set}, and of the interpretations of terms and propositions in Definitions~\ref{defn.interpret.terms} and~\ref{defn.truth} respectively.

The permissive-nominal development here resembles that in \cite{gabbay:newaas-jv}.
Definition~\ref{defn.nominal.set} is a little subtle because we ignore infinite permutations when we determine support, whereas equivariance from Definition~\ref{defn.finite.equivariant} means commuting with \emph{all} permutations.

\subsection{Permissive-nominal sets}

Recall $\mathbb P$ the set of all permutations from Definition~\ref{defn.permutation}.
\begin{frametxt}
\begin{defn}
\label{defn.perm.set} 
A \deffont{set with a permutation action} $\ns X$ is a pair $(|\ns X|,\act)$ of 
\begin{itemize*}
\item
a \deffont{carrier set} $|\ns X|$ and 
\item
a group action on the carrier set $(\mathbb P\times |\ns X|)\to |\ns X|$, written infix as $\pi\act x$.

So, $\id\act x=x$ and $\pi\act(\pi'\act x)=(\pi\circ\pi')\act x$ for every $\pi$ and $\pi'$ and every $x\in|\ns X|$.
\end{itemize*}
\end{defn}
\end{frametxt}

\begin{defn}
\label{defn.support}
Say a set of atoms $A\subseteq\mathbb A$ \deffont{supports} $x\in |\ns X|$ when for all finite permutations $\pi$, if $\pi(a) =a$ for all $a \in A$ then $\pi\act x =x$.
\end{defn}
Thus, if a permission set $S$ supports $x$ and $\Forall{a{\in} S}\pi(a)=\pi'(a)$ then $\pi\act x=\pi'\act x$.

\begin{rmrk}
\label{rmrk.fuzzy.support}
$\mathbb P$ contains infinite as well as finite permutations.
In the next paragraph we construct an $x$ that is supported by $\varnothing$ (so that $\pi\act x=x$ for all finite $\pi\in\mathbb P$) and yet $\f{shift}_\nu(x)\neq x$. 
This observation is the same as \emph{fuzzy support} from \cite{gabbay:genmn}.

Recall the order $f_\nu$ on $\mathbb A_\nu$ from Definition~\ref{defn.atoms} and consider 
$$
x=\{\pi\act (f_\nu(0),f_\nu(\text{-}1),f_\nu(1),f_\nu(\text{-}2),\ldots)\mid \pi\in\mathbb P_{\text{fin}}\}
$$ 
the set of finite permutations of $\mathbb A_\nu$ written out in order, with the pointwise permutation action.
Then $\pi\act x=x$ for every finite permutation, but $\f{shift}_\nu\act x\neq x$.
\end{rmrk}

\begin{frametxt}
\begin{defn}
\label{defn.nominal.set} 
A \deffont{permissive-nominal set} is a set with a permutation action such that every element has a supporting permission set. 

$\ns X$, $\ns Y$ will range over permissive-nominal sets.
\end{defn}
\end{frametxt}

\begin{thrm}
\label{thrm.supp}
Suppose $\ns X$ is a permissive-nominal set.
Then every $x\in|\ns X|$ has a unique least supporting set $\f{supp}(x)\subseteq\mathbb A$.\footnote{$\f{supp}(x)$ need not necessarily be a permission set.  For instance, $\f{supp}(a)=\{a\}$.}

As a corollary, if $\pi$ is finite and $\pi(a)=a$ for all $a\in\f{supp}(x)$ then $\pi\act x=x$.
\end{thrm}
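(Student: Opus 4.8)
The plan is to exhibit $\f{supp}(x)$ explicitly as the intersection of all supporting permission sets, to prove that this intersection is itself a supporting set, and then to read off minimality from the construction; the corollary will drop out at once. Throughout, recall from Definition~\ref{defn.nominal.set} that $x$ has at least one supporting permission set, and that support (Definition~\ref{defn.support}) quantifies only over \emph{finite} permutations.

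The technical core is a closure property: \emph{if permission sets $S$ and $T$ both support $x$, then so does $S\cap T$}. (That $S\cap T$ is again a permission set, and that the image of a permission set under a finite permutation is a permission set, are routine from Definition~\ref{defn.atoms}.) To prove it, take a finite $\pi$ with $\pi(a)=a$ for all $a\in S\cap T$; since $\pi$ is finite, $\f{nontriv}(\pi)$ is a finite set disjoint from $S\cap T$. The idea is to factor $\pi=(\pi\circ\tau\mone)\circ\tau$ through a finite permutation $\tau$ chosen so that $\tau$ fixes $S$ pointwise while $\pi\circ\tau\mone$ fixes $T$ pointwise; then $\tau\act x=x$ because $S$ supports $x$, and $(\pi\circ\tau\mone)\act x=x$ because $T$ supports $x$, so by the group action law of Definition~\ref{defn.perm.set} we get $\pi\act x=(\pi\circ\tau\mone)\act(\tau\act x)=x$. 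To build $\tau$ one reroutes the finitely many atoms of $\f{nontriv}(\pi)$ lying in $S\setminus T$ through fresh atoms, so that the atoms on which the two pointwise-fixing requirements would conflict are moved out of $S$ first. Here is the one place the permissive setting is used essentially: the required fresh atoms, of each relevant name sort, exist in abundance because $S\cup T$ is again a permission set and hence co-infinite in every $\mathbb A_\nu$. I expect assembling $\tau$ --- keeping straight which atoms must be fixed by $\tau$ and which by $\pi\circ\tau\mone$ --- to be the main obstacle; everything else is bookkeeping.

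With the closure property in hand, set $\f{supp}(x)=\bigcap\{S\mid S\text{ is a permission set supporting }x\}$, which is well-defined since by Definition~\ref{defn.nominal.set} at least one such $S$ exists. By induction on the closure property, any \emph{finite} intersection of supporting permission sets is again a supporting permission set. To see that $\f{supp}(x)$ itself supports $x$, take a finite $\pi$ fixing $\f{supp}(x)$ pointwise and again use that $\f{nontriv}(\pi)$ is finite: for each $a\in\f{nontriv}(\pi)$ we have $\pi(a)\neq a$ and hence $a\notin\f{supp}(x)$, so some supporting permission set $S_a$ omits $a$. The finite intersection $S^\ast=\bigcap_{a\in\f{nontriv}(\pi)}S_a$ is then a supporting permission set disjoint from $\f{nontriv}(\pi)$, so $\pi$ fixes $S^\ast$ pointwise and therefore $\pi\act x=x$. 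This reduction from the infinitary intersection to a finitary one, powered by the finiteness of $\f{nontriv}(\pi)$, is the crucial move that lets the binary closure property suffice.

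Finally, $\f{supp}(x)$ is contained in every supporting permission set by construction, so it is the least one and is unique; this is the sense in which it is the least supporting set (note, as the footnote records, that it need not itself be a permission set, e.g.\ $\f{supp}(a)=\{a\}$). The corollary is then immediate: it is exactly the assertion that the set $\f{supp}(x)$ supports $x$ in the sense of Definition~\ref{defn.support} applied to finite $\pi$, which is what the previous paragraph established.
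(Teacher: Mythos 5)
Your overall architecture---a binary closure lemma ($S$ and $T$ support $x$ implies $S\cap T$ does), followed by a reduction of the infinitary intersection to a finite one via the finiteness of $\f{nontriv}(\pi)$---is the standard one and would work; it is organised differently from the paper's proof, which takes $A=\bigcap\{S\mid S\text{ supports }x\}$ directly and conjugates $\pi$ by a product of swappings into fresh atoms so that the conjugate fixes a single chosen supporting set pointwise. Your second and third paragraphs (the passage to the finite intersection $S^\ast$, minimality, and the corollary) are correct.

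The gap is in your proof of the closure lemma: the two-factor factorisation $\pi=(\pi\circ\tau\mone)\circ\tau$, with $\tau$ fixing $S$ pointwise and $\pi\circ\tau\mone$ fixing $T$ pointwise, does not exist in general. Take $\pi=(a\ b)$ with $a\in S\setminus T$ and $b\in T\setminus S$; this $\pi$ fixes $S\cap T$ pointwise, so it is an instance the lemma must cover (and the case where $S$ and $T$ differ in both directions is exactly the one that matters). For $\pi\circ\tau\mone$ to fix $b\in T$ we need $\pi(\tau\mone(b))=b$, i.e.\ $\tau\mone(b)=\pi\mone(b)=a$, i.e.\ $\tau(a)=b$; but $\tau$ fixing $S$ pointwise forces $\tau(a)=a\neq b$. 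Your own description of $\tau$ betrays the tension: a $\tau$ that moves atoms of $S\setminus T$ \emph{out of} $S$ cannot also fix $S$ pointwise, and if it does not fix $S$ pointwise then ``$\tau\act x=x$ because $S$ supports $x$'' is unavailable. The repair is to use three factors rather than two: writing $\pi$ as a product of swappings of atoms in $\f{nontriv}(\pi)$ (all outside $S\cap T$), the only problematic swappings are $(b\ a)$ with $a\in S\setminus T$ and $b\in T\setminus S$, and for these one picks $c$ fresh for $S\cup T$ (possible since $S\cup T$ is a permission set, hence co-infinite) and uses $(b\ a)=(b\ c)\circ(a\ c)\circ(b\ c)$, where the two outer swappings fix $S$ pointwise and the middle one fixes $T$ pointwise. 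With that replacement the rest of your argument goes through.
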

Theorem~\ref{thrm.supp} is familiar from \cite{gabbay:newaas-jv}, but we do have to be a little bit careful since we are not working with nominal sets.
It all works out:
\begin{proof}
The corollary is immediate given the definition of support (Definition~\ref{defn.support}).

Define $A=\bigcap\{S\mid S\text{ permission set, supports }x\}$.
Also, choose some permission set $S$ that supports $x$.

Suppose $\pi$ is finite and $\pi(a)=a$ for all $a\in A$.
Write $a_1,\ldots,a_n$ for the atoms in $\f{nontriv}(\pi)\cap S$, in some order.
Let $b_1,\ldots,b_n$ be some choice of fresh atoms (so $b_i\not\in S\cup\f{nontriv}(\pi)\cup A$ for $1\leq i\leq n$).
Write $\tau=(b_1\ a_1)\circ\ldots\circ(b_n\ a_n)$.
It is routine to check that $(\tau\circ\pi\circ\tau)(a)=a$ for every $a\in S$.
Thus $\tau\act(\pi\act(\tau\act x))=x$.
Now $\tau\act x=x$, and it follows by a routine manipulation that $\pi\act x=x$ as required.
\end{proof}

\begin{lemm}
\label{lemm.supp.pi.x}
Suppose $\ns X$ is a permissive-nominal set and $x\in |\ns X|$.
Then $\f{supp}(\pi\act x)=\pi\act\f{supp}(x)$ (Definition~\ref{defn.pointwise}).
\end{lemm}
\begin{proof}
By a routine calculation using the group action. 
\end{proof}

\begin{corr}
\label{corr.notinsupp}
Suppose $\ns X$ is a permissive-nominal set and $x\in |\ns X|$.
Suppose $b\not\in\f{supp}(x)$.
Then $(b\ a)\act x= x$ implies $a\not\in\f{supp}(x)$.
\end{corr}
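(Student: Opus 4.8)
The plan is to reduce the statement to the equivariance of support established in Lemma~\ref{lemm.supp.pi.x}, and then to exploit the fact that a swapping acts on a set of atoms simply by exchanging $a$ and $b$. Note first that by Theorem~\ref{thrm.supp} the least supporting set $\f{supp}(x)$ is well-defined, so the claim makes sense.

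First I would apply Lemma~\ref{lemm.supp.pi.x} with $\pi=(b\ a)$ to obtain $\f{supp}((b\ a)\act x)=(b\ a)\act\f{supp}(x)$. The hypothesis $(b\ a)\act x=x$ then gives $\f{supp}(x)=(b\ a)\act\f{supp}(x)$; that is, the set $\f{supp}(x)$ is invariant under the pointwise action of the swapping $(b\ a)$ (Definition~\ref{defn.pointwise}).

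Next I would unwind what this invariance means. Since $(b\ a)$ fixes every atom other than $a$ and $b$ and exchanges those two, a set $A$ of atoms satisfies $A=(b\ a)\act A$ exactly when $a\in A\liff b\in A$ (it must contain both or neither). Applying this to $A=\f{supp}(x)$ and using the hypothesis $b\notin\f{supp}(x)$ immediately yields $a\notin\f{supp}(x)$, as required.

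There is no substantial obstacle here: the result is a short consequence of equivariance of support combined with careful bookkeeping of the pointwise swapping action. The only point worth stating explicitly is the elementary observation that a two-element swapping fixes a set of atoms precisely when that set contains $a$ and $b$ together or neither of them; it is at this step that the two hypotheses $b\notin\f{supp}(x)$ and $(b\ a)\act x=x$ combine to force $a\notin\f{supp}(x)$.
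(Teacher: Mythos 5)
Your proof is correct and is essentially the paper's own argument: both rest on Lemma~\ref{lemm.supp.pi.x} together with the elementary fact that the swapping $(b\ a)$ fixes a set of atoms exactly when it contains both of $a,b$ or neither. The only difference is presentational --- the paper argues by contraposition (from $a\in\f{supp}(x)$ it deduces $(b\ a)\act x\neq x$) while you argue directly --- which is not a substantive divergence.
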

\begin{proof}
Suppose $b\not\in\f{supp}(x)$.
We prove the contrapositive.
Suppose $a\in\f{supp}(x)$.
By Lemma~\ref{lemm.supp.pi.x} $\f{supp}((b\ a)\act x)=(b\ a)\act \f{supp}(x)$.
By our suppositions, $(b\ a)\act\f{supp}(x)\neq \f{supp}(x)$.
It follows that $(b\ a)\act x\neq x$. 
\end{proof}

\subsection{Examples of permissive-nominal sets}

\subsubsection{Atoms}

\begin{defn}
\label{defn.atoms.perm}
$\mathbb A$ the set of atoms can be considered a permissive-nominal set with a natural permutation action $\pi\act a=\pi(a)$.

The set $\{0,1\}$ can be considered a permissive-nominal set with the natural \deffont{trivial} permutation action $\pi\act x=x$ for all $\pi\in\mathbb P$ and $x\in\{0,1\}$.

In the cases of $\mathbb A$ and $\{0,1\}$ only, we will be lax about the distinction between the set, and the permissive-nominal set with its natural permutation action.
\end{defn}

\subsubsection{Atoms-abstraction}

\begin{defn}
\label{defn.abstraction.sets}
Suppose $\ns X$ is a permissive-nominal set and $\mathbb A_\nu$ is a set of atoms.
Suppose $x\in|\ns X|$ and $a\in\mathbb A_\nu$.
Define \deffont{atoms-abstraction} $[a]x$ and $[\mathbb A_\nu]\ns X$ by:
\begin{frameqn}
\begin{array}{r@{\ }l}
[a]x =& \{(a,x)\}\cup \{(b,(b\ a)\act x)\mid b\in\mathbb A_\nu{\setminus} \f{supp}(x)\}
\\
|[\mathbb A_\nu]\ns X| =& \{[a]x\mid a\in\mathbb A_\nu,\ x\in|\ns X|\} 
\\
\pi\act [a]x =& [\pi(a)]\pi\act x
\end{array}
\end{frameqn}
\end{defn}

\begin{rmrk}
In the definition of $[a]x$ in Definition~\ref{defn.abstraction.sets} recall that by our permutative convention $b\neq a$.
An equivalent and more compact way of writing this is $[a]x=\{(\pi(a),\pi\act x)\mid \pi\in\f{fix}(\f{supp}(x){\setminus}\{a\})\}$ where $\f{fix}(A)=\{\pi\mid\Forall{a{\in}A}\pi(a)=a\}$.
\end{rmrk}

\begin{lemm}
\label{lemm.supp.abstraction}
\begin{enumerate*}
\item
$[\mathbb A_\nu]\ns X$ is a permissive-nominal set.
\item
$[a]x{=}[a]x'$ if and only if $x{=}x'$, for $a{\in}\mathbb A_\nu$ and $x{\in} |\ns X|$.
\item
$[a]x{=}[a']x'$ if and only if $a'{\not\in}\f{supp}(x)$ and $(a'\, a)\act x{=}x'$, for $a,a'{\in}\mathbb A_\nu$ and $x,x'{\in}|\ns X|$.
\end{enumerate*}
\end{lemm}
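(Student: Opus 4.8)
The plan is to treat $[a]x$ as the graph of a partial function $\mathbb A_\nu\rightharpoonup|\ns X|$ whose domain is $(\mathbb A_\nu\setminus\f{supp}(x))\cup\{a\}$ and which sends each $c$ in that domain to $(c\ a)\act x$ (for $c=a$ this reads $x$, matching Definition~\ref{defn.abstraction.sets}). Set-equality of two abstractions then unpacks as equality of domains together with equality of values, and this structures all three items. For the second item (where $a=a'$) I would compare the values at $a$: that of $[a]x$ is $x$ and that of $[a]x'$ is $x'$, forcing $x=x'$, and the converse is immediate. For the forward direction of the third item (where $a\neq a'$ by the permutative convention), comparing domains intersected with $\mathbb A_\nu$ forces $a'\notin\f{supp}(x)$, and reading off the value at $a'$ gives $(a'\ a)\act x=x'$.

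The crux is the backward direction of the third item. Assuming $a'\notin\f{supp}(x)$, $(a'\ a)\act x=x'$, and $a\neq a'$, I would first compute $\f{supp}(x')=(a'\ a)\act\f{supp}(x)$ by Lemma~\ref{lemm.supp.pi.x} and split on whether $a\in\f{supp}(x)$ to confirm that the two domains coincide. For the values, the two functions have the uniform shapes $c\mapsto(c\ a)\act x$ and $c\mapsto(c\ a')\act x'=(c\ a')(a'\ a)\act x$, so I must verify $(c\ a)\act x=(c\ a')(a'\ a)\act x$ for each $c$ in the common domain. The cases $c=a$ and $c=a'$ are direct, and for $c\notin\{a,a'\}$ the permutation identity $(c\ a)(c\ a')(a'\ a)=(c\ a')$ reduces the goal to $(c\ a')\act x=x$. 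Since $c$ lies in the domain with $c\neq a$ we have $c\notin\f{supp}(x)$, and $a'\notin\f{supp}(x)$ by hypothesis, so $(c\ a')$ fixes $\f{supp}(x)$ pointwise and the corollary of Theorem~\ref{thrm.supp} gives $(c\ a')\act x=x$. This permutation bookkeeping, together with the support computation, is where I expect the real work to lie; the remainder is careful case management.

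With the equality characterization in hand, the first item follows. Well-definedness of $\pi\act[a]x=[\pi(a)]\pi\act x$ reduces, via the second and third items, to checking that $[a]x=[a']x'$ implies $[\pi(a)]\pi\act x=[\pi(a')]\pi\act x'$: here $\f{supp}(\pi\act x)=\pi\act\f{supp}(x)$ (Lemma~\ref{lemm.supp.pi.x}) transports the freshness condition, and the conjugation identity $(\pi(a')\ \pi(a))\circ\pi=\pi\circ(a'\ a)$ transports the swapping condition. The group-action laws $\id\act[a]x=[a]x$ and $\pi\act(\pi'\act[a]x)=(\pi\circ\pi')\act[a]x$ are immediate from the definition. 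Finally, to exhibit a supporting permission set I would take any permission set $S$ supporting $x$ (which exists since $\ns X$ is permissive-nominal) and show $S$ supports $[a]x$: a finite $\pi$ fixing $S$ pointwise fixes $x$, hence sends $[a]x$ to $[\pi(a)]x$, and since such a $\pi$ restricts to a bijection of $\mathbb A\setminus S$ one has $\pi(a)\notin\f{supp}(x)$ whenever $a\notin S$, so the third item (again reducing to the corollary of Theorem~\ref{thrm.supp}) gives $[\pi(a)]x=[a]x$.
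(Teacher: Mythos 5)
Your proof is correct. The paper states this lemma without proof (deferring, like the neighbouring Lemma~\ref{lemm.properties.of.support}, to the standard nominal-sets development of \cite{gabbay:newaas-jv}), and your argument --- reading $[a]x$ as the graph of a partial function on $\{a\}\cup(\mathbb A_\nu\setminus\f{supp}(x))$, reducing equality of abstractions to equality of domains and of values, and discharging the value check via the swapping identity $(c\ a)(c\ a')(a'\ a)=(c\ a')$ together with the corollary of Theorem~\ref{thrm.supp} --- is exactly the standard one, including the representative-independence check needed for the permutation action in part~1.
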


\subsubsection{Product}

\begin{defn}
\label{defn.times}
If $\ns X_i$ are permissive-nominal sets for $1\leq i\leq n$ then define $\ns X_1\times\ldots\times \ns X_n$ by:
$$
\begin{array}{r@{\ }l}
|\ns X_1\times\ldots\times\ns X_n|=&|\ns X_1|\times\ldots\times|\ns X_n|
\\
\pi\act (x_1,\ldots,x_n)=&(\pi\act x_1,\ldots,\pi\act x_n)
\end{array}
$$
\end{defn}

\begin{lemm}
\label{lemm.properties.of.support}
\begin{itemize*}
\item
$\f{supp}(a)=\{a\}$.
\item
$\f{supp}([a]x)=\f{supp}(x)\setminus\{a\}$.
\item
$\f{supp}((x_1,\ldots,x_n))=\bigcup\{\f{supp}(x_i)\mid 1\leq i\leq n\}$.
\end{itemize*}
\end{lemm}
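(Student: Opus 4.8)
The plan is to prove each of the three clauses by establishing two inclusions, leaning throughout on the fact (Theorem~\ref{thrm.supp}) that $\f{supp}(x)$ is the \emph{least} supporting set. For the direction that bounds $\f{supp}$ \emph{above} by the claimed set, I would simply check that the claimed set supports the element and invoke leastness. For the direction that bounds it \emph{below}, I would argue the contrapositive via Corollary~\ref{corr.notinsupp}: assuming some atom $c$ of the claimed support is omitted, I pick a fresh $b$ (of the same name sort, which exists since every support lies inside a co-infinite permission set), use that $(b\ c)$ fixes the whole element, and then derive $c\notin\f{supp}$ of the relevant component, a contradiction.

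For the first clause I would note that any finite $\pi$ with $\pi(a)=a$ gives $\pi\act a=\pi(a)=a$, so $\{a\}$ supports $a$; and $a$ itself lies in $\f{supp}(a)$ because $\f{supp}(a)=\varnothing$ would force $(a\ b)\act a=b=a$ for $b\neq a$, which is absurd. For the third clause, if finite $\pi$ fixes $\bigcup_i\f{supp}(x_i)$ it fixes each $\f{supp}(x_i)$, so $\pi\act x_i=x_i$ and hence $\pi\act(x_1,\ldots,x_n)=(x_1,\ldots,x_n)$ by the product action; this gives the upper bound. Conversely, if $c\in\f{supp}(x_i)$ but $c\notin\f{supp}((x_1,\ldots,x_n))$, then for fresh $b$ the swap $(b\ c)$ fixes the tuple and hence each component, so $(b\ c)\act x_i=x_i$, and Corollary~\ref{corr.notinsupp} yields $c\notin\f{supp}(x_i)$, contradicting the choice of $c$.

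The second clause is the main obstacle, and the work is concentrated in showing that $\f{supp}(x)\setminus\{a\}$ supports $[a]x$. Given finite $\pi$ fixing $\f{supp}(x)\setminus\{a\}$, the case $\pi(a)=a$ is immediate since then $\pi$ fixes all of $\f{supp}(x)$, giving $\pi\act x=x$ and $\pi\act[a]x=[\pi(a)]\pi\act x=[a]x$. The delicate case is $a'=\pi(a)\neq a$, where I must show $\pi\act[a]x=[a'](\pi\act x)=[a]x$. First I would observe $a'\notin\f{supp}(x)$, for otherwise $a'\in\f{supp}(x)\setminus\{a\}$ forces $\pi(a')=a'=\pi(a)$, violating injectivity. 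Then I would verify by cases on $c\in\f{supp}(x)$ that the finite permutation $(a'\ a)\circ\pi$ fixes every atom of $\f{supp}(x)$ (using $\pi(c)=c$ and $c\neq a'$ when $c\neq a$, and $(a'\ a)(\pi(a))=a$ when $c=a$), whence $((a'\ a)\circ\pi)\act x=x$, i.e.\ $(a'\ a)\act x=\pi\act x$; Lemma~\ref{lemm.supp.abstraction}(3) then delivers $[a]x=[a'](\pi\act x)=\pi\act[a]x$. The reverse inclusion is routine in the established style: if $c\in\f{supp}(x)$ with $c\neq a$ but $c\notin\f{supp}([a]x)$, then for fresh $b$ the swap $(b\ c)$ fixes $[a]x$, and since $a\neq b,c$ we get $[a]((b\ c)\act x)=(b\ c)\act[a]x=[a]x$, so $(b\ c)\act x=x$ by Lemma~\ref{lemm.supp.abstraction}(2) and hence $c\notin\f{supp}(x)$ by Corollary~\ref{corr.notinsupp}, a contradiction.
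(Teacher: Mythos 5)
Your proof is correct, and it is worth noting that the paper itself does not actually prove this lemma---it simply defers to \cite{gabbay:newaas-jv} and \cite{gabbay:fountl}---so what you have written is a self-contained verification rather than a variant of an argument in the text. Your argument is the standard nominal-sets computation of support, correctly adapted to the permissive setting using only ingredients already established in the paper: Theorem~\ref{thrm.supp} both for leastness (giving the upper bounds once the candidate set is shown to support the element) and for the fact that a finite permutation fixing $\f{supp}(x)$ pointwise fixes $x$; Corollary~\ref{corr.notinsupp} for the lower bounds via a fresh swapping; and Lemma~\ref{lemm.supp.abstraction} for the two facts about when abstractions coincide. You also flag the two places where the permissive setting needs care---support is tested only against \emph{finite} permutations, and a fresh atom of the right name sort always exists because every support sits inside a co-infinite permission set---and both observations are accurate. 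The one genuinely non-routine step, showing $\pi\act[a]x=[a]x$ when $\pi$ moves $a$ to some $a'$, is exactly where the work should be concentrated, and your verification that $a'\not\in\f{supp}(x)$ and that $(a'\ a)\circ\pi$ fixes $\f{supp}(x)$ pointwise, followed by the appeal to part~3 of Lemma~\ref{lemm.supp.abstraction}, is the right argument.
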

\begin{proof}
Proofs are as in \cite{gabbay:newaas-jv} or \cite{gabbay:fountl}. 
\end{proof}

\subsection{Interpretation and soundness}
\label{subsect.semantics}

\subsubsection{Interpretation of signatures} 
 
\begin{defn}
\label{defn.interpretation}
Suppose $(\mathcal A,\mathcal B)$ is a sort-signature (Definition~\ref{defn.sort.sig}).

A \deffont{(PNL) interpretation} or \deffont{model} $\mathcal I$ for $(\mathcal A,\mathcal B)$ consists of an assignment of a nonempty permissive-nominal set $\basesort^\iden$ to each $\basesort\in\mathcal B$.\footnote{We favour the word `interpretation' for assigning a denotational interpretation to a logic, and `model' for checking whether the interpretation makes a theory (a set of axioms).  These senses overlap, in that an interpretation is a model for the empty theory.  In practice, we tend to use whichever word seems most appropriate in context.} 

We extend an interpretation $\mathcal I$ to sorts by:
\begin{frameqn}
\begin{array}{r@{\ }l@{\qquad}r@{\ }l}
\model{\basesort}=&\basesort^\iden
&
\model{(\alpha_1,\ldots,\alpha_n)}=&\model{\alpha_1}\times\ldots\times\model{\alpha_n}
\\
\model{\nu}=&\mathbb A_\nu
&
\model{[\nu]\alpha}=&[\mathbb A_\nu]\model{\alpha}
\end{array}
\end{frameqn}
\end{defn}

\begin{rmrk}
Note in Definition~\ref{defn.interpretation} that a base sort $\basesort$ is interpreted by a permissive-nominal set $\basesort^\iden$ given in the interpretation, whereas a name sort $\nu$ \emph{must} be interpreted by its corresponding set of atoms $\atoms_\nu$ as fixed in Definition~\ref{defn.atoms}.
This is part of a nominal `slogan' that \emph{atoms are interpreted by themselves}.
\end{rmrk}

\begin{defn}
\label{defn.finite.equivariant}
Suppose $\ns X$ and $\ns Y$ are sets with a permutation action.
Call a function $f$ from $|\ns X|$ to $|\ns Y|$ \deffont{equivariant} when 
\begin{equation}
\Forall{\pi\in\mathbb P}\Forall{x\in|\ns X|}\pi\act (f(x))=f(\pi\act x) .
\tag{equivariance for functions}
\end{equation}
\end{defn}

\begin{lemm}
\label{lemm.equivar.no.increase.support}
If $f$ from $|\ns X|$ to $|\ns Y|$ is equivariant then $\f{supp}(f(x))\subseteq\f{supp}(x)$ for all $x\in|\ns X|$.
\end{lemm}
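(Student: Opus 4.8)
The plan is to prove the statement by unfolding the definition of support (Definition~\ref{defn.support}) and exploiting the equivariance hypothesis. Since $\f{supp}(x)$ is, by Theorem~\ref{thrm.supp}, the \emph{least} supporting set of $x$, it will suffice to show that $\f{supp}(x)$ itself supports $f(x)$; the inclusion $\f{supp}(f(x))\subseteq\f{supp}(x)$ then follows because $\f{supp}(f(x))$ is the least supporting set of $f(x)$. So the real content is to verify that $\f{supp}(x)$ supports $f(x)$ in the sense of Definition~\ref{defn.support}.

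To do this, I would take an arbitrary \emph{finite} permutation $\pi$ such that $\pi(a)=a$ for all $a\in\f{supp}(x)$, and show $\pi\act(f(x))=f(x)$. First, since $\f{supp}(x)$ supports $x$ and $\pi$ fixes $\f{supp}(x)$ pointwise, the corollary of Theorem~\ref{thrm.supp} gives $\pi\act x=x$ (here I use that $\pi$ is finite). Then I would apply equivariance: by the equivariance equation for functions, $\pi\act(f(x))=f(\pi\act x)$, and substituting $\pi\act x=x$ yields $f(\pi\act x)=f(x)$. Chaining these, $\pi\act(f(x))=f(x)$, which is exactly what is required for $\f{supp}(x)$ to support $f(x)$.

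Having shown $\f{supp}(x)$ supports $f(x)$, I invoke the uniqueness/minimality from Theorem~\ref{thrm.supp} applied to $f(x)\in|\ns Y|$: the least supporting set $\f{supp}(f(x))$ is contained in every supporting set, in particular in $\f{supp}(x)$, giving $\f{supp}(f(x))\subseteq\f{supp}(x)$ as desired.

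There is no serious obstacle here; the argument is a routine application of the definitions. The one point worth attending to is the interplay between finite and general permutations: Definition~\ref{defn.support} quantifies only over \emph{finite} permutations, whereas equivariance is stated for \emph{all} $\pi\in\mathbb P$. This is harmless, since the equivariance hypothesis is strong enough to cover the finite permutations we need, and the definition of support only demands the finite case. I should also confirm that $\ns Y$ is a permissive-nominal set (so that Theorem~\ref{thrm.supp} applies to $f(x)$); strictly the statement only assumes $\ns X,\ns Y$ are sets with a permutation action, so if $\ns Y$ is not known to be permissive-nominal I would instead argue directly that $\f{supp}(f(x))$, defined as the intersection of supporting permission sets, is contained in the supporting set $\f{supp}(x)$—but in the intended application $\ns Y$ is permissive-nominal and the clean argument above applies.
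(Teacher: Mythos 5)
Your proposal is correct and follows essentially the same route as the paper's (one-line) proof: take a finite $\pi$ fixing $\f{supp}(x)$ pointwise, use the corollary of Theorem~\ref{thrm.supp} to get $\pi\act x=x$, and apply equivariance to conclude $\pi\act f(x)=f(x)$, so $\f{supp}(x)$ supports $f(x)$ and minimality gives the inclusion. Your closing remark about needing $\ns Y$ to be permissive-nominal (or else arguing directly with the intersection of supporting permission sets) is a reasonable extra care that the paper elides.
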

\begin{proof}
If $\pi\in\f{fix}(\f{supp}(x))$ then $\pi\act f(x)=f(\pi\act x)$, and if $\pi$ is finite then $\pi\act x=x$.
\end{proof}

\begin{defn}
\label{defn.pnl.interp}
Suppose $\mathcal S=(\mathcal A,\mathcal B,\mathcal F,\mathcal P,\f{ar})$ is a signature (Definition~\ref{defn.signature}).

A \deffont{(PNL) interpretation} $\mathcal I$ for $\mathcal S$ consists of the following data:
\begin{itemize*}
\item
An interpretation for the sort-signature $(\mathcal A,\mathcal B)$ (Definition~\ref{defn.interpretation}).
\item
For every $\tf f\in\mathcal F$ with $\f{ar}(\tf f)=(\alpha)\basesort$ an equivariant function $\tf f^\iden$ from $\model{\alpha}$ to $\model{\basesort}$ (Definition~\ref{defn.finite.equivariant}).
\item
For every $\tf P\in\mathcal P$ with $\f{ar}(\tf P)=\alpha$ a finite equivariant function $\tf P^\iden$ from $\model{\alpha}$ to $\{0,1\}$ (Definition~\ref{defn.atoms.perm}).
\end{itemize*}
\end{defn}

\begin{defn}
\label{defn.valuation}
Suppose $\mathcal I$ is an interpretation for $\mathcal S$.
A \deffont{valuation} $\varsigma$ to $\mathcal I$ is a map on unknowns such that for each unknown $X$,\ 
\begin{itemize*}
\item
$\varsigma(X)\in\model{\sort(X)}$,\ and\ 
\item
$\f{supp}(\varsigma(X))\subseteq \pmss(X)$.
\end{itemize*}
$\varsigma$ will range over valuations.
\end{defn}

\subsubsection{Interpretation of terms} 

\begin{defn}
\label{defn.interpret.terms}
Suppose $\mathcal I$ is an interpretation of a signature $\mathcal S$.
Suppose $\varsigma$ is a valuation to $\mathcal I$.

Define an \deffont{interpretation} 
$\denot{\mathcal I}{\varsigma}{r}$ in $\mathcal S$ by:
\begin{frameqn}
\begin{array}{r@{\ }l@{\qquad}r@{\ }l}
\denot{\mathcal I}{\varsigma}{a} =& a
&
\denot{\mathcal I}{\varsigma}{[a]r} =& [a]\denot{\mathcal I}{\varsigma}{r}
\\
\denot{\mathcal I}{\varsigma}{\tf f(r)} =& 
\tf f^\iden(\denot{\mathcal I}{\varsigma}{r})
&
\denot{\mathcal I}{\varsigma}{\pi\act X} =& \pi\act\varsigma(X)
\\
\denot{\mathcal I}{\varsigma}{(r_1,\ldots,r_n)} =& 
(\denot{\mathcal I}{\varsigma}{r_1},\ldots,\denot{\mathcal I}{\varsigma}{r_n})
\end{array}
\end{frameqn}
\end{defn}

\begin{lemm}
\label{lemm.sort.r}
If $r:\alpha$ then $\denot{\mathcal I}{\varsigma}{r}\in\model{\alpha}$.
\end{lemm}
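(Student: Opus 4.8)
The plan is to prove Lemma~\ref{lemm.sort.r} by structural induction on the term $r$, following the clauses of the interpretation $\denot{\mathcal I}{\varsigma}{r}$ given in Definition~\ref{defn.interpret.terms}. Since the statement asserts that the denotation of a term of sort $\alpha$ lands in the carrier $\model{\alpha}$, the induction will simply verify for each syntactic constructor that the output of the corresponding semantic operation inhabits the interpretation of the appropriate sort. Because terms are $\alpha$-equivalence classes of raw terms, I should strictly speaking remark that the interpretation is well-defined on classes, but that well-definedness would be established separately; for this lemma I take the interpretation as given and reason on a representative raw term.

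First I would handle the base cases. For an atom $a:\nu$, we have $\denot{\mathcal I}{\varsigma}{a}=a$, and $\model{\nu}=\mathbb A_\nu$ by Definition~\ref{defn.interpretation}, so $a\in\mathbb A_\nu$ as required. For a moderated unknown $\pi\act X$ of sort $\alpha=\sort(X)$, the denotation is $\pi\act\varsigma(X)$; by Definition~\ref{defn.valuation} $\varsigma(X)\in\model{\sort(X)}=\model{\alpha}$, and since $\model{\alpha}$ carries a permutation action (it is a permissive-nominal set) it is closed under $\pi\act(-)$, so $\pi\act\varsigma(X)\in\model{\alpha}$. Then I would treat the inductive steps. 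For a tuple $(r_1,\ldots,r_n):(\alpha_1,\ldots,\alpha_n)$, the inductive hypotheses give $\denot{\mathcal I}{\varsigma}{r_i}\in\model{\alpha_i}$, and $\model{(\alpha_1,\ldots,\alpha_n)}=\model{\alpha_1}\times\ldots\times\model{\alpha_n}$ by Definitions~\ref{defn.interpretation} and~\ref{defn.times}, so the tuple of denotations lies in the product. For $\tf f(r):\tau$ with $\tf f:(\alpha)\tau$ and $r:\alpha$, the inductive hypothesis gives $\denot{\mathcal I}{\varsigma}{r}\in\model{\alpha}$, and by Definition~\ref{defn.pnl.interp} $\tf f^\iden$ is a function from $\model{\alpha}$ to $\model{\tau}$, hence $\tf f^\iden(\denot{\mathcal I}{\varsigma}{r})\in\model{\tau}$. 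For an abstraction $[a]r:[\nu]\alpha$ with $a\in\mathbb A_\nu$ and $r:\alpha$, the inductive hypothesis gives $x=\denot{\mathcal I}{\varsigma}{r}\in\model{\alpha}$, and $\model{[\nu]\alpha}=[\mathbb A_\nu]\model{\alpha}$ whose carrier (Definition~\ref{defn.abstraction.sets}) contains exactly the objects $[a]x$ for $a\in\mathbb A_\nu$ and $x\in|\model{\alpha}|$, so $[a]x\in\model{[\nu]\alpha}$.

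None of these steps presents a genuine obstacle: each is a direct appeal to the matching clause of Definition~\ref{defn.interpret.terms} together with the corresponding clause of the sort-interpretation in Definition~\ref{defn.interpretation} and the well-definedness data in Definition~\ref{defn.pnl.interp}. The only point requiring a moment's care is the unknown case, where I must invoke closure of the permissive-nominal set $\model{\alpha}$ under the permutation action to absorb the $\pi$; this is exactly the group-action structure guaranteed by Definition~\ref{defn.perm.set}. I therefore expect the proof to be short, essentially a one-line-per-constructor verification, and the author's proof is likely to dispatch it as a routine induction.
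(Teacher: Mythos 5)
Your proof is correct and is exactly the argument the paper intends: the paper dispatches Lemma~\ref{lemm.sort.r} with the single line ``by a routine induction on $r$'', and your case-by-case verification (atoms into $\mathbb A_\nu$, closure of the carrier under the permutation action for $\pi\act X$, products, the typing of $\tf f^\iden$ from Definition~\ref{defn.pnl.interp}, and the atoms-abstraction carrier) is precisely that routine induction spelled out.
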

\begin{proof}
By a routine induction on $r$.
\end{proof} 

\begin{lemm}
\label{lemm.pi.r.model}
$\pi\act\denot{\mathcal I}{\varsigma}{r} = \denot{\mathcal I}{\varsigma}{\pi\act r}$.
\end{lemm}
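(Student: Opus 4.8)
The plan is to prove the identity by structural induction on a representative raw term $r$, following exactly the clauses of Definition~\ref{defn.interpret.terms} for the denotation and those of Definition~\ref{defn.permutation.action} for the syntactic permutation action. Since $\denot{\mathcal I}{\varsigma}{r}$ is defined on $\alpha$-equivalence classes, I would first observe that the statement is insensitive to the choice of representative: the permutation action on terms respects $\aeq$ by Lemma~\ref{lemm.aeq.equivar}, so an induction carried out on raw syntax descends to terms without further ado.

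There are two base cases. For an atom $a$, both sides compute to $\pi(a)$: on the left $\pi\act\denot{\mathcal I}{\varsigma}{a}=\pi\act a=\pi(a)$ using the natural action on $\mathbb A$ (Definition~\ref{defn.atoms.perm}), and on the right $\denot{\mathcal I}{\varsigma}{\pi\act a}=\denot{\mathcal I}{\varsigma}{\pi(a)}=\pi(a)$. For a moderated unknown $\pi'\act X$, the left-hand side is $\pi\act(\pi'\act\varsigma(X))$, which equals $(\pi\circ\pi')\act\varsigma(X)$ by the group-action law of Definition~\ref{defn.perm.set}; the right-hand side is $\denot{\mathcal I}{\varsigma}{\pi\act(\pi'\act X)}=\denot{\mathcal I}{\varsigma}{(\pi\circ\pi')\act X}=(\pi\circ\pi')\act\varsigma(X)$, so the two agree. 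This is the case that genuinely uses that the valuation lands in a carrier equipped with a bona fide permutation action.

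The inductive cases each push the permutation inward and invoke the induction hypothesis once. For a tuple $(r_1,\ldots,r_n)$ the product action of Definition~\ref{defn.times} distributes $\pi$ componentwise, and the hypothesis applies to each $r_i$. For an abstraction $[a]r$ the semantic clause $\pi\act[a]x=[\pi(a)]\pi\act x$ from Definition~\ref{defn.abstraction.sets} matches the syntactic clause $\pi\act[a]r=[\pi(a)]\pi\act r$, again modulo the hypothesis on $r$. The genuinely semantic step---and the only place where the proof uses more than bookkeeping---is the term-former case $\tf f(r)$: the left-hand side is $\pi\act\tf f^\iden(\denot{\mathcal I}{\varsigma}{r})$, and to commute $\pi$ past $\tf f^\iden$ one uses precisely that $\tf f^\iden$ is equivariant (Definition~\ref{defn.finite.equivariant}), yielding $\tf f^\iden(\pi\act\denot{\mathcal I}{\varsigma}{r})$, whereupon the induction hypothesis closes the case against the right-hand side $\denot{\mathcal I}{\varsigma}{\tf f(\pi\act r)}=\tf f^\iden(\denot{\mathcal I}{\varsigma}{\pi\act r})$.

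I expect no real obstacle here; the substance of the lemma is exactly that the equivariance requirement imposed on interpretations of term-formers (and the group-action law on carriers) is what makes the denotation commute with permutation, and every remaining case is routine symbol-pushing. If one wished to be scrupulous about the $\alpha$-quotient, the only thing to confirm is that $\denot{\mathcal I}{\varsigma}{-}$ is well defined on $\aeq$-classes, but this is presupposed by the very fact that Definition~\ref{defn.interpret.terms} assigns a denotation to terms $r$ rather than to raw terms.
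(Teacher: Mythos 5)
Your proof is correct and follows the same route as the paper: a structural induction on $r$, with the moderated-unknown case resolved by the group-action law of Definition~\ref{defn.perm.set} and the term-former case by equivariance of $\tf f^\iden$. The paper's own proof is exactly this induction (it spells out only the $\pi'\act X$ case and declares the rest routine), so your additional care about the $\alpha$-quotient and the remaining cases is just a more explicit rendering of the same argument.
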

\begin{proof}
By a routine induction on $r$. 
We consider one case:
\begin{itemize}
\item
The case $\pi'\act X$.\quad
By Definition~\ref{defn.interpret.terms} 
$\denot{\mathcal I}{\varsigma}{\pi'\act X} = \pi'\act \varsigma(X)$.
Therefore $\pi\act
\denot{\mathcal I}{\varsigma}{\pi'\act X} = \pi\act(\pi'\act \varsigma(X))$.
It is a fact of the group action (Definition~\ref{defn.perm.set}) that $\pi\act(\pi'\act\varsigma(X))=(\pi\circ\pi')\act\varsigma(X)$, and of the permutation action (Definition~\ref{defn.permutation.action}) that $\pi\act(\pi'\act X)\equiv (\pi\circ\pi')\act X$.
The result follows.
\qedhere
\end{itemize}
\end{proof}

\begin{lemm}
\label{lemm.supp.r}
$\f{supp}(\denot{\mathcal I}{\varsigma}{r})\subseteq\f{fa}(r)$.
\end{lemm}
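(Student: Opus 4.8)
The plan is to proceed by structural induction on $r$, treating $r$ as a representative raw term; in each case I will compute the denotation using Definition~\ref{defn.interpret.terms}, compute the free atoms using Definition~\ref{defn.fa}, and relate the two supports using the support lemmas established above. Well-definedness of the denotation on $\alpha$-equivalence classes, together with the invariance $\f{fa}(\rawr)=\f{fa}(\raws)$ for $\rawr\aeq\raws$ recorded in Lemma~\ref{lemm.fa.aeq}, means the choice of raw representative is immaterial.

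First the two base cases. For $r=a$, Definition~\ref{defn.interpret.terms} gives $\denot{\mathcal I}{\varsigma}{a}=a$, and by Lemma~\ref{lemm.properties.of.support} $\f{supp}(a)=\{a\}=\f{fa}(a)$, so the inclusion holds with equality. For $r=\pi\act X$ the denotation is $\pi\act\varsigma(X)$ and the free atoms are $\pi\act\pmss(X)$. Here I would use Lemma~\ref{lemm.supp.pi.x} to rewrite $\f{supp}(\pi\act\varsigma(X))=\pi\act\f{supp}(\varsigma(X))$, then invoke the valuation condition $\f{supp}(\varsigma(X))\subseteq\pmss(X)$ from Definition~\ref{defn.valuation} together with the fact that the pointwise action $\pi\act(\cdot)$ (Definition~\ref{defn.pointwise}) is monotone for $\subseteq$, concluding $\pi\act\f{supp}(\varsigma(X))\subseteq\pi\act\pmss(X)$.

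Then the inductive cases. For $r=\tf f(s)$ the denotation is $\tf f^\iden(\denot{\mathcal I}{\varsigma}{s})$ and $\f{fa}(\tf f(s))=\f{fa}(s)$; since $\tf f^\iden$ is equivariant, Lemma~\ref{lemm.equivar.no.increase.support} gives $\f{supp}(\tf f^\iden(x))\subseteq\f{supp}(x)$, and composing with the inductive hypothesis $\f{supp}(\denot{\mathcal I}{\varsigma}{s})\subseteq\f{fa}(s)$ finishes this case. For a tuple $r=(r_1,\ldots,r_n)$ I would use the third clause of Lemma~\ref{lemm.properties.of.support} to split the support as a union and apply the inductive hypotheses componentwise. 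For the abstraction $r=[a]s$ I would use the second clause of Lemma~\ref{lemm.properties.of.support}, namely $\f{supp}([a]x)=\f{supp}(x)\setminus\{a\}$, together with $\f{fa}([a]s)=\f{fa}(s)\setminus\{a\}$ and the inductive hypothesis, noting that deleting $a$ from both sides preserves the inclusion.

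None of these steps presents a genuine difficulty; the real work is bookkeeping, namely ensuring that the support lemma invoked in each case matches the defining clause of $\f{supp}$ for that constructor. The one conceptual point worth flagging is the unknown case, where the inclusion $\f{supp}(\varsigma(X))\subseteq\pmss(X)$ is precisely the constraint imposed on valuations in Definition~\ref{defn.valuation}: without it the lemma would fail, so this is exactly where the permission-set discipline does its work.
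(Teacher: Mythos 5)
Your proof is correct and follows essentially the same route as the paper's: a structural induction using Lemma~\ref{lemm.properties.of.support} for atoms, tuples, and abstraction, Lemma~\ref{lemm.equivar.no.increase.support} for term-formers, and the valuation condition $\f{supp}(\varsigma(X))\subseteq\pmss(X)$ (via Lemma~\ref{lemm.supp.pi.x}) for the unknown case. Your observation that the unknown case is where the permission-set discipline does the work matches the paper's emphasis on that case as the only one treated in detail.
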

\begin{proof}
By a routine induction on $r$.
We consider one case in detail:
\begin{itemize}
\item
The case $\pi\act X$.\quad
$\f{fa}(\pi\act X)=\pi\act\pmss(X)$ by Definition~\ref{defn.fa}.
By assumption in Definition~\ref{defn.valuation} $\f{supp}(\varsigma(X))\subseteq\pmss(X)$.
\end{itemize} 
The cases of $a$, $[a]r$, and $[a]r$ use parts~1, 2, and 3 of Lemma~\ref{lemm.properties.of.support}.
The case of $\tf f$ uses Lemma~\ref{lemm.equivar.no.increase.support}.
\end{proof}

\subsubsection{Interpretation of propositions} 

\begin{defn}
Suppose $\varsigma$ is a valuation to an interpretation $\mathcal I$.
Suppose $X$ is an unknown and $x\in \model{\sort(X)}$ is such that $\f{supp}(x)\subseteq \pmss(X)$.
Define a function $\varsigma[X\ssm x]$ by
$$
(\varsigma[X\ssm x])(Y)=\varsigma(Y)
\quad\text{and}\quad
(\varsigma[X\ssm x])(X)=x .
$$
\end{defn}
It is easy to verify that $\varsigma[X\ssm x]$ is also a valuation to $\mathcal I$.

\begin{defn}
\label{defn.truth}
Suppose that $\mathcal I$ is an interpretation.
Define an \deffont{interpretation of propositions} by:
\begin{frameqn}
\begin{array}{r@{\ }l}
\denot{\mathcal I}{\varsigma}{\tf P(r)} =&
\tf P^\iden(\denot{\mathcal I}{\varsigma}{r})
\\
\denot{\mathcal I}{\varsigma}{\bot}=& 0
\\
\denot{\mathcal I}{\varsigma}{\phi\limp\psi}=& 
\f{max}\{1{-}\denot{\mathcal I}{\varsigma}{\phi},\denot{\mathcal I}{\varsigma}{\psi}\}
\\
\denot{\mathcal I}{\varsigma}{\Forall{X}\phi}=&\f{min}\{\denot{\mathcal I}{\varsigma[X{\ssm} x]}{\phi}\mid x{\in} \model{\sort(X)},\, \f{supp}(x){\subseteq} \pmss(X)
\}
\end{array}
\end{frameqn}
\end{defn}

\begin{lemm}
\label{lemm.no.change}
$
\denot{\mathcal I}{\varsigma}{\phi}=\denot{\mathcal I}{\varsigma}{\pi\act\phi}$ always.
\end{lemm}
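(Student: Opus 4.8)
The plan is to argue by induction on the structure of $\phi$, proving the statement for all valuations $\varsigma$ and all permutations $\pi$ \emph{simultaneously}, so that the inductive hypothesis is available at the modified valuations $\varsigma[X\ssm x]$ that arise in the quantifier case. The single observation driving every case is that the truth-value set $\{0,1\}$ carries the \emph{trivial} permutation action (Definition~\ref{defn.atoms.perm}): applying $\pi$ to any denotation in $\{0,1\}$ does nothing. Hence the content of the lemma is not that truth values are permuted back, but that permuting the \emph{syntax} of $\phi$ leaves its truth value fixed outright.

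The atomic case $\tf P(r)$ is where the real work sits, and I expect it to be the main (though still routine) obstacle. Using $\pi\act\tf P(r)=\tf P(\pi\act r)$ (Definition~\ref{defn.permutation.action}) and the clause for $\tf P$ in Definition~\ref{defn.truth}, I would reduce $\denot{\mathcal I}{\varsigma}{\pi\act\tf P(r)}$ to $\tf P^\iden(\denot{\mathcal I}{\varsigma}{\pi\act r})$. By Lemma~\ref{lemm.pi.r.model} this equals $\tf P^\iden(\pi\act\denot{\mathcal I}{\varsigma}{r})$, and equivariance of $\tf P^\iden$ (Definitions~\ref{defn.pnl.interp} and~\ref{defn.finite.equivariant}) rewrites it as $\pi\act\tf P^\iden(\denot{\mathcal I}{\varsigma}{r})$. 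Since this value lies in $\{0,1\}$ with trivial action, the leading $\pi$ vanishes, leaving exactly $\denot{\mathcal I}{\varsigma}{\tf P(r)}$. The knack is to chain these three ingredients in the right order.

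The propositional cases are bookkeeping. For $\bot$ we have $\pi\act\bot=\bot$, so both sides are $0$. For $\phi\limp\psi$ we have $\pi\act(\phi\limp\psi)=(\pi\act\phi)\limp(\pi\act\psi)$; applying the inductive hypothesis to $\phi$ and to $\psi$ shows that both arguments of the $\f{max}$ in Definition~\ref{defn.truth} are unchanged, hence so is the result. The quantifier case $\Forall{X}\phi$ is where the strengthened hypothesis pays off: since $\pi$ is a level-1 permutation it does not touch the bound unknown $X$, giving $\pi\act(\Forall{X}\phi)=\Forall{X}(\pi\act\phi)$ with no capture to manage. Both $\denot{\mathcal I}{\varsigma}{\Forall{X}\phi}$ and $\denot{\mathcal I}{\varsigma}{\pi\act\Forall{X}\phi}$ are therefore minima over the \emph{same} index set $\{x\in\model{\sort(X)}\mid\f{supp}(x)\subseteq\pmss(X)\}$, and applying the inductive hypothesis to $\phi$ at each valuation $\varsigma[X\ssm x]$ (with the same $\pi$) shows the two families of truth values agree term by term, so the minima coincide. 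This invariance is exactly the semantic counterpart of the permutation $\pi$ permitted in the axiom rule \rulefont{Ax}, which is why it should, and does, come out cleanly.
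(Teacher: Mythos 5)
Your proposal is correct and follows essentially the same route as the paper's proof: a structural induction on $\phi$ whose only substantive case is $\tf P(r)$, handled by chaining the denotation clause, Lemma~\ref{lemm.pi.r.model}, and equivariance of $\tf P^\iden$ into the trivial action on $\{0,1\}$, with the $\forall$ case discharged by applying the inductive hypothesis at the modified valuations $\varsigma[X\ssm x]$ over an unchanged index set. The only cosmetic difference is that you make the quantification over all $\varsigma$ and $\pi$ in the induction hypothesis explicit, which the paper leaves implicit.
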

\begin{proof}
By induction on $\phi$.
We consider two cases:
\begin{itemize}
\item
The case $\Forall{X}\phi$.\quad
Suppose $\denot{\mathcal I}{\varsigma}{\Forall{X}\phi} = 1$.
This means that $\denot{\mathcal I}{\varsigma[X\ssm x]}{\phi} = 1$ 
for all $x\in\model{\alpha}$ such that $\f{supp}(x)\subseteq \pmss(X)$.
By inductive hypothesis $\denot{\mathcal I}{\varsigma[X\ssm
    x]}{\pi\act\phi} = 1$ for all $x\in\model{\alpha}$ such that 
$\f{supp}(x)\subseteq \pmss(X)$.
Therefore $\denot{\mathcal I}{\varsigma}{\Forall{X}\pi\act\phi} = 1$.
The result follows, since $\pi\act(\Forall{X}\phi)\equiv \Forall{X}\pi\act\phi$.
\item
The case $\tf P(r)$.\quad
We have 
$\denot{\mathcal I}{\varsigma}{\tf P(r)} = 
\tf P^\iden
(\denot{\mathcal I}{\varsigma}{r})$.
As $\tf P^\iden$ is equivariant, we get
$\denot{\mathcal I}{\varsigma}{\tf P(r)} = 
\tf P^\iden
(\pi\act \denot{\mathcal I}{\varsigma}{r})$.
By Lemma~\ref{lemm.pi.r.model} $\pi\act\denot{\mathcal
  I}{\varsigma}{r} = 
\denot{\mathcal I}{\varsigma}{\pi\act r}$.
Thus 
$\denot{\mathcal I}{\varsigma}{\tf P(r)} = 
\tf P^\iden(\denot{\mathcal I}{\varsigma}{\pi\act r}) = 
\denot{\mathcal I}{\varsigma}{\pi\act \tf P(r)}$. 
\qedhere
\end{itemize}
\end{proof}

\maketab{tab0}{@{\hspace{-2em}}R{8em}@{\ \ $=$\ \ }L{8em}L{10em}}

\begin{lemm}
\label{lemm.denotsub}
\begin{itemize*}
\item
$\denot{\mathcal I}{\varsigma[X\ssm \denot{\mathcal I}{\varsigma}{t}]}{r}
=\denot{\mathcal I}{\varsigma}{r[X\ssm t]}$.
\item
$\denot{\mathcal I}{\varsigma[X\ssm \denot{\mathcal I}{\varsigma}{t}]}{\phi} =
\denot{\mathcal I}{\varsigma}{\phi[X\ssm t]}$.
\end{itemize*}
\end{lemm}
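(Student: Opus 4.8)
The plan is to prove the two clauses by induction: the first by induction on the structure of the term $r$, and the second by induction on the structure of the proposition $\phi$, where the base case $\tf P(r)$ of the second induction appeals to the first. Throughout I write $\varsigma'$ for the updated valuation $\varsigma[X\ssm\denot{\mathcal I}{\varsigma}{t}]$ and $\sigma$ for the substitution $[X\ssm t]$. Before starting I would record that $\varsigma'$ really is a valuation to $\mathcal I$: by Lemma~\ref{lemm.supp.r} we have $\f{supp}(\denot{\mathcal I}{\varsigma}{t})\subseteq\f{fa}(t)$, and the hypothesis $\f{fa}(t)\subseteq\pmss(X)$ of Definition~\ref{defn.atomic.sub} then yields $\f{supp}(\varsigma'(X))\subseteq\pmss(X)$, as Definition~\ref{defn.valuation} demands. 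I would also isolate one routine auxiliary \emph{coincidence} fact, proved by an easy induction: $\denot{\mathcal I}{\varsigma_1}{r}=\denot{\mathcal I}{\varsigma_2}{r}$ whenever $\varsigma_1$ and $\varsigma_2$ agree on $\f{fV}(r)$, and likewise for propositions.

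For the term induction the only interesting case is $r=\pi\act Y$, which splits on whether $Y$ is $X$. If $Y\equiv X$, then $(\pi\act X)\sigma\equiv\pi\act t$ by Definition~\ref{defn.subst.action}, so $\denot{\mathcal I}{\varsigma}{(\pi\act X)\sigma}=\denot{\mathcal I}{\varsigma}{\pi\act t}=\pi\act\denot{\mathcal I}{\varsigma}{t}$ using Lemma~\ref{lemm.pi.r.model}, and this is exactly $\pi\act\varsigma'(X)=\denot{\mathcal I}{\varsigma'}{\pi\act X}$. If $Y\not\equiv X$ then $\sigma$ fixes $Y$, and $\varsigma$ and $\varsigma'$ agree at $Y$, so both sides equal $\pi\act\varsigma(Y)$. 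The cases $a$, $\tf f(r)$, $(r_1,\ldots,r_n)$, and $[a]r$ are immediate from Definitions~\ref{defn.subst.action} and~\ref{defn.interpret.terms} together with the inductive hypothesis; in particular the abstraction case needs no $\alpha$-renaming, since level 2 substitution is deliberately capturing for $[a]\text{-}$, so $([a]r)\sigma\equiv[a](r\sigma)$ and the result follows by applying the hypothesis underneath the abstraction.

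For the proposition induction, $\bot$ is trivial, $\phi\limp\psi$ follows from the hypothesis, and $\tf P(r)$ reduces to the first clause via Definition~\ref{defn.truth}. The substantive case is $\Forall{Y}\phi$. Here I would first use that we work on $\alpha$-equivalence classes to rename the bound $Y$ so that $Y\not\in\f{nontriv}(\sigma)$; since $\f{nontriv}([X\ssm t])=\{X\}\cup\f{fV}(t)$ (immediate from Definition~\ref{defn.nontriv.theta}), this amounts to choosing $Y$ distinct from $X$ and not free in $t$. By Definition~\ref{defn.subst.action} the substitution then commutes with the quantifier, $(\Forall{Y}\phi)\sigma\equiv\Forall{Y}(\phi\sigma)$, so by Definition~\ref{defn.truth} the left-hand side is the minimum over all admissible $y$ of $\denot{\mathcal I}{\varsigma[Y\ssm y]}{\phi\sigma}$. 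For each such $y$ the inductive hypothesis, applied with $\varsigma[Y\ssm y]$ in place of $\varsigma$, rewrites this as $\denot{\mathcal I}{(\varsigma[Y\ssm y])[X\ssm\denot{\mathcal I}{\varsigma[Y\ssm y]}{t}]}{\phi}$.

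The crux is to identify this valuation with $\varsigma'[Y\ssm y]$. Since $Y\not\in\f{fV}(t)$, the coincidence fact gives $\denot{\mathcal I}{\varsigma[Y\ssm y]}{t}=\denot{\mathcal I}{\varsigma}{t}$; and since $X\not\equiv Y$, the two updates commute, so $(\varsigma[Y\ssm y])[X\ssm\denot{\mathcal I}{\varsigma}{t}]$ and $(\varsigma[X\ssm\denot{\mathcal I}{\varsigma}{t}])[Y\ssm y]=\varsigma'[Y\ssm y]$ agree at every unknown and are therefore equal. As the admissible range $\{y\in\model{\sort(Y)}\mid\f{supp}(y)\subseteq\pmss(Y)\}$ depends on neither $\varsigma$ nor $\sigma$, the two minima are taken over the same index set with termwise-equal entries, giving $\denot{\mathcal I}{\varsigma}{(\Forall{Y}\phi)\sigma}=\denot{\mathcal I}{\varsigma'}{\Forall{Y}\phi}$. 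I expect this quantifier case—managing the freshness side-condition on $Y$ and then commuting the valuation updates through the coincidence fact—to be the only real obstacle; every other case is bookkeeping.
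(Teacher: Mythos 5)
Your proof is correct and follows essentially the same route as the paper: a routine induction on terms and propositions whose only substantive case is $\pi\act X$, resolved via Lemma~\ref{lemm.pi.r.model}, with the $\tf P(r)$ case of the second clause appealing to the first. The paper leaves the $\forall$ case as ``routine''; your careful treatment of it (renaming the bound unknown, the coincidence fact---which is the paper's Lemma~\ref{lemm.fV.denot}---and commuting the valuation updates) fills in exactly the bookkeeping the authors omit.
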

\begin{proof}
By routine inductions on the definitions of 
$\denot{\mathcal I}{\varsigma}{r}$ and 
$\denot{\mathcal I}{\varsigma}{\phi}$ in 
Definitions~\ref{defn.interpret.terms} and~\ref{defn.truth}.
We consider two cases:
\begin{itemize*}
\item
The case of $\denot{\mathcal I}{\varsigma[X\ssm t]}{\pi\act X}$.\quad
We reason as follows:
\begin{tab0}
\denot{\mathcal I}{\varsigma[X\ssm \denot{\mathcal I}{\varsigma}{t}]}{\pi\act X}
& \pi\act \denot{\mathcal I}{\varsigma}{t} 
&\text{Definition~\ref{defn.interpret.terms}}
\\
& \denot{\mathcal I}{\varsigma}{\pi\act t}
&\text{Lemma~\ref{lemm.pi.r.model}}
\\
& \denot{\mathcal I}{\varsigma}{(\pi\act X)[X\ssm t]}
&\text{Definition~\ref{defn.subst.action}} .
\end{tab0}
\item
The case of $\denot{\mathcal I}{\varsigma[X\ssm t]}{\tf P(r)}$.
\quad
We reason as follows:
\begin{tab0}
\denot{\mathcal I}{\varsigma[X\ssm \denot{\mathcal I}{\varsigma}{t}]}{\tf P(r)}
&
{\tf P}^\iden(\denot{\mathcal I}{\varsigma[X\ssm \denot{\mathcal I}{\varsigma}{t}]}{r})
&\text{Definition~\ref{defn.truth}}
\\
&
{\tf P}^\iden(\denot{\mathcal I}{\varsigma}{r[X\ssm t]}) 
&\text{Part~1 of this result}
\\
&
\denot{\mathcal I}{\varsigma}{\tf P(r)[X\ssm t]} 
&\text{Definition~\ref{defn.truth}} .
\end{tab0}
\end{itemize*}
\end{proof}

\begin{lemm}
\label{lemm.fV.denot}
If $\varsigma(X)=\varsigma'(X)$ for all $X\in\f{fV}(r)$ then $\denot{\mathcal I}{\varsigma}{r}=\denot{\mathcal I}{\varsigma'}{r}$, and similarly for $\phi$.
\end{lemm}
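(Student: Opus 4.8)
The plan is to prove the two statements by structural induction, handling terms first and then propositions, since terms never mention propositions and the proposition clauses only feed back into terms through $\tf P(r)$. For terms I would induct on $r$ following the clauses of Definition~\ref{defn.interpret.terms}; for propositions I would induct on $\phi$ following Definition~\ref{defn.truth}, invoking the already-established term statement in the $\tf P(r)$ clause. Throughout, the hypothesis is that $\varsigma$ and $\varsigma'$ agree on $\f{fV}$ of the expression at hand, and the inductive step transmits this to the immediate subexpressions.

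First I would dispatch the base cases. For an atom $a$ we have $\denot{\mathcal I}{\varsigma}{a}=a=\denot{\mathcal I}{\varsigma'}{a}$ independently of the valuation, and indeed $\f{fV}(a)=\varnothing$ so there is nothing to assume. The case $\pi\act X$ is where the hypothesis is actually used: here $\f{fV}(\pi\act X)=\{X\}$, so the assumption forces $\varsigma(X)=\varsigma'(X)$, whence $\denot{\mathcal I}{\varsigma}{\pi\act X}=\pi\act\varsigma(X)=\pi\act\varsigma'(X)=\denot{\mathcal I}{\varsigma'}{\pi\act X}$.

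The remaining formers $\tf f(r)$, $(r_1,\ldots,r_n)$, $[a]r$, $\tf P(r)$, $\bot$, and $\phi\limp\psi$ are routine propagations. In each case $\f{fV}$ of the compound is the union of $\f{fV}$ of its immediate subexpressions (equal to $\f{fV}(r)$ for $\tf f(r)$, $[a]r$, and $\tf P(r)$), so $\varsigma$ and $\varsigma'$ agree on the free unknowns of every subexpression; the inductive hypothesis then equates the sub-interpretations, and the outer operation --- a product, an equivariant $\tf f^\iden$ or $\tf P^\iden$, atoms-abstraction $[a](\text{-})$, or the Boolean operation defining $\bot$ and $\limp$ --- is applied to equal arguments, giving equal results.

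The one clause requiring care is $\Forall{X}\phi$, where $\f{fV}(\Forall{X}\phi)=\f{fV}(\phi)\setminus\{X\}$ does not contain the bound $X$. I would fix an arbitrary $x\in\model{\sort(X)}$ with $\f{supp}(x)\subseteq\pmss(X)$ and verify that $\varsigma[X\ssm x]$ and $\varsigma'[X\ssm x]$ agree on all of $\f{fV}(\phi)$: they agree at $X$ since both send it to $x$, and for any other $Y\in\f{fV}(\phi)$ we have $Y\in\f{fV}(\phi)\setminus\{X\}$, so the hypothesis gives $\varsigma(Y)=\varsigma'(Y)$. The inductive hypothesis applied to $\phi$ then yields $\denot{\mathcal I}{\varsigma[X\ssm x]}{\phi}=\denot{\mathcal I}{\varsigma'[X\ssm x]}{\phi}$ for every admissible $x$, and taking the minimum over $x$ gives the desired equality of $\denot{\mathcal I}{\varsigma}{\Forall{X}\phi}$ and $\denot{\mathcal I}{\varsigma'}{\Forall{X}\phi}$. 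This bound-variable bookkeeping is the only real content; there is no serious obstacle, this being the standard coincidence lemma for the denotation.
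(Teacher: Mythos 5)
Your proof is correct and is exactly the argument the paper intends: the paper dismisses this as ``a routine induction on $r$ and $\phi$,'' and your write-up simply fills in that induction, with the only substantive steps being the $\pi\act X$ case (where the hypothesis is used) and the $\Forall{X}\phi$ case (where the updated valuations are checked to agree on $\f{fV}(\phi)$). Nothing further is needed.
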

\begin{proof}
By a routine induction on $r$ and $\phi$.
\end{proof}

\subsubsection{Validity and soundness} 

\begin{defn}[Validity]
\label{defn.pnl.ment}
Call the proposition $\phi$ \deffont{valid} in ${\mathcal I}$ when 
$\denot{\mathcal I}{\varsigma}{\phi} = 1$ for all $\varsigma$. 

Call the sequent $\phi_1, ..., \phi_n \cent \psi_1, ..., \psi_p$ \deffont{valid} 
in ${\mathcal I}$ when 
$(\phi_1 \wedge ... \wedge \phi_n) \Rightarrow 
(\psi_1 \vee ... \vee \psi_p)$ is valid. 
\end{defn}

\begin{thrm}[Soundness]
\label{thrm.soundness}
If $\Phi\cent\Psi$ is derivable, then it is valid in all interpretations.
\end{thrm}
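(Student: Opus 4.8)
The plan is to proceed by induction on the derivation of $\Phi\cent\Psi$, with a case analysis on the last rule of Figure~\ref{Seq} applied, showing in each case that validity of the conclusion follows from validity of the premises (Definition~\ref{defn.pnl.ment}). Throughout I would unfold validity of a sequent $\Phi\cent\Psi$ in $\mathcal I$ concretely as: for every valuation $\varsigma$, if $\denot{\mathcal I}{\varsigma}{\phi}=1$ for all $\phi\in\Phi$, then $\denot{\mathcal I}{\varsigma}{\psi}=1$ for some $\psi\in\Psi$ --- reading this off the $\f{min}$/$\f{max}$ clauses of Definition~\ref{defn.truth}, since all denotations are in $\{0,1\}$.

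The two axiom-style cases are immediate. For \rulefont{\bot L} the left-hand side always contains $\denot{\mathcal I}{\varsigma}{\bot}=0$, so there is no $\varsigma$ making the antecedent true and the implication holds vacuously. For \rulefont{Ax} I would invoke Lemma~\ref{lemm.no.change}: if $\denot{\mathcal I}{\varsigma}{\phi}=1$ then $\denot{\mathcal I}{\varsigma}{\pi\act\phi}=\denot{\mathcal I}{\varsigma}{\phi}=1$, so the distinguished right-hand formula is already satisfied. The propositional rules \rulefont{{\limp}L}, \rulefont{{\limp}R}, and \rulefont{Cut} then reduce to pure two-valued Boolean reasoning on the $\f{max}$ clause for $\limp$; these I would dispatch by truth-table bookkeeping with $\varsigma$ held fixed, reading the conclusion off the premises.

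The interesting cases are the two quantifier rules, where the nominal infrastructure enters. For \rulefont{{\forall}R}, with eigen-unknown $X\notin\f{fV}(\Phi,\Psi)$, I would fix a $\varsigma$ making every formula of $\Phi$ true and (working towards the contrapositive) every formula of $\Psi$ false. For each $x\in\model{\sort(X)}$ with $\f{supp}(x)\subseteq\pmss(X)$, Lemma~\ref{lemm.fV.denot} together with the side-condition guarantees that $\varsigma[X\ssm x]$ agrees with $\varsigma$ on all formulas of $\Phi$ and $\Psi$; so the premise, applied at $\varsigma[X\ssm x]$, forces $\denot{\mathcal I}{\varsigma[X\ssm x]}{\phi}=1$. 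As $x$ was arbitrary, this is exactly $\denot{\mathcal I}{\varsigma}{\Forall{X}\phi}=1$ by the $\f{min}$ clause. For \rulefont{{\forall}L} I would take the semantic witness $x=\denot{\mathcal I}{\varsigma}{r}$: the side-condition $\f{fa}(r)\subseteq\pmss(X)$ with Lemma~\ref{lemm.supp.r} yields $\f{supp}(x)\subseteq\pmss(X)$, so $x$ is a legitimate point of the $\f{min}$ defining $\denot{\mathcal I}{\varsigma}{\Forall{X}\phi}$, while Lemma~\ref{lemm.denotsub} identifies $\denot{\mathcal I}{\varsigma[X\ssm x]}{\phi}$ with $\denot{\mathcal I}{\varsigma}{\phi[X\ssm r]}$; feeding this into the premise closes the case.

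I expect the main obstacle to be the \rulefont{{\forall}L} case, since it is the step that genuinely exercises the permission-set discipline. The whole purpose of the condition $\f{fa}(r)\subseteq\pmss(X)$ in the rule is precisely to ensure, via Lemma~\ref{lemm.supp.r}, that the witness $\denot{\mathcal I}{\varsigma}{r}$ falls within the range $\{x\mid\f{supp}(x)\subseteq\pmss(X)\}$ over which the $\f{min}$ of Definition~\ref{defn.truth} is taken. Verifying that this syntactic side-condition lines up exactly with the semantic support bound --- neither admitting witnesses of too-large support nor excluding legitimate ones --- is the crux; once it is in place, the remaining manipulations are the same routine substitution- and Boolean-bookkeeping as in ordinary first-order soundness.
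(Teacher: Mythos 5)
Your proposal is correct and follows essentially the same route as the paper: induction on derivations, with Lemma~\ref{lemm.no.change} for \rulefont{Ax}, Lemma~\ref{lemm.denotsub} for \rulefont{{\forall}L}, Lemma~\ref{lemm.fV.denot} for \rulefont{{\forall}R}, and routine Boolean reasoning elsewhere. Your explicit appeal to Lemma~\ref{lemm.supp.r} to check that the witness $\denot{\mathcal I}{\varsigma}{r}$ has support inside $\pmss(X)$ is a detail the paper's terse proof leaves implicit, but it is exactly the right observation and does not change the argument.
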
 
\begin{proof}
By induction on derivations (Figure~\ref{Seq}). 
The case of \rulefont{Ax} uses Lemma~\ref{lemm.no.change}.
The case of \rulefont{\forall L} uses Lemma~\ref{lemm.denotsub}.
The case of \rulefont{\forall R} uses Lemma~\ref{lemm.fV.denot}.
Other rules are routine by unpacking definitions.
\end{proof}

\subsection{Completeness}
\label{subsect.completeness}

In this subsection we prove Theorem~\ref{thrm.completeness}: a converse to Theorem~\ref{thrm.soundness}, that if $\phi$ is valid in all models, then $\phi$ it is derivable.

For this subsection fix the following data:
\begin{itemize*}
\item
A signature $\mathcal S=(\mathcal A,\mathcal B,\mathcal F,\mathcal P,\f{ar})$.
\item
A formula $\phi$ such that $\not\cent\phi$.
\end{itemize*}
We will construct an interpretation $\mathcal I$ and a valuation $\varsigma$ (Definition~\ref{defn.interpretation}) such that $\denot{\mathcal I}{\varsigma}{\phi} = 0$.
This suffices to prove the result.

\subsubsection{Maximally consistent set of propositions}

\begin{defn}
\label{defn.herbrand}
Choose a fixed but arbitrary enumeration of propositions $\xi_1$, $\xi_2$, $\xi_3$, \ldots

Define $\Phi_1=\{\neg\phi\}$.
For each $i\geq 1$ we define $\Phi_{i+1}$ as follows:
\begin{itemize*}
\item
If $\Phi_i\cent\xi_i$ then write $\xi=\xi_i$.
\item
If $\Phi_i\cent\neg\xi_i$ then write $\xi=\neg\xi_i$.
\item
If $\Phi_i\not\cent\xi_i$ and $\Phi_i\not\cent\neg\xi_i$ then write $\xi=\xi_i$.
\end{itemize*}
There are now two cases:
\begin{itemize*}
\item
If $\xi$ has the form $\neg\Forall{X}\xi'$ then we define $\Phi_{i+1}=\Phi_i\cup\{\xi,\neg\xi'[X\ssm Z]\}$ where $Z$ is some fixed but arbitrary choice of unknown that is not free in any proposition in $\Phi_i$ and is such that $\pmss(Z)=\pmss(X)$ and $\sort(Z)=\sort(X)$.
\item
Otherwise, we define $\Phi_{i+1}=\Phi_i\cup\{\xi\}$.
\end{itemize*} 
Finally, we define $\Phi_\omega$ by
$\Phi_\omega=\bigcup_i \Phi_i$.
\end{defn}

\begin{lemm}
\label{lemm.2}
For every $i$,\ \ $\Phi_i\not\cent\bot$.
\end{lemm}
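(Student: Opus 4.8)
The plan is to argue by induction on $i$, having first recorded a few standard facts about the classical sequent calculus of Figure~\ref{Seq}. Writing $\neg A$ for $A\limp\bot$ (Remark~\ref{rmrk.no.exists}), I will use throughout that weakening is admissible --- this is immediate from the fact that \rulefont{Ax} and \rulefont{\bot L} carry arbitrary side-contexts $\Phi,\Psi$ and every other rule merely threads them through --- together with three routine consequences of the rules: (i) if $\Phi\cent A$ and $\Phi,A\cent\bot$ then $\Phi\cent\bot$ (one use of \rulefont{Cut} on $A$); (ii) if $\Phi,A\cent\bot$ then $\Phi\cent\neg A$ (one use of \rulefont{{\limp}R}); and (iii) if $\Phi,\neg A\cent\bot$ then $\Phi\cent A$ (apply (ii) with $\neg A$ in place of $A$ to get $\Phi\cent\neg\neg A$, then \rulefont{Cut} against the derivable sequent $\neg\neg A\cent A$, itself built from \rulefont{Ax}, \rulefont{\bot L}, \rulefont{{\limp}R}, and \rulefont{{\limp}L}). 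For the base case $\Phi_1=\{\neg\phi\}$: were $\neg\phi\cent\bot$, fact~(iii) with $\Phi=\varnothing$ and $A=\phi$ would give $\cent\phi$, contradicting the standing assumption $\not\cent\phi$.

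For the inductive step I assume $\Phi_i\not\cent\bot$ and first show $\Phi_i\cup\{\xi\}\not\cent\bot$, splitting on how $\xi$ was chosen in Definition~\ref{defn.herbrand}. If $\xi=\xi_i$ because $\Phi_i\cent\xi_i$, or $\xi=\neg\xi_i$ because $\Phi_i\cent\neg\xi_i$, then $\Phi_i\cup\{\xi\}\cent\bot$ would give $\Phi_i\cent\bot$ by fact~(i), a contradiction. If instead $\Phi_i\not\cent\xi_i$ and $\Phi_i\not\cent\neg\xi_i$ (so $\xi=\xi_i$), then $\Phi_i,\xi_i\cent\bot$ would give $\Phi_i\cent\neg\xi_i$ by fact~(ii), contradicting $\Phi_i\not\cent\neg\xi_i$. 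In every case $\Phi_i\cup\{\xi\}\not\cent\bot$, which already finishes the ``otherwise'' branch where $\Phi_{i+1}=\Phi_i\cup\{\xi\}$.

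It remains to treat the branch $\xi=\neg\Forall{X}\xi'$, where $\Phi_{i+1}=\Phi_i\cup\{\xi,\neg\xi'[X\ssm Z]\}$ with $Z$ fresh. Write $\Gamma=\Phi_i\cup\{\xi\}$, so $\Gamma\not\cent\bot$ by the previous paragraph, and suppose toward a contradiction that $\Gamma,\neg\xi'[X\ssm Z]\cent\bot$. By fact~(iii), $\Gamma\cent\xi'[X\ssm Z]$. Here $Z$ is chosen with $\sort(Z)=\sort(X)$ and $\pmss(Z)=\pmss(X)$ and not free in $\Gamma$, so $[X\ssm Z]$ is a legitimate substitution (Definition~\ref{defn.atomic.sub}) and $Z\notin\f{fV}(\Gamma)$; hence \rulefont{{\forall}R} applies to give $\Gamma\cent\Forall{Z}(\xi'[X\ssm Z])$. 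Since $Z$ is also fresh for $\xi$, so that $Z\notin\f{fV}(\Forall{X}\xi')$, the two propositions $\Forall{Z}(\xi'[X\ssm Z])$ and $\Forall{X}\xi'$ are $\alpha$-equivalent and therefore equal as elements of the $\alpha$-quotiented syntax (Definition~\ref{defn.terms.and.propositions}); thus $\Gamma\cent\Forall{X}\xi'$. But $\xi=\neg\Forall{X}\xi'\in\Gamma$, so fact~(i) with $A=\Forall{X}\xi'$ yields $\Gamma\cent\bot$, contradicting $\Gamma\not\cent\bot$. Hence $\Phi_{i+1}\not\cent\bot$.

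The main obstacle is this last, Henkin-style, case: everything hinges on the freshness side-condition of \rulefont{{\forall}R} being met by $Z$ and on the bound-unknown renaming $\Forall{Z}(\xi'[X\ssm Z])\aeq\Forall{X}\xi'$ going through, which is precisely why Definition~\ref{defn.herbrand} insists that $Z$ match the sort and permission set of $X$ and be fresh. The earlier cases, and facts~(i)--(iii), are routine classical sequent-calculus manipulations that I would not spell out in full.
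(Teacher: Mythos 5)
Your proof is correct and follows essentially the same route as the paper's: induction on $i$, with the first branch handled by showing that $\Phi_i,\xi\cent\bot$ would force $\Phi_i\cent\neg\xi$ and hence (by the case selection in Definition~\ref{defn.herbrand}) $\Phi_i\cent\bot$. The only difference is that you work out in full the Henkin-witness branch (via \rulefont{{\forall}R} and the $\alpha$-renaming $\Forall{Z}(\xi'[X\ssm Z])\aeq\Forall{X}\xi'$), which the paper dismisses as ``similar''; your treatment of it is the intended one and is sound.
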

\begin{proof}
By induction on $i$:
\begin{itemize*}
\item
By definition $\Phi_1=\{\neg\phi\}$.
As $\not\cent\phi$, we have $\neg \phi \not\cent \bot$ 
\item
Suppose $\Phi_i\not\cent\bot$. 
Either $\Phi_{i+1}=\Phi_i\cup\{\neg\xi\}$ or
$\Phi_{i+1}=\Phi_i\cup\{\neg\xi,\neg\xi'[X\ssm Z]\}$---we consider
the first, simpler case; the second case is similar.  Suppose $\Phi_i,
\xi \cent \bot$. It follows that $\Phi_i \cent \neg \xi$. So we are
not in the third case of Definition \ref{defn.herbrand} and we are
either in the first or the second. So $\Phi_i \cent \xi$ and thus
$\Phi_i \cent \bot$; a contradiction.
\qedhere\end{itemize*}
\end{proof}

\begin{lemm}
\label{lemm.3}
$\Phi_{\omega} \not\cent \bot$.
\end{lemm}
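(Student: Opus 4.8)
The plan is to argue by contradiction, reducing a putative derivation of $\Phi_\omega\cent\bot$ to a derivation of $\Phi_i\cent\bot$ at some finite stage $i$, which Lemma~\ref{lemm.2} forbids. The two ingredients I would assemble are a \emph{finiteness} (compactness) property of derivations and the observation that the family $(\Phi_i)_i$ is increasing.

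First I would record that weakening is admissible for the calculus of Figure~\ref{Seq}: the axiom rules \rulefont{Ax} and \rulefont{\bot L} already carry arbitrary side-contexts $\Phi,\Psi$, and every other rule simply threads its context through unchanged while introducing only finitely many principal formulae. Hence if $\Phi\cent\Psi$ is derivable and $\Phi\subseteq\Phi'$ and $\Psi\subseteq\Psi'$, then $\Phi'\cent\Psi'$ is derivable as well.

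Next comes the crucial step: any derivation is a finite tree, and along it only finitely many left-hand propositions are ever named as principal formulae or used in an axiom leaf. Consequently, derivability of $\Phi_\omega\cent\bot$ entails derivability of $\Phi'\cent\bot$ for some \emph{finite} $\Phi'\subseteq\Phi_\omega$. Formally this is an easy induction on the height of the derivation, collecting at each node the finitely many formulae actually touched by the rule and invoking admissibility of weakening to discard the rest of the context. This reduction is the point that needs the most care and is really the main obstacle, but it is entirely routine given the shape of the rules.

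Finally I would combine these observations. By construction $\Phi_i\subseteq\Phi_{i+1}$ for all $i$, so $(\Phi_i)_i$ is an increasing chain with union $\Phi_\omega$; therefore any finite $\Phi'\subseteq\Phi_\omega$ is contained in some single $\Phi_i$. Assuming for contradiction that $\Phi_\omega\cent\bot$, the finiteness step yields a finite $\Phi'\subseteq\Phi_\omega$ with $\Phi'\cent\bot$, and weakening then gives $\Phi_i\cent\bot$ for the stage $i$ with $\Phi'\subseteq\Phi_i$. This contradicts Lemma~\ref{lemm.2}, and hence $\Phi_\omega\not\cent\bot$ as required.
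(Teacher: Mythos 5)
Your proposal is correct and follows essentially the same route as the paper: the paper's proof also assumes $\Phi_\omega\cent\bot$, extracts a finite subset $\Gamma\subseteq\Phi_\omega$ with $\Gamma\cent\bot$, locates it inside some $\Phi_i$, and contradicts Lemma~\ref{lemm.2}. You merely spell out the finiteness-of-derivations and weakening steps that the paper leaves implicit.
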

\begin{proof}
Assume $\Phi_{\omega} \cent \bot$.
So there exists a finite subset  $\Gamma$ of $\Phi_{\omega}$ such that $\Gamma \cent \bot$. 
As $\Gamma$ is finite it is included in some $\Phi_i$, and $\Phi_i \cent \bot$, contradicting Proposition~\ref{lemm.2}.
\end{proof}

\begin{lemm}
\label{lemm.1}
For every $\xi$, at least one of $\xi\in \Phi_\omega$ and  $\neg\xi\in \Phi_\omega$ holds.
\end{lemm}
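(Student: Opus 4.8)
The plan is to exploit the fact that the enumeration $\xi_1,\xi_2,\ldots$ fixed in Definition~\ref{defn.herbrand} runs through \emph{every} proposition, so that any given proposition is dealt with at a single finite stage of the construction. The whole content of the lemma is that the step from $\Phi_i$ to $\Phi_{i+1}$ decides the status of $\xi_i$, by placing either $\xi_i$ or $\neg\xi_i$ into $\Phi_{i+1}$; consistency (Lemma~\ref{lemm.3}) plays no role here, since this is a statement about \emph{coverage} rather than about non-derivability.

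Concretely, I would first fix the proposition whose status is to be settled and locate its index. To keep it distinct from the local variable $\xi$ appearing inside Definition~\ref{defn.herbrand}, call the given proposition $\chi$. Since the enumeration is surjective onto propositions, $\chi\equiv\xi_n$ for some $n\geq 1$. Then I would unfold the definition of $\Phi_{n+1}$ from $\Phi_n$. In the three-way case split of Definition~\ref{defn.herbrand}, the auxiliary proposition there called $\xi$ is set equal either to $\xi_n$ (first and third cases) or to $\neg\xi_n$ (second case); and in both of the subsequent cases—whether $\xi$ has the form $\neg\Forall{X}\xi'$ or not—this auxiliary proposition is an element of $\Phi_{n+1}$. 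Hence either $\xi_n\in\Phi_{n+1}$ or $\neg\xi_n\in\Phi_{n+1}$.

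Finally, since $\Phi_{n+1}\subseteq\Phi_\omega=\bigcup_i\Phi_i$, it follows that $\chi\in\Phi_\omega$ or $\neg\chi\in\Phi_\omega$, which is exactly the claim. I do not expect any genuine obstacle: the argument is pure bookkeeping about the construction. The only point needing care is the notational clash between the quantified proposition of the statement and the local $\xi$ used inside Definition~\ref{defn.herbrand}; tracking the index $n$ explicitly, as above, keeps the two apart and makes the argument transparent.
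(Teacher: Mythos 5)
Your proposal is correct and is essentially the paper's own argument, merely written out in more detail: the paper's proof is the one-line observation that, by inspection of Definition~\ref{defn.herbrand}, for every $i$ either $\xi_i\in\Phi_{i+1}$ or $\neg\xi_i\in\Phi_{i+1}$, from which the result follows since $\Phi_{i+1}\subseteq\Phi_\omega$. Your additional remarks (that consistency plays no role, and that the local $\xi$ of the construction must be distinguished from the quantified proposition of the statement) are accurate bookkeeping but do not change the route.
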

\begin{proof}
We check of Definition~\ref{defn.herbrand} that
for every $i$, either $\xi_i\in\Phi_{i+1}$ or $\neg\xi_i\in\Phi_{i+1}$.
The result follows.
\end{proof}

\begin{lemm}
\label{lemm.1bis}
For every $\xi$, if\ $\neg \Forall{X}\xi \in \Phi_\omega$ then there exists  a $Z$ such that $\neg \xi[X\ssm Z] \in \Phi_\omega$. 
\end{lemm}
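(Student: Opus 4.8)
The plan is to pinpoint the unique stage of the construction in Definition~\ref{defn.herbrand} at which the witness must have been inserted, and then to use consistency of $\Phi_\omega$ (Lemma~\ref{lemm.3}) to exclude every other possibility. Since the sequence $\xi_1,\xi_2,\ldots$ enumerates \emph{all} propositions, the proposition $\Forall{X}\xi$ appears in it; I fix an index $m$ with $\xi_m\equiv\Forall{X}\xi$ and examine how $\Phi_{m+1}$ is formed from $\Phi_m$ and $\xi_m$. The whole argument is a case analysis over the three bullets of Definition~\ref{defn.herbrand} that select the auxiliary proposition, which I will call $\chi$, at this stage.

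First I would dispose of the two bullets in which $\chi\equiv\xi_m$, namely $\Phi_m\cent\xi_m$ and ``$\Phi_m\not\cent\xi_m$ and $\Phi_m\not\cent\neg\xi_m$''. In both, $\chi\equiv\Forall{X}\xi$ is not of the shape $\neg\Forall{X}\xi'$, so the ``otherwise'' clause fires and $\Forall{X}\xi\in\Phi_{m+1}\subseteq\Phi_\omega$. But by hypothesis $\neg\Forall{X}\xi\in\Phi_\omega$ as well, so $\Phi_\omega$ contains a proposition and its negation; a standard derivation from \rulefont{Ax}, \rulefont{\bot L}, and \rulefont{{\limp}L} (recalling that $\neg\psi$ abbreviates $\psi\limp\bot$) then gives $\Phi_\omega\cent\bot$ --- and since every rule in Figure~\ref{Seq} carries a schematic context, no separate weakening step is needed. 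This contradicts Lemma~\ref{lemm.3}, so neither bullet can apply. The surviving possibility is $\Phi_m\cent\neg\xi_m$, where $\chi\equiv\neg\Forall{X}\xi$ \emph{is} of the form $\neg\Forall{X}\xi'$ with $\xi'\equiv\xi$; the first of the ``two cases'' then fires and, by construction, $\Phi_{m+1}=\Phi_m\cup\{\neg\Forall{X}\xi,\ \neg\xi[X\ssm Z]\}$ for the fresh unknown $Z$ chosen there. Hence $\neg\xi[X\ssm Z]\in\Phi_{m+1}\subseteq\Phi_\omega$, which is exactly what is wanted.

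The one genuinely delicate point is that the three bullets branch on $\Phi_m$ rather than on $\Phi_\omega$, so I cannot simply read off from $\neg\Forall{X}\xi\in\Phi_\omega$ which case holds; the argument must proceed by elimination, and the real content is the observation that both bad bullets would force $\Forall{X}\xi$ itself into $\Phi_\omega$. A minor secondary point is that Definition~\ref{defn.herbrand} is free to pick any representative $\neg\Forall{X}\xi'$ of $\chi$ and hence any bound-name presentation; but instantiating a fresh $Z$ respects $\alpha$-equivalence, so the proposition it deposits coincides with $\neg\xi[X\ssm Z]$ regardless, and the remaining facts ($\Forall{X}\xi$ occurs in the enumeration, $\Phi_{m+1}\subseteq\Phi_\omega$, and the refutability of $\{\Forall{X}\xi,\neg\Forall{X}\xi\}$) are routine.
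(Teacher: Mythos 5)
Your proof is correct and takes essentially the same approach as the paper's: locate the stage of the construction in Definition~\ref{defn.herbrand} at which the relevant proposition is processed, and use consistency of $\Phi_\omega$ (Lemma~\ref{lemm.3}) to force the case that deposits the witness $\neg\xi[X\ssm Z]$. The only difference is cosmetic: you anchor at the enumeration index of $\Forall{X}\xi$ and must eliminate two bullets, whereas the paper anchors at the index of $\neg\Forall{X}\xi$ itself (which is also enumerated) and needs to exclude only the single case $\Phi_i\cent\neg\xi_i$.
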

\begin{proof}
There exists an $i$ such that 
$\xi_i = \neg \Forall{X}\xi$.
Since $\Phi_{\omega} \cent \xi_i$ and $\Phi_{\omega} \not\cent \bot$, we have that 
$\Phi_{\omega} \not\cent \neg \xi_i$, and so $\Phi_{i} \not\cent \neg \xi_i$. 
Thus $\Phi_{i+1} = \Phi_i \cup \{\neg \Forall{X}\xi,\ \neg \xi[X\ssm Z]\}$.
The result follows.
\end{proof}

\begin{lemm}
\label{lemm.deductively.closed}
If $\Phi_{\omega}\cent\phi$ then $\phi \in\Phi_\omega$.
\end{lemm}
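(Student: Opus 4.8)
The plan is to lean on the two properties of $\Phi_\omega$ already established: the dichotomy of Lemma~\ref{lemm.1} (for every $\xi$, either $\xi\in\Phi_\omega$ or $\neg\xi\in\Phi_\omega$) and the consistency of Lemma~\ref{lemm.3} ($\Phi_\omega\not\cent\bot$). Assuming $\Phi_\omega\cent\phi$, Lemma~\ref{lemm.1} immediately splits the argument into two cases. If $\phi\in\Phi_\omega$ there is nothing left to prove, so the entire content of the lemma reduces to excluding the case $\neg\phi\in\Phi_\omega$.

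To exclude it, I would recall from Remark~\ref{rmrk.no.exists} that $\neg\phi$ abbreviates $\phi\limp\bot$; hence $\neg\phi\in\Phi_\omega$ places $\phi\limp\bot$ among the left-hand hypotheses. I would then derive $\Phi_\omega\cent\bot$ by a single application of \rulefont{{\limp}L} to this formula: its left premise $\Phi_\omega\cent\phi,\bot$ comes from the hypothesis $\Phi_\omega\cent\phi$ by adding $\bot$ on the right, and its right premise $\Phi_\omega,\bot\cent\bot$ is an instance of \rulefont{\bot L}. Since $\phi\limp\bot\in\Phi_\omega$, the conclusion $\Phi_\omega,\phi\limp\bot\cent\bot$ is literally $\Phi_\omega\cent\bot$, contradicting Lemma~\ref{lemm.3}. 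Therefore $\neg\phi\notin\Phi_\omega$, and so $\phi\in\Phi_\omega$. Equivalently, one could feed $\Phi_\omega\cent\phi$ together with the easily derived $\Phi_\omega,\phi\cent\bot$ into \rulefont{Cut}.

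I expect no genuine obstacle; the only point needing care is the side-context bookkeeping, i.e.\ weakening $\bot$ onto the right of the given sequent. This is harmless because \rulefont{Ax} and \rulefont{\bot L} in Figure~\ref{Seq} already carry arbitrary side-contexts $\Phi$ and $\Psi$, so left- and right-weakening are admissible throughout the system. The lemma thus carries no inductive content of its own: it is precisely the standard fact that a set which is both consistent and closed under the excluded-middle dichotomy is deductively closed.
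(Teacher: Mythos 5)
Your proof is correct and follows essentially the same route as the paper's: both arguments combine the consistency of $\Phi_\omega$ (Lemma~\ref{lemm.3}) with the dichotomy of Lemma~\ref{lemm.1} to rule out $\neg\phi\in\Phi_\omega$ and conclude $\phi\in\Phi_\omega$. You merely make explicit the small sequent-calculus derivation of $\Phi_\omega\cent\bot$ from $\neg\phi\in\Phi_\omega$ and $\Phi_\omega\cent\phi$, which the paper leaves implicit.
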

\begin{proof}
As, by Lemma~\ref{lemm.3}, $\Phi_{\omega}\not\cent\bot$, 
if $\Phi_{\omega}\cent\phi$ then $\neg \phi \not\in \Phi_{\omega}$. Thus 
by Lemma~\ref{lemm.1}, $\phi \in \Phi_{\omega}$.
\end{proof}

\begin{corr}
\label{corr.char.inclusion}
\begin{enumerate}
\item
$(\phi\limp\psi)\in\Phi_\omega$ if and only if ($\phi\not\in\Phi_\omega$ or $\psi\in\Phi_\omega$).
\item
$\Forall{X}\phi\in\Phi_\omega$ if and only if 
\\
\qquad (for every $r$ such that $r:\sort(X)$ and $\f{fa}(r)\subseteq\pmss(X)$,\ $\phi[X\ssm r]\in\Phi_\omega$).
\end{enumerate}
\end{corr}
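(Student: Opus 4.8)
The plan is to derive both biconditionals from the three structural facts already established for $\Phi_\omega$ — deductive closure, completeness, and consistency — reducing each direction to a short sequent derivation and isolating the only genuinely non-routine step in the witness property of Lemma~\ref{lemm.1bis}. Throughout I read $\neg\phi$ as $\phi\limp\bot$, following the de Morgan encoding of Remark~\ref{rmrk.no.exists}.

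First I would record two master observations. Membership and derivability coincide: $\phi\in\Phi_\omega$ gives $\Phi_\omega\cent\phi$ immediately by \rulefont{Ax} (take $\pi=\id$), and the converse is Lemma~\ref{lemm.deductively.closed}. Moreover, by Lemma~\ref{lemm.1} at least one of $\phi$ and $\neg\phi$ lies in $\Phi_\omega$, while Lemma~\ref{lemm.3} forbids both (they would yield $\Phi_\omega\cent\bot$). Together these turn each part of the corollary into the task of exhibiting a derivation from $\Phi_\omega$.

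For Part~1, the forward direction assumes $(\phi\limp\psi)\in\Phi_\omega$ and $\phi\in\Phi_\omega$; stacking two instances of \rulefont{Ax} under \rulefont{{\limp}L} yields $\Phi_\omega,\phi,\phi\limp\psi\cent\psi$, hence $\Phi_\omega\cent\psi$ and so $\psi\in\Phi_\omega$. For the converse I split the disjunction: if $\psi\in\Phi_\omega$ then \rulefont{Ax} gives $\Phi_\omega,\phi\cent\psi$, closed by \rulefont{{\limp}R}; if $\phi\notin\Phi_\omega$ then $\neg\phi\in\Phi_\omega$ by completeness, and combining \rulefont{Ax} with \rulefont{\bot L} under \rulefont{{\limp}L} again produces $\Phi_\omega,\phi\cent\psi$, closed by \rulefont{{\limp}R}. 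In both cases $(\phi\limp\psi)\in\Phi_\omega$ by the master equivalence.

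For Part~2, the forward direction is one application of \rulefont{{\forall}L}: for any $r$ with $\f{fa}(r)\subseteq\pmss(X)$ and $r:\sort(X)$, the premise $\Phi_\omega,\phi[X\ssm r]\cent\phi[X\ssm r]$ holds by \rulefont{Ax}, so \rulefont{{\forall}L} gives $\Phi_\omega,\Forall{X}\phi\cent\phi[X\ssm r]$ and thus $\phi[X\ssm r]\in\Phi_\omega$. The converse is the main obstacle, and is precisely what Lemma~\ref{lemm.1bis} was built for. Arguing by contradiction, if $\Forall{X}\phi\notin\Phi_\omega$ then $\neg\Forall{X}\phi\in\Phi_\omega$, and Lemma~\ref{lemm.1bis} yields an unknown $Z$ with $\sort(Z)=\sort(X)$, $\pmss(Z)=\pmss(X)$, and $\neg(\phi[X\ssm Z])\in\Phi_\omega$. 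The crucial observation is that $r=\id\act Z$ is itself a legitimate instance, since $\f{fa}(\id\act Z)=\pmss(Z)=\pmss(X)$ satisfies the side-condition; hence the hypothesis forces $\phi[X\ssm Z]\in\Phi_\omega$ as well. Having both $\phi[X\ssm Z]$ and its negation in $\Phi_\omega$ gives $\Phi_\omega\cent\bot$ via \rulefont{{\limp}L} and \rulefont{\bot L}, contradicting Lemma~\ref{lemm.3}. The one point to verify carefully is that $(\neg\phi)[X\ssm Z]$ and $\neg(\phi[X\ssm Z])$ coincide, which holds because substitution commutes with the $\limp\bot$ encoding by Definition~\ref{defn.subst.action}.
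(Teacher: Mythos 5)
Your proposal is correct and follows essentially the same route as the paper: both parts reduce to the master equivalence between membership in $\Phi_\omega$ and derivability (Lemma~\ref{lemm.deductively.closed} plus \rulefont{Ax}), combined with completeness (Lemma~\ref{lemm.1}), consistency (Lemma~\ref{lemm.3}), and the witness property (Lemma~\ref{lemm.1bis}) for the hard direction of Part~2. Your explicit check that $r=\id\act Z$ satisfies the side-condition of \rulefont{{\forall}L} (so the hypothesis applies to it) is a detail the paper leaves implicit, but it is exactly the step the paper's proof relies on.
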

\begin{proof}
\begin{enumerate}
\item
Suppose $(\phi \Rightarrow \psi) \in \Phi_{\omega}$ and
$\phi \in \Phi_{\omega}$.
Then $\Phi_{\omega} \cent \psi$ and so by Lemma~\ref{lemm.deductively.closed} $\psi \in \Phi_{\omega}$.

Suppose $\phi \not\in \Phi_{\omega}$.
By Lemma~\ref{lemm.1} $\neg \phi \in  \Phi_{\omega}$.
So $\Phi_{\omega} \cent \neg \phi$  and therefore $\Phi_{\omega} \cent \phi \Rightarrow \psi$. 
By Lemma~\ref{lemm.deductively.closed}  $(\phi \Rightarrow \psi) \in \Phi_{\omega}$.

Suppose $\psi \in \Phi_{\omega}$. 
Then $\Phi_{\omega} \cent \psi$ and so $\Phi_{\omega} \cent \phi \Rightarrow \psi$.
By Lemma~\ref{lemm.deductively.closed} $(\phi \Rightarrow \psi) \in \Phi_{\omega}$.
\item
Suppose $\Forall{X}\phi \in \Phi_{\omega}$.
By Lemma~\ref{lemm.deductively.closed}, if $r : \sort(X)$ and $\fa(r) \subseteq \pmss(X)$ then 
$\phi[X\ssm r] \in \Phi_{\omega}$.

Conversely, suppose $\phi[X\ssm r] \in \Phi_{\omega}$ for every $r$ such that $r : \sort(X)$ and $\fa(r) \subseteq \pmss(X)$.
We proceed by contradiction: suppose $\Forall{X}\phi \not\in \Phi_{\omega}$.
By Lemma~\ref{lemm.1}  $\neg \Forall{X}\phi \in \Phi_{\omega}$ and 
by Lemma~\ref{lemm.1bis}, there exists a $Z$ such that 
$\neg \phi[X\ssm Z] \in \Phi_{\omega}$. 
So $\Phi_{\omega} \cent \neg \phi[X\ssm Z]$, and so 
$\Phi_{\omega} \cent \phi[X\ssm Z]$, and so $\Phi_{\omega} \cent \bot$, 
contradicting Lemma~\ref{lemm.3}.
\qedhere\end{enumerate}
\end{proof}

\subsubsection{The term model}

\begin{defn}
\label{defn.term.model}
Define $\mathcal I$ by:
\begin{itemize*}
\item
$\model{\alpha}=\{r\mid r:\alpha\}$.
\item
$\tf f^\iden$ maps $r$ to $\tf f(r)$.
\item
$\tf P^\iden$ maps $r_1,\ldots,r_n$ to $1$ if 
$\tf P(r_1,\ldots,r_n)\in\Phi_\omega$ and to $0$ otherwise.
\end{itemize*}
Define $\varsigma$ by
$$
\varsigma(X)=X.
$$
We endow $\model{\alpha}$ with a permutation action given by the action on terms.
\end{defn}

\begin{lemm}
\label{lemm.supp.raeq}
$\f{supp}(r)=\f{fa}(r)$.

As a corollary, $\model{\alpha}$ from Definition~\ref{defn.term.model} is a permissive-nominal set.
\end{lemm}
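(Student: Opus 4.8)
The plan is to prove the equality by two inclusions, $\f{supp}(r)\subseteq\f{fa}(r)$ and $\f{fa}(r)\subseteq\f{supp}(r)$, but the very first task is to make $\f{supp}(r)$ meaningful at all: $\f{supp}$ as a \emph{least} supporting set is only guaranteed to exist by Theorem~\ref{thrm.supp}, which presupposes that $\model{\alpha}$ is a permissive-nominal set. So I would begin by showing that $\f{fa}(r)$ supports $r$ in the sense of Definition~\ref{defn.support}. This is immediate from Lemma~\ref{lemm.fa.fix}: if $\pi(a)=a$ for every $a\in\f{fa}(r)$ then $\pi\act r\aeq r$, i.e.\ $\pi\act r=r$ as a term. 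Next I would check that $\f{fa}(r)$ is contained in a permission set. By Definition~\ref{defn.fa}, $\f{fa}(r)$ is a finite union of sets $\pi\act\pmss(X)$ (one per unknown occurring in $r$) together with finitely many atoms contributed by atom-leaves. Each $\pi\act\pmss(X)$ is again a permission set, since permission sets are closed under the permutation action (swappings merely relabel finitely many atoms, and $\f{shift}_\nu$ sends $\atomsdown$ to $\atomsdown$ altered by one atom), and a finite union of permission sets together with a finite set of atoms is again a permission set. Hence every $r:\alpha$ is supported by a permission set, which is exactly the requirement of Definition~\ref{defn.nominal.set}; this proves the corollary that $\model{\alpha}$ is a permissive-nominal set, and Theorem~\ref{thrm.supp} then furnishes $\f{supp}(r)$ with $\f{supp}(r)\subseteq\f{fa}(r)$ following from minimality.

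For the reverse inclusion $\f{fa}(r)\subseteq\f{supp}(r)$ I would run the standard fresh-swapping argument. Take $a\in\f{fa}(r)$ and suppose for contradiction that $a\notin\f{supp}(r)$. Since $\f{fa}(r)$ lies inside a permission set it is co-infinite in each $\mathbb A_\nu$, so I can choose $b$ of the same name-sort as $a$ with $b\notin\f{fa}(r)$; as $\f{supp}(r)\subseteq\f{fa}(r)$ this also gives $b\notin\f{supp}(r)$. The finite permutation $(b\ a)$ then fixes $\f{supp}(r)$ pointwise, so by the corollary to Theorem~\ref{thrm.supp} we get $(b\ a)\act r=r$. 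On the other hand, Lemma~\ref{lemm.fa.pi.r} gives $\f{fa}((b\ a)\act r)=(b\ a)\act\f{fa}(r)$, and because $a\in\f{fa}(r)$ while $b\notin\f{fa}(r)$, this set omits $a$ and hence differs from $\f{fa}(r)$. Since $\alpha$-equal terms have equal free atoms (Lemma~\ref{lemm.fa.aeq}), this contradicts $(b\ a)\act r=r$. Therefore $a\in\f{supp}(r)$, and the two inclusions combine to give $\f{supp}(r)=\f{fa}(r)$.

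The main obstacle is not either inclusion individually but the ordering and well-definedness: $\f{supp}(r)$ cannot be invoked until $\model{\alpha}$ is established as a permissive-nominal set, so the closure of permission sets under the permutation action (and hence the existence of a supporting permission set for each $r$) must be dispatched \emph{before} appealing to Theorem~\ref{thrm.supp}. Once that sequencing is respected, the $\subseteq$ direction is a direct reading of Lemma~\ref{lemm.fa.fix} and the $\supseteq$ direction is the routine freshness argument above.
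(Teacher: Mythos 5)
Your proof is correct and follows essentially the same route as the paper: $\f{supp}(r)\subseteq\f{fa}(r)$ via Lemma~\ref{lemm.fa.fix} and the definition of support, and $\f{fa}(r)\subseteq\f{supp}(r)$ via the fresh-swapping argument using Lemmas~\ref{lemm.fa.pi.r} and~\ref{lemm.fa.aeq}. The only difference is that you spell out why $\f{fa}(r)$ is contained in a permission set (closure of permission sets under the permutation action and finite unions) before invoking Theorem~\ref{thrm.supp}, a point the paper dismisses as ``immediate''; this extra care about the ordering is welcome but does not change the argument.
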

\begin{proof}
We prove two subset inclusions:
\begin{itemize}
\item
\emph{Proof that ${\f{supp}(r)\subseteq\f{fa}(r)}$.}\quad
By Lemma~\ref{lemm.fa.fix}, if $\pi(a)=a$ for all $a\in\f{fa}(r)$ then $r= \pi\act r$.
It follows by the definition of support (Theorem~\ref{thrm.supp}) that $\f{supp}(r)\subseteq\f{fa}(r)$.
\item
\emph{Proof that ${\f{fa}(r)\subseteq\f{supp}(r)}$.}\quad
Suppose $a\in\f{fa}(r)$.
Choose some fresh $b$ (so $b\not\in\f{fa}(r)\cup\f{supp}(r)$).
By Lemma~\ref{lemm.fa.pi.r} $\f{fa}((b\ a)\act r)=(b\ a)\act\f{fa}(r)$.
Since $(b\ a)\act\f{fa}(r)\neq\f{fa}(r)$ it follows using Lemma~\ref{lemm.fa.aeq} that $r\neq (b\ a)\act r$.
The result follows by the first part of this result and by Corollary~\ref{corr.notinsupp}.
\end{itemize}
The corollary is immediate, unpacking Definition~\ref{defn.support}.
\end{proof}

\begin{lemm}
\label{lemm.f.fin.equivar}
\begin{enumerate}
\item
If $\f{ar}(\tf f)=(\alpha)\basesort$ then $\tf f^\iden$ is
 well-defined, equivariant, and maps $\model{\alpha}$ to 
$\model{\basesort}$.
\item
If $\f{ar}(\tf P)=\alpha$ then $\tf P^\iden$ is well-defined,
equivariant, and maps $\model{\alpha}$ to $\{0,1\}$.
\end{enumerate}
\end{lemm}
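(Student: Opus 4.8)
The plan is to verify the three properties—well-definedness on $\alpha$-equivalence classes, correct typing, and equivariance—separately for $\tf f^\iden$ and $\tf P^\iden$, noting that everything except the equivariance of $\tf P^\iden$ reads off directly from the syntactic apparatus already in place.

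For part~1, correct typing is immediate: if $r:\alpha$ and $\f{ar}(\tf f)=(\alpha)\basesort$ then $\tf f(r):\basesort$ by the term-former formation rule of Definition~\ref{defn.syntax}, so $\tf f^\iden$ indeed sends $\model{\alpha}$ into $\model{\basesort}$. Well-definedness holds because $\aeq$ is a congruence: from $r\aeq r'$ we get $\tf f(r)\aeq\tf f(r')$, so the value of $\tf f^\iden$ depends only on the $\alpha$-equivalence class of its argument. Equivariance is precisely the clause $\pi\act\tf f(r)\equiv\tf f(\pi\act r)$ of the permutation action (Definition~\ref{defn.permutation.action}), which says $\pi\act\tf f^\iden(r)=\tf f^\iden(\pi\act r)$. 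Thus part~1 amounts to quoting definitions.

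For part~2, typing is trivial since $\tf P^\iden$ returns a value in $\{0,1\}$ by construction, and well-definedness again follows from congruence: $r\aeq r'$ gives $\tf P(r)\aeq\tf P(r')$, so membership of $\tf P(r)$ in the set of propositions $\Phi_\omega$ is insensitive to the representative chosen for $r$. The step that actually uses the logic, and that I expect to be the crux, is the equivariance of $\tf P^\iden$. Since $\{0,1\}$ carries the trivial permutation action (Definition~\ref{defn.atoms.perm}), $\pi\act\tf P^\iden(r)=\tf P^\iden(r)$, so equivariance reduces to the claim that $\tf P^\iden$ is constant along orbits, i.e.\ $\tf P(r)\in\Phi_\omega$ if and only if $\tf P(\pi\act r)\in\Phi_\omega$ for every $\pi\in\mathbb P$ (using $\pi\act\tf P(r)\equiv\tf P(\pi\act r)$ from Definition~\ref{defn.permutation.action}).

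I would establish this invariance using exactly the permutation $\pi$ built into the axiom rule \rulefont{Ax}, together with the deductive closure of $\Phi_\omega$. If $\tf P(r)\in\Phi_\omega$, then the instance of \rulefont{Ax} with distinguished formula $\phi=\tf P(r)$, left context $\Phi_\omega$, and empty $\Psi$ gives $\Phi_\omega\cent\pi\act\tf P(r)$ with no premises; by Lemma~\ref{lemm.deductively.closed} this places $\pi\act\tf P(r)\equiv\tf P(\pi\act r)$ in $\Phi_\omega$. Running the same argument with $\pi\mone$ in place of $\pi$ yields the converse. Hence the main obstacle is conceptual rather than computational: one must recognise that the permutation $\pi$ in \rulefont{Ax}—the syntactic trace of equivariance inside the logic—is exactly what forces the semantic equivariance of $\tf P^\iden$ in the term model, while the remaining claims are routine bookkeeping over $\alpha$-equivalence and the formation and permutation-action clauses.
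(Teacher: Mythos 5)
Your proposal is correct, and for part~1 it coincides with the paper's proof: equivariance of $\tf f^\iden$ is exactly the clause $\pi\act\tf f(r)\equiv\tf f(\pi\act r)$ of Definition~\ref{defn.permutation.action}, with well-definedness and typing read off from the congruence property of $\aeq$ and the formation rules. Where you genuinely add something is part~2. The paper disposes of it with ``the case of $\tf P$ is similar,'' but as you observe it is not literally similar: since $\{0,1\}$ carries the trivial action, equivariance of $\tf P^\iden$ is not a syntactic identity but the semantic claim that $\tf P(r)\in\Phi_\omega$ iff $\tf P(\pi\act r)\in\Phi_\omega$. Your derivation of this---an instance of \rulefont{Ax} with the built-in $\pi$ giving $\Phi_\omega\cent\pi\act\tf P(r)$ whenever $\tf P(r)\in\Phi_\omega$, then Lemma~\ref{lemm.deductively.closed} to conclude $\tf P(\pi\act r)\in\Phi_\omega$, and $\pi\mone$ for the converse---is exactly the argument the paper leaves implicit, and it is the one place in the lemma where a property of $\Phi_\omega$ (deductive closure) rather than a purely syntactic definition is needed. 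So your proof is a correct and strictly more explicit version of the paper's; the identification of the $\pi$ in \rulefont{Ax} as the source of semantic equivariance in the term model is precisely the right conceptual point.
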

\begin{proof}
\begin{enumerate}
\item
The only (very) slightly non-trivial part is equivariance.
We reason as follows:
\begin{tab0}
\pi\act {\tf f}^\iden(r)&\pi\act \tf f(r)
&\text{Definition~\ref{defn.interpret.terms}}
\\
&\tf f(\pi\act r)
&\text{Definition~\ref{defn.permutation.action}}
\\
&{\tf f}^\iden(\pi\act r) 
&\text{Definition~\ref{defn.interpret.terms}}
\end{tab0}
\item
The case of $\tf P$ is similar.
\qedhere\end{enumerate}
\end{proof}

\begin{prop}
$\mathcal I$ is an interpretation of the signature $\mathcal S=(\mathcal A,\mathcal B,\mathcal F,\mathcal P,\f{ar})$ which we fixed at the beginning of this subsection.
In addition, $\varsigma$ is a valuation to $\mathcal I$.
\end{prop}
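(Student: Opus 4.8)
The plan is to run through the defining clauses of a PNL interpretation (Definition~\ref{defn.pnl.interp}) and of a valuation (Definition~\ref{defn.valuation}) one at a time; the substantive work has already been carried out in Lemmas~\ref{lemm.supp.raeq} and~\ref{lemm.f.fin.equivar}, so the argument is largely a matter of assembling these results and making one support calculation.

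First I would treat the sort-signature data. Endowing the sets $\model{\alpha}=\{r\mid r:\alpha\}$ with the permutation action on terms makes each of them a set with a permutation action in the sense of Definition~\ref{defn.perm.set}: the group-action laws $\id\act r = r$ and $\pi\act(\pi'\act r)=(\pi\circ\pi')\act r$ follow by a routine induction from Definition~\ref{defn.permutation.action} (the moderated-unknown clause $\pi\act(\pi'\act X)\equiv(\pi\circ\pi')\act X$ being the only interesting case). The corollary to Lemma~\ref{lemm.supp.raeq} then gives that each $\model{\alpha}$ is in fact a permissive-nominal set. To complete this clause I would only note that the base-sort interpretations are nonempty: for every $\basesort\in\mathcal B$, Definition~\ref{defn.terms.sorts} supplies an unknown $X$ with $\sort(X)=\basesort$, whence $\id\act X\in\model{\basesort}$. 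For the term- and proposition-former interpretations there is nothing further to prove, since Lemma~\ref{lemm.f.fin.equivar} already records that each $\tf f^\iden$ is well-defined, equivariant, and maps $\model{\alpha}$ to $\model{\basesort}$, and that each $\tf P^\iden$ is well-defined, equivariant, and maps into $\{0,1\}$.

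It remains to check that $\varsigma$, defined by $\varsigma(X)=X=\id\act X$, is a valuation. Sortedness is immediate: the term-formation rule for moderated unknowns gives $\id\act X:\sort(X)$, so $\varsigma(X)\in\model{\sort(X)}$. The support condition is the one genuinely load-bearing step. By Lemma~\ref{lemm.supp.raeq} we have $\f{supp}(\id\act X)=\f{fa}(\id\act X)$, and by Definition~\ref{defn.fa} the right-hand side equals $\id\act\pmss(X)$, which is just $\pmss(X)$; hence $\f{supp}(\varsigma(X))=\pmss(X)\subseteq\pmss(X)$ as required. I expect this last computation to be the main point to get right, since it is exactly where the permission-set discipline of Definitions~\ref{defn.atoms} and~\ref{defn.terms.sorts} earns its keep, ensuring that sending each unknown to itself respects the supporting-permission-set constraint imposed on valuations.
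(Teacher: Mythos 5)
Your proof is correct and follows essentially the same route as the paper's: both rest on Lemma~\ref{lemm.supp.raeq} for the permissive-nominal-set and support conditions and on Lemma~\ref{lemm.f.fin.equivar} for the term- and proposition-formers, with the valuation check reducing to $\f{supp}(X)=\f{fa}(X)=\pmss(X)$. Your explicit verification of the group-action laws and of nonemptiness of $\model{\basesort}$ is a small amount of extra detail the paper leaves implicit, not a different argument.
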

\begin{proof}
By Lemma~\ref{lemm.supp.raeq} each $\model{\alpha}$ is a permissive-nominal set.
By Lemma~\ref{lemm.f.fin.equivar} for each $\tf f:(\alpha)\basesort\in\mathcal F$,\ ${\tf f}^\iden$ is an equivariant map
from $\model{\alpha}$ to $\model{\basesort}$ and for each $\tf
P:\alpha\in\mathcal P$,\ ${\tf P}^\iden$ is a finite equivariant function from $\model{\alpha}$ to $\{0,1\}$.

By construction $\varsigma(X)\in\model{\sort(X)}$ always.
By Lemma~\ref{lemm.supp.raeq} $\f{supp}(\varsigma(X))=\f{supp}(X)\subseteq\pmss(X)$ always.

The result follows.
\end{proof}

\begin{lemm}
\label{lemm.rvarsigma}
$\denot{\mathcal I}{\varsigma}{r} = r$.
\end{lemm}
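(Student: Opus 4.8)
The plan is to proceed by a routine induction on the structure of $r$ (well-defined on $\alpha$-equivalence classes, as always in this paper), unfolding the clauses of Definition~\ref{defn.interpret.terms} and appealing at the leaves to the concrete data fixed in Definition~\ref{defn.term.model}.

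The two genuine base cases are atoms and moderated unknowns. For $r = a$, Definition~\ref{defn.interpret.terms} gives $\denot{\mathcal I}{\varsigma}{a} = a$ directly. For $r = \pi\act X$, Definition~\ref{defn.interpret.terms} gives $\denot{\mathcal I}{\varsigma}{\pi\act X} = \pi\act\varsigma(X)$; since Definition~\ref{defn.term.model} sets $\varsigma(X) = X$ and equips $\model{\sort(X)}$ with the permutation action inherited from the action on terms, this is exactly $\pi\act X$. This is the only place where the specific choices of the term model (namely $\varsigma(X) = X$) are really used, and I expect it to be the only step that warrants a word of justification.

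The inductive cases all follow immediately from the inductive hypothesis together with the defining clauses. For $r = \tf f(s)$ I have $\denot{\mathcal I}{\varsigma}{\tf f(s)} = \tf f^\iden(\denot{\mathcal I}{\varsigma}{s}) = \tf f^\iden(s) = \tf f(s)$, using the induction hypothesis for the middle equality and the clause $\tf f^\iden(s) = \tf f(s)$ from Definition~\ref{defn.term.model} for the last. For $r = [a]s$ I get $[a]\denot{\mathcal I}{\varsigma}{s} = [a]s$, and for a tuple $r = (r_1,\ldots,r_n)$ I get $(\denot{\mathcal I}{\varsigma}{r_1},\ldots,\denot{\mathcal I}{\varsigma}{r_n}) = (r_1,\ldots,r_n)$, in both cases by the induction hypothesis applied componentwise.

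There is essentially no obstacle here: the statement simply asserts that the term model's interpretation map is the identity, which holds by construction. The only care required is to note that the carrier's permutation action is the syntactic one—so that $\pi\act\varsigma(X)$ in the semantics coincides with the term $\pi\act X$—and that all clauses are well-defined on $\alpha$-equivalence classes, both of which are already guaranteed by Definition~\ref{defn.term.model} and Lemma~\ref{lemm.supp.raeq}.
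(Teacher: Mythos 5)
Your proof is correct and takes essentially the same route as the paper: a routine induction on Definition~\ref{defn.interpret.terms}, with the only substantive case being $\pi\act X$, where $\varsigma(X)=X$ and the term model's syntactic permutation action give $\pi\act\varsigma(X)=\pi\act X$. The paper presents exactly this case and leaves the rest implicit; your additional spelled-out cases are fine but add nothing beyond the routine.
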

\begin{proof}
By a routine induction on the definition of $\denot{\mathcal I}{\varsigma}{r}$  in Definition~\ref{defn.interpret.terms}.
We consider just one case:
\begin{itemize*}
\item
The case of $\denot{\mathcal I}{\varsigma}{\pi\act X}$.
We reason as follows:
\begin{tab0}
\denot{\mathcal I}{\varsigma}{\pi\act X} & \pi\act \varsigma(X)
&\text{Definition~\ref{defn.interpret.terms}}
\\
& \pi\act X 
&\text{Definition~\ref{defn.term.model}} .
\end{tab0}
\end{itemize*}
\end{proof}

\begin{lemm}
\label{lemm.varsigma.in.xi}
$\denot{\mathcal I}{\varsigma}{\xi} = 1$ if and only if $\xi\in\Phi_\omega$.
\end{lemm}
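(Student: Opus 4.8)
The plan is to prove the biconditional by induction, establishing both directions simultaneously in each case. The one point requiring care is the choice of induction measure. A naive structural induction on $\xi$ fails in the $\forall$ case, where one needs the inductive hypothesis not for a subformula of $\Forall{X}\phi$ but for an instance $\phi[X\ssm r]$. I would therefore induct on the \emph{degree} of $\xi$, that is, the number of occurrences of $\limp$, $\bot$, and $\forall$ in $\xi$. Since a term $r$ contains no logical connectives, the substitution action of Definition~\ref{defn.subst.action} neither creates nor destroys connectives, so $\phi[X\ssm r]$ has the same degree as $\phi$, which is strictly smaller than that of $\Forall{X}\phi$; this makes the induction well-founded.

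The cases other than $\forall$ are direct. For $\bot$: by Definition~\ref{defn.truth} $\denot{\mathcal I}{\varsigma}{\bot}=0$, and $\bot\notin\Phi_\omega$ since $\bot\in\Phi_\omega$ would give $\Phi_\omega\cent\bot$ by \rulefont{\bot L}, contradicting Lemma~\ref{lemm.3}; so both sides fail. For an atomic $\tf P(r)$: by Lemma~\ref{lemm.rvarsigma} $\denot{\mathcal I}{\varsigma}{r}=r$, so by Definitions~\ref{defn.truth} and~\ref{defn.term.model} $\denot{\mathcal I}{\varsigma}{\tf P(r)}=\tf P^\iden(r)$, which is $1$ exactly when $\tf P(r)\in\Phi_\omega$ --- no inductive hypothesis is needed here. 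For $\phi\limp\psi$: unfolding the $\f{max}$ clause of Definition~\ref{defn.truth}, $\denot{\mathcal I}{\varsigma}{\phi\limp\psi}=1$ iff $\denot{\mathcal I}{\varsigma}{\phi}=0$ or $\denot{\mathcal I}{\varsigma}{\psi}=1$; by the inductive hypothesis this is iff $\phi\notin\Phi_\omega$ or $\psi\in\Phi_\omega$, which by Corollary~\ref{corr.char.inclusion}(1) is iff $(\phi\limp\psi)\in\Phi_\omega$.

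The $\forall$ case is the crux. Unfolding the $\f{min}$ clause of Definition~\ref{defn.truth}, $\denot{\mathcal I}{\varsigma}{\Forall{X}\phi}=1$ iff $\denot{\mathcal I}{\varsigma[X\ssm x]}{\phi}=1$ for every $x\in\model{\sort(X)}$ with $\f{supp}(x)\subseteq\pmss(X)$. In the term model $x$ ranges over terms $r:\sort(X)$, and by Lemma~\ref{lemm.supp.raeq} $\f{supp}(r)=\f{fa}(r)$, so the side-condition becomes $\f{fa}(r)\subseteq\pmss(X)$ --- exactly the condition appearing in Corollary~\ref{corr.char.inclusion}(2). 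Next, combining Lemma~\ref{lemm.rvarsigma} (which gives $\denot{\mathcal I}{\varsigma}{r}=r$, hence $\varsigma[X\ssm r]=\varsigma[X\ssm\denot{\mathcal I}{\varsigma}{r}]$) with Lemma~\ref{lemm.denotsub}, I rewrite $\denot{\mathcal I}{\varsigma[X\ssm r]}{\phi}$ as $\denot{\mathcal I}{\varsigma}{\phi[X\ssm r]}$. The inductive hypothesis now applies to $\phi[X\ssm r]$ (legitimate by the degree argument above), giving $\denot{\mathcal I}{\varsigma}{\phi[X\ssm r]}=1$ iff $\phi[X\ssm r]\in\Phi_\omega$. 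Finally Corollary~\ref{corr.char.inclusion}(2) converts ``$\phi[X\ssm r]\in\Phi_\omega$ for all such $r$'' into $\Forall{X}\phi\in\Phi_\omega$, closing the case.

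The main obstacle is really bookkeeping rather than conceptual: I must make sure the semantic quantifier and the syntactic one range over the same things. This is where Lemma~\ref{lemm.supp.raeq} is essential --- it guarantees that ``elements $x$ of the term model with $\f{supp}(x)\subseteq\pmss(X)$'' and ``terms $r$ with $\f{fa}(r)\subseteq\pmss(X)$'' coincide, so that the maximal-consistency property recorded in Corollary~\ref{corr.char.inclusion}(2) lines up exactly with the denotational $\f{min}$. Getting the induction measure right so that the hypothesis may be invoked on $\phi[X\ssm r]$ is the other point one must not overlook.
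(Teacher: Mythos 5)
Your proposal is correct and follows essentially the same route as the paper: an induction through the clauses of Definition~\ref{defn.truth}, with the $\forall$ case handled by combining Lemmas~\ref{lemm.rvarsigma}, \ref{lemm.denotsub} and~\ref{lemm.supp.raeq} to align the semantic $\f{min}$ with Corollary~\ref{corr.char.inclusion}(2). The paper simply says ``by induction on the definition of $\denot{\mathcal I}{\varsigma}{\xi}$'' without spelling out the measure; your explicit degree argument justifying the appeal to the inductive hypothesis at $\phi[X\ssm r]$ is a welcome clarification of the same idea, not a different proof.
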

\begin{proof}
By induction on the definition of $\denot{\mathcal I}{\varsigma}{\xi}$ 
(Definition~\ref{defn.truth}):
\begin{itemize*}
\item
The case of $\denot{\mathcal I}{\varsigma}{\tf P(r)}$.\quad
We reason as follows:
\maketab{tab1}{@{\hspace{-5em}}R{8em}@{\ \ $\liff$\ \ }L{18em}L{10em}}
\begin{tab1}
\denot{\mathcal I}{\varsigma}{\tf P(r)} = 1& 
{\tf P}^\iden(\denot{\mathcal I}{\varsigma}{r}) = 1
&\text{Definition~\ref{defn.truth}}
\\
 & {\tf P}^\iden(r) = 1
&\text{Lemma~\ref{lemm.rvarsigma}}
\\
 & \tf P(r)\in \Phi_\omega
&\text{Definition~\ref{defn.term.model}}
\end{tab1}
\item
The case of $\denot{\mathcal I}{\varsigma}{\bot}$.\quad 
By definition $\denot{\mathcal I}{\varsigma}{\bot}=0$.
By part~1 of Corollary~\ref{corr.char.inclusion},\ $\bot\not\in\Phi_\omega$.
\item
The case of $\denot{\mathcal I}{\varsigma}{\phi\limp\psi}$.\quad
We reason as follows:
\begin{tab1}
\denot{\mathcal I}{\varsigma}{\phi\limp\psi} = 1& 
\denot{\mathcal I}{\varsigma}{\phi} = 0~\mbox{or}~
\denot{\mathcal I}{\varsigma}{\psi} = 1
&\text{Definition~\ref{defn.truth}}
\\
 & \phi\not\in\Phi_\omega~\mbox{or}~\psi\in\Phi_\omega
&\text{ind. hyp.}
\\
 & \phi\limp\psi\in\Phi_\omega 
&\text{Corollary~\ref{corr.char.inclusion}, part~2}
\end{tab1}
\item
The case of $\denot{\mathcal I}{\varsigma}{\Forall{X}\phi}$, 
where $\alpha=\sort(X)$ and $S=\pmss(X)$.\quad
\begin{tab1}
\denot{\mathcal I}{\varsigma}{\Forall{X}\phi} = 1 &
\Forall{t\in\model{\alpha}}\f{supp}(t)\subseteq S\limp 
\denot{\mathcal I}{\varsigma[X\ssm t]}{\phi} = 1
&\text{Definition~\ref{defn.truth}}
\\
&
\Forall{t\in\model{\alpha}}\f{supp}(t)\subseteq S\limp 
\denot{\mathcal I}{\varsigma}{\phi[X\ssm t]} = 1
&\text{Lems.~\ref{lemm.fV.denot}, \ref{lemm.rvarsigma}}
\\
&
\denot{\mathcal I}{\varsigma}{\phi[X\ssm t]} = 1\text{ every }t:\alpha\text{ s.t. }\f{fa}(t)\subseteq S 
&\text{Lem.~\ref{lemm.supp.raeq}}
\\
&
\phi[X\ssm t]\in\Phi_\omega \text{ every }t:\alpha\text{ s.t. }\f{fa}(t)\subseteq S 
&\text{ind. hyp.}
\\
&
\Forall{X}\phi\in\Phi_\omega 
&\text{Cor.~\ref{corr.char.inclusion}, part~4}
\end{tab1}
\end{itemize*}
\end{proof}

\begin{lemm}
\label{lemm.varsigma.in.xi.not}
If $\not\cent\phi$, then there exists a model $\mathcal I$ and a
valuation $\varsigma$ such that $\denot{\mathcal I}{\varsigma}{\phi} =
0$. 
\end{lemm}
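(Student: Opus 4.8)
The plan is simply to assemble the machinery built up in this subsection. Recall that at the start of the subsection we fixed a formula $\phi$ with $\not\cent\phi$, and that Definition~\ref{defn.herbrand} takes $\Phi_1=\{\neg\phi\}$; hence $\neg\phi\in\Phi_\omega$. I would take $\mathcal I$ and $\varsigma$ to be exactly the term model and valuation of Definition~\ref{defn.term.model}, which the preceding Proposition already certifies to be a genuine interpretation and valuation.

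First I would note that $\phi\notin\Phi_\omega$. Indeed, Lemma~\ref{lemm.3} gives $\Phi_\omega\not\cent\bot$; but if both $\phi$ and $\neg\phi$ lay in $\Phi_\omega$, then cutting $\phi$ against $\neg\phi$ (which is $\phi\limp\bot$ under the de Morgan encoding of Remark~\ref{rmrk.no.exists}) would yield $\Phi_\omega\cent\bot$, a contradiction. Since $\neg\phi\in\Phi_\omega$, this forces $\phi\notin\Phi_\omega$.

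Then the truth lemma does the rest: by Lemma~\ref{lemm.varsigma.in.xi} we have $\denot{\mathcal I}{\varsigma}{\phi}=1$ if and only if $\phi\in\Phi_\omega$, so $\phi\notin\Phi_\omega$ gives $\denot{\mathcal I}{\varsigma}{\phi}\neq 1$. Because the interpretation of any proposition takes values in $\{0,1\}$ (a trivial induction on Definition~\ref{defn.truth}), we conclude $\denot{\mathcal I}{\varsigma}{\phi}=0$, as required.

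I do not expect any genuine obstacle: this lemma is the concluding assembly step of the completeness argument, and all the real work has already been discharged in constructing $\Phi_\omega$, verifying its consistency and closure properties (Lemmas~\ref{lemm.3} and~\ref{lemm.1}, Corollary~\ref{corr.char.inclusion}), checking that $\mathcal I$ is a bona fide permissive-nominal interpretation (Lemmas~\ref{lemm.supp.raeq} and~\ref{lemm.f.fin.equivar}), and proving the truth lemma (Lemma~\ref{lemm.varsigma.in.xi}). If anything needs a moment's care it is only the bookkeeping observation that proposition-denotations are genuinely two-valued, so that ``$\neq 1$'' can be strengthened to ``$=0$''.
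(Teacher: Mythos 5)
Your proof is correct and follows essentially the same route as the paper's: observe that $\neg\phi\in\Phi_1\subseteq\Phi_\omega$ and $\Phi_\omega\not\cent\bot$ force $\phi\not\in\Phi_\omega$, then apply the truth lemma (Lemma~\ref{lemm.varsigma.in.xi}) to the term model. You merely spell out two small steps the paper leaves implicit (the cut of $\phi$ against $\neg\phi$, and that two-valuedness upgrades ``$\neq 1$'' to ``$=0$'').
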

\begin{proof}
As $\neg \phi {\in} \Phi_0 {\subseteq} \Phi_{\omega}$ and 
$\Phi_{\omega} {\not\cent} \bot$, we have $\phi {\not\in} \Phi_{\omega}$. 
By Lemma~\ref{lemm.varsigma.in.xi}, we get 
$\denot{\mathcal I}{\varsigma}{\phi} = 0$. 
\end{proof}

As a corollary we get Theorem~\ref{thrm.completeness}:
\begin{frametxt}
\begin{thrm}[Completeness]
\label{thrm.completeness}
If $\phi$ is valid in all models, then $\phi$ is derivable.
\end{thrm}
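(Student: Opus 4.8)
The plan is to prove the contrapositive: rather than arguing directly that validity in all models forces derivability, I would show that non-derivability produces a counter-model. This is precisely the content already packaged in Lemma~\ref{lemm.varsigma.in.xi.not}, so Theorem~\ref{thrm.completeness} will fall out as an immediate corollary of the term-model construction developed above, with no new machinery required.

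In detail, I would argue as follows. First, suppose $\phi$ is \emph{not} derivable, that is $\not\cent\phi$. Then Lemma~\ref{lemm.varsigma.in.xi.not} hands me a model $\mathcal I$ together with a valuation $\varsigma$ for which $\denot{\mathcal I}{\varsigma}{\phi}=0$. By the definition of validity (Definition~\ref{defn.pnl.ment}), a single valuation under which $\phi$ denotes $0$ already witnesses that $\phi$ is not valid in $\mathcal I$, and a fortiori not valid in all models. Contraposing this implication yields exactly the statement of the theorem: if $\phi$ is valid in all models then $\cent\phi$.

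The reason this final step is so short is that all the genuine labour has been discharged in the preceding lemmas, and that is where I would expect the real obstacles to reside. The crux is the construction of the maximally consistent, witnessing extension $\Phi_\omega$ of $\{\neg\phi\}$ (Definition~\ref{defn.herbrand}), and in particular the Henkin-style witnessing clause guaranteeing Lemma~\ref{lemm.1bis}: whenever $\neg\Forall{X}\xi\in\Phi_\omega$ there is a fresh unknown $Z$ with $\neg\xi[X\ssm Z]\in\Phi_\omega$. One must then check that the term model of Definition~\ref{defn.term.model} is a \emph{bona fide} permissive-nominal interpretation — this is where the permissive-nominal setting bites, since one needs $\f{supp}(r)=\f{fa}(r)$ (Lemma~\ref{lemm.supp.raeq}) so that each $\model{\alpha}$ is a permissive-nominal set and each $\varsigma(X)=X$ has support inside $\pmss(X)$, and one needs the term- and proposition-formers to be equivariant (Lemma~\ref{lemm.f.fin.equivar}).

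The final ingredient, and the technical heart of completeness, is the truth lemma $\denot{\mathcal I}{\varsigma}{\xi}=1 \iff \xi\in\Phi_\omega$ (Lemma~\ref{lemm.varsigma.in.xi}), whose $\forall$ case leans on Corollary~\ref{corr.char.inclusion} together with the identification $\f{supp}(t)\subseteq S \iff \f{fa}(t)\subseteq S$ from Lemma~\ref{lemm.supp.raeq}. This equivalence is exactly what reconciles the \emph{semantic} quantification over elements $t$ with $\f{supp}(t)\subseteq\pmss(X)$ in Definition~\ref{defn.truth} with the \emph{syntactic} quantification over instantiating terms $r$ with $\f{fa}(r)\subseteq\pmss(X)$ that appears in \rulefont{\forall L} and in Corollary~\ref{corr.char.inclusion}. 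Granting all of these, Theorem~\ref{thrm.completeness} itself needs no argument beyond the one-line contraposition above.
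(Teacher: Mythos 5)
Your proposal is correct and matches the paper exactly: the paper likewise obtains Theorem~\ref{thrm.completeness} as an immediate corollary of Lemma~\ref{lemm.varsigma.in.xi.not} by contraposition, with all the substance residing in the preceding term-model construction and truth lemma. You have also correctly identified where the permissive-nominal setting genuinely intervenes, namely Lemma~\ref{lemm.supp.raeq} reconciling $\f{supp}$ with $\f{fa}$ in the $\forall$ case.
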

\end{frametxt}

\section{Specifying arithmetic in permissive-nominal logic}
\label{sect.pnl.arithmetic}

We start by defining the sorts, term-formers, and proposition-formers of a signature $\dot{\mathcal L}$ which is suitable for finitely specifying arithmetic in PNL.
We then specify its axioms and, in Subsection~\ref{subsect.comments.on.the.axioms}, we discuss them in detail.

\subsection{The signature $\dot{\mathcal L}$ and the axioms}
\label{subsect.dotl}

\begin{defn}
\label{defn.fotl}
A signature $\dot{\mathcal L}$ suitable for writing out a PNL theory of first-order logic is given in Figure~\ref{fig.dotl}.
\end{defn}

\begin{figure}
We assume one atomic sort $\nu$ and two base sorts $\iota$ and $o$.

We assume term-formers and proposition-formers as follows: 
$$
\begin{array}{c@{\qquad}c@{\qquad}c@{\qquad}c@{\qquad}c}
\dotzero:\iota & 
\dotsucc:(\iota)\iota & 
\dotplus:(\iota,\iota)\iota & 
\dottimes:(\iota,\iota)\iota &
\\
\dotbot:o &
\dotlimp:(o,o)o &
\dotforall:([\nu]o)o &
\dotoeq:(\iota,\iota)o &
\\
\tf{var}:(\nu)\iota
&
\tf{sub}_\iota:([\nu]\iota,\iota)\iota 
&
\tf{sub}_o:([\nu]o,\iota)o
\\[2ex]
\pnleq_\iota:(\iota,\iota)&  \pnleq_o:(o,o)& \epsilon:(o) 
\end{array}
$$ 
\caption{Signature suitable for a PNL specification of arithmetic}
\label{fig.dotl}
\end{figure}

We introduce the following syntactic sugar:
\begin{itemize*}
\item
We may write $\tf{sub}_o([a]r,t)$ as $r[a\sm t]$.
\item
We may write $\tf{sub}_\iota([a]r,t)$ as $r[a\sm t]$.
\item
We may write both $\pnleq_\iota$ and $\pnleq_o$ just as $\pnleq$.
\end{itemize*}
Examples of this syntactic sugar in use, follow immediately below.

\subsubsection*{Equality}

Axioms for equality $\pnleq:(\iota,\iota)$ and equality $\pnleq:(o,o)$ are given in Figure~\ref{fig.equality}.

\subsubsection*{Substitution}

Axioms for substitution $\tf{sub}_\iota$ and $\tf{sub}_o$ are given in Figure~\ref{fig.substitution}. 

We arguably abuse notation in Figure~\ref{fig.substitution} when we use variables of sort $\iota$ and $o$ as appropriate not necessarily giving them distinct names (e.g. in \rulefont{sub{*}} $X$ has sort $\iota$, whereas in \rulefont{sub{\dotlimp}} we use another variable also written $X$ with sort $o$).

\begin{figure}
$$
\begin{array}{l@{\quad}r@{}r@{\ }c@{\ }l@{\ \;}l}
\rulefont{{\pnleq}2}&
\Forall{X',X,Y',Y} & (X'{\pnleq}X{\land} Y'{\pnleq} Y) &\limp& X' \mathrel{op} Y'\pnleq X\mathrel{op} Y
&\f{op}{\in}\{\dotplus,\dottimes,\dotlimp,\dotoeq\}
\\
\rulefont{{\pnleq}1}&
\Forall{X',X}&X'{\pnleq} X&\limp& \f{op}(X')\pnleq \f{op}(X)
&\f{op}{\in}\{\dotsucc\}
\\
\rulefont{{\pnleq}0}&
\Forall{X}&&&X\pnleq X 
\\
\rulefont{{\pnleq}\dotforall}&
\Forall{Z',Z}&Z'{\pnleq} Z&\limp& \dotforall ([a]Z')\pnleq \dotforall ([a]Z) 
\\
\rulefont{{\pnleq}\tf{sub}}&
\Forall{X',X,Y',Y}& (X'{\pnleq} X{\land}Y'{\pnleq} Y) &\limp& \f{op}([a]X',Y') \pnleq \f{op}([a]X,Y)
&\f{op}{\in}\{\tf{sub}_\iota,\tf{sub}_o\}
\\
\rulefont{{\pnleq}\mathit o}&
\Forall{Z',Z}&Z'{\pnleq} Z&\limp& (\epsilon(Z')\liff \epsilon(Z))
\\
\rulefont{{\pnleq}\iota}&
\Forall{X',X}& X'{\pnleq} X &\limp & \epsilon(X'\dotoeq X) 
\end{array}
\vspace{-1ex}$$
We fill in sorts as appropriate.
Thus, $\dotbot\pnleq_o \dotbot$ whereas $0\pnleq_\iota 0$, and so on.
The permission sets of all variables are equal to $\atomsdown$, and $a\in\atomsdown$.
\caption{\theory{EQU}: axioms for equality as a PNL theory}
\label{fig.equality}
\end{figure}
\begin{figure}
$$
\begin{array}{lr@{\ }l@{\ }c@{\ }ll}
\rulefont{sub\tf{var}}&\Forall{X}&\tf{var}(a)[a\sm X] &\pnleq& X 
\\
\rulefont{sub\#}&\Forall{X,Z}&Z[a\sm X] &\pnleq& Z 
&
\hspace{-4em}(\pmss(Z)=(b\ a)\act\atomsdown )
\\
\rulefont{sub\dotsucc}& \Forall{X',X}&\dotsucc(X')[a\sm X]&\pnleq& \dotsucc(X'[a\sm X])
\\
\rulefont{sub{\dotplus}}& \Forall{X'',X',X}&(X''\dotplus X')[a\sm X]&\pnleq& (X''[a\sm X]\dotplus X'[a\sm X])
\\
\rulefont{sub{\dottimes}}& \Forall{X'',X',X}&(X''\dottimes X')[a\sm X]&\pnleq& (X''[a\sm X]\dottimes X'[a\sm X])
\\
\rulefont{sub{\dotlimp}}& \Forall{X'',X',X}&(X''\dotlimp X')[a\sm X]&\pnleq& (X''[a\sm X]\dotlimp X'[a\sm X])
\\
\rulefont{sub{\dotoeq}}& \Forall{X'',X',X}&(X''\dotoeq X')[a\sm X]&\pnleq& (X''[a\sm X]\dotoeq X'[a\sm X])
\\
\rulefont{sub\dotforall}&\Forall{X,Z}& (\dotforall([b]Z))[a\sm X] &\pnleq&  \dotforall([b](Z[a\sm X])) 
&
\hspace{-4em}(\pmss(Z)=(b\ a)\act\atomsdown )
\\
\rulefont{sub id}&\Forall{X}& X[a\sm \tf{var}(a)] &\pnleq& X 
\end{array}
\vspace{-1ex}$$
$a\in\atomsdown $ and $b\not\in\atomsdown$.
The permission set of $X''$, $X'$, and $X$ is equal to $\atomsdown$.
The permission set of $Z$ is equal to $(b\ a)\act\atomsdown$ (Definition~\ref{defn.pointwise}).
\caption{\theory{SUB}: axioms for substitution as a PNL theory}
\label{fig.substitution}
\end{figure}
\begin{figure}
$$
\begin{array}{l@{\hspace{6em}}r@{\ }l@{\ }c@{\ }ll}
\rulefont{\dotlimp}& \Forall{Z',Z} &\epsilon(Z'\dotlimp Z) &\liff& (\epsilon(Z')\limp\epsilon(Z)) &\hspace{6em}
\\
\rulefont{\dotforall} & \Forall{Z}&\bigl(\epsilon(\dotforall([a]Z)) &\liff& \Forall{X}\epsilon(Z[a\sm X])\bigr)
\\
\rulefont{\dotbot} & &\epsilon(\dotbot)&\limp&\bot
\end{array}
$$
Here $Z'$ and $Z$ have sort $o$, permission set $\atomsdown $, and $a\in\atomsdown $.
\caption{\theory{FOL}: axioms for first-order formulas as a PNL theory}
\label{fig.fol}
\end{figure}
\begin{figure}
$$
\begin{array}{l@{\hspace{6em}}r@{\ }l}
\rulefont{PS0}
&\Forall{X}&\dotsucc(X)\pnleq \dotzero \limp \bot 
\\
\rulefont{PSS} 
&\Forall{X',X}&\dotsucc(X')\pnleq \dotsucc(X) \limp  X'\pnleq X
\\
\rulefont{P{+}0}
&\Forall{X}&X\dotplus \dotzero\pnleq X
\\
\rulefont{P{+}succ}
&\Forall{X',X}&X'\dotplus\dotsucc(X)\pnleq \dotsucc(X')\dotplus X
\\
\rulefont{P{*}0}
&\Forall{X}&X\dottimes \dotzero\pnleq\dotzero 
\\
\rulefont{P{*}succ}
&\Forall{X',X}&X'\dottimes \dotsucc(X)\pnleq (X'\dottimes X)\dotplus X
\\[1ex]
\rulefont{PInd}
&\Forall{Z}&(\epsilon(Z[a\sm \dotzero])
\limp 
\\
&&\ \ \bigl(\Forall{X}(\epsilon(Z[a\sm X])\limp\epsilon(Z[a\sm \dotsucc(X)]))\bigr)\limp 
\\
&&\ \ \ \ \Forall{X}\epsilon(Z[a\sm X]) )
\end{array}
$$
All variables have permission set $\atomsdown $, and $a\in\atomsdown $.
\caption{\theory{ARITH}: axioms for arithmetic as a PNL theory}
\label{fig.arithmetic}
\end{figure}

\subsubsection*{First-order logic}

Axioms reflecting first-order formulas (in a shallow sense) as terms in PNL (the $\dotbot$, $\dotlimp$, and $\dotforall$) are given in Figure~\ref{fig.fol}.

\subsubsection*{Arithmetic}

Given \theory{EQU}, \theory{SUB}, and \theory{FOL}, it is not hard to write axioms for arithmetic in PNL.
This is in Figure~\ref{fig.arithmetic}.
Later on in Theorem~\ref{thrm.arithmetic} we prove that this \emph{is} an axiomatisation of arithmetic in PNL.

\subsection{Comments on the axioms}
\label{subsect.comments.on.the.axioms}

\begin{rmrk}
In \cite{gabbay:capasn-jv} capture-avoiding substitution is equationally axiomatised using nominal algebra in the style of \theory{SUB}.
Soundness and completeness are proved, so providing some formal sense in which the axioms of \theory{SUB} are `right'.

In \cite{gabbay:oneaah-jv} first-order logic is equationally axiomatised using nominal algebra (so the axioms involve only equality).
The axioms of \theory{FOL} are \emph{not} based on those of \cite{gabbay:oneaah-jv}.
In \theory{FOL}, we take advantage of the stronger language provided by PNL; Because PNL is already a first-order logic, we can use $\bot$, $\limp$, and $\forall$ directly to capture the behaviour of $\dotbot$, $\dotlimp$, and $\dotforall$.
In \cite{gabbay:oneaah-jv} we had to work a little harder because the ambient logic, nominal algebra, was purely equational.
\end{rmrk}

\begin{rmrk}
Instead of the axioms for equality \theory{EQU}, we could directly extend PNL by adding derivation rules Figure~\ref{Seq} as follows:
$$ 
\begin{prooftree}
\begin{aligned}
\Phi,\, r\pnleq s,\, &\phi[X\ssm r],\, \phi[X\ssm s]\cent \Psi
\\
&(\f{fa}(r)\cup\f{fa}(s)\subseteq\pmss(X))
\end{aligned}
\justifies
\Phi,\, r\pnleq s,\, \phi[X\ssm r]\cent \Psi
\using\rulefont{{\pnleq}S}
\end{prooftree}
\qquad\qquad
\begin{prooftree}
\Phi,\, r\pnleq r \cent \Psi
\justifies
\Phi\cent \Psi
\using\rulefont{{\pnleq}R}
\end{prooftree}
$$
\end{rmrk}

\begin{rmrk}
Every unknown has a sort, and a permission set.

Different choices of permission set may yield logically equivalent results.
For example, in \rulefont{sub\tf{lam}} it is not vital that $\pmss(Z)$ is \emph{exactly} $(b\ a)\act\atomsdown $.
The important point is that $a\not\in\pmss(Z)$.

Similarly, in \rulefont{sub\tf{app}} it is not vital that $\pmss(X'')=\pmss(X')$; when we use the axiom we can instantiate $X''$ and $X'$ to $r''$ and $r'$ such that $\f{fa}(r'')\neq\f{fa}(r')$, and conversely if we take $\pmss(X'')\neq\pmss(X')$ then we can still instantiate $X''$ and $X'$ to $r''$ and $r'$ such that $\f{fa}(r'')=\f{fa}(r')\subseteq\pmss(X'')\cap\pmss(X')$.
More on this in Section~\ref{sect.further.remarks}.
\end{rmrk}

\section{A theory of arithmetic in first-order logic}

We now recall first-order logic $\mathcal L$ and write Peano arithmetic in $\mathcal L$.
Our two main theorems are:
\begin{itemize*}
\item
Theorem~\ref{thrm.correctness} which states that the PNL theory of first-order logic written in $\dot{\mathcal L}$---in symbols this is $\theory{EQU}\cup\theory{SUB}\cup\theory{FOL}$---soundly interprets first-order logic $\mathcal L$; and
\item
Theorem~\ref{thrm.arithmetic} which states that the PNL theory of arithmetic \emph{in} the PNL theory of first-order logic, soundly and completely interprets ordinary Peano arithmetic in written in $\mathcal L$.\footnote{Missing from this is a proof that $\theory{EQU}\cup\theory{SUB}\cup\theory{FOL}$ soundly and \emph{completely} interprets first-order logic.
We believe this to be true and the proof should be an elementary simplification of the more involved case for arithmetic---but it is not worth writing out here. 

Arithmetic is (in first-order logic) axiomatised using a \emph{schema}.  We use PNL $\forall X$ to express them finitely; see e.g. the $\forall X$ in \rulefont{PInd} of Figure~\ref{fig.arithmetic} which models the `every $\xi$' in \rulefont{pind} in Figure~\ref{fig.fol.arithmetic}.

Finite first-order logic theories (including the empty theory) are unproblematic.  
In these proofs, soundness and completeness are only a means to the end of demonstrating how we can axiomatise finitely in a nominal first-order logic PNL, structures that without names and binding would require infinite axiom schemes or higher orders. 
}
\end{itemize*}

\subsection{First-order logic $\mathcal L$}

We will use the atoms $\mathbb A_\nu$ from $\dot{\mathcal L}$ in Section~\ref{sect.pnl.arithmetic} as variables of our first-order logic (this is not necessary, but it is convenient).
So for this section, $a,b,c,\ldots$ will range over distinct atoms in $\mathbb A_\nu$.

\begin{defn}
\label{defn.L}
Define \deffont{terms} and \deffont{formulas} of $\mathcal L$ by:
\begin{frameqn}
\begin{array}{r@{\ }l}
t ::=& a \mid 0 \mid \f{succ}(t) \mid t+t \mid t*t
\\
\xi ::=& t\oeq t \mid \bot \mid \xi\limp\xi \mid \Forall{a}\xi 
\end{array} 
\end{frameqn}
Substitution $t'[a\ssm t]$ and $\xi[a\ssm t]$ is as usual for first-order logic.
We write sequents $\Xi\cent\Chi$ where $\Xi$ and $\Chi$ are sets of formulas.
Derivability is as usual for first-order logic.
\end{defn}

\begin{defn}
\label{defn.amod}
Define a mapping $\amod{\text{-}}$ from terms and formulas of $\mathcal L$ 
to terms of $\dot{\mathcal L}$ 
by:
\begin{frameqn}
\begin{array}{r@{\ }l@{\qquad}r@{\ }l}
\amod{a}=&a
&
\amod{0}=&\dotzero
\\
\amod{\f{succ}(t)}=&\dotsucc(\amod{t})
&
\amod{t'+t}=&\amod{t'}\dotplus\amod{t}
\\
\amod{t'*t}=&\amod{t'}\dottimes\amod{t}
\\
\amod{t'\oeq t}=&\amod{t'}\dotoeq\amod{t}
&
\amod{\bot}=&\dotbot
\\
\amod{\xi'\limp\xi}=&\amod{\xi'}\mathrel{\dotlimp}\amod{\xi}
&
\amod{\forall a.\xi}=&\dotforall [a]\amod{\xi}
\end{array}
\end{frameqn}
\end{defn}

\begin{defn}
Extend $\amod{\text{-}}$ to first-order logic sequents $\Xi\cent \Chi$ as follows:
\begin{frameqn}
\amod{\Xi\cent\Chi} = \epsilon(\dotforall [a_1]\ldots\dotforall [a_n]\amod{(\xi_1\land \ldots\land \xi_k) \limp (\chi_1\lor\ldots\lor\chi_l)})
\end{frameqn}
Here, $\Xi=\{\xi_1,\ldots,\xi_k\}$,\ $\Chi=\{\chi_1,\ldots,\chi_l\}$,\ and the free variables of $\Xi$ and $\Chi$ are $\{a_1,\ldots,a_n\}$ (in some order).
\end{defn}

\begin{nttn}
\label{nttn.S}
Write 
$\theory S$ for $\theory{EQU}\cup\theory{SUB}\cup\theory{FOL}$.
\end{nttn}

\begin{lemm}
\label{lemm.amod.ssm}
$\begin{array}[t]{r@{\ }l}
\theory S\cent\amod{t'[a\ssm t]}\pnleq&\amod{t'}[a\sm \amod{t}]\quad \text{and}
\\
\theory S\cent\amod{\xi[a\ssm t]}\pnleq&\amod{\xi}[a\sm \amod{t}].
\end{array}
$
\end{lemm}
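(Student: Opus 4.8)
The plan is to prove both parts by structural induction, the first on the $\mathcal L$-term $t'$ and the second on the $\mathcal L$-formula $\xi$, with the formula induction invoking the completed term statement in its atomic case. A fact used throughout is that the axioms of \theory{EQU} present $\pnleq$ as a congruence: it is provably reflexive (\rulefont{{\pnleq}0}), symmetric, transitive, and compatible with every term- and proposition-former of $\dot{\mathcal L}$, so a provable equality between immediate subterms may be propagated underneath $\dotsucc$, $\dotplus$, $\dottimes$, $\dotoeq$, $\dotlimp$ and $\dotforall([c]\text{-})$. Each inductive step then has a uniform shape: apply the matching \theory{SUB} axiom to push the object-level substitution $[a\sm\amod t]$ one constructor inward, rewrite the resulting subterms by the inductive hypotheses using compatibility, and chain by transitivity. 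Instantiating the \theory{SUB} axioms at $\amod t$ requires $\f{fa}(\amod t)\subseteq\atomsdown$, which holds under the standing convention that the free variables in play lie in $\atomsdown$.

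The base cases of the term induction carry the real content. If $t'$ is the substituted variable $a$, then $a[a\ssm t]=t$, so the left-hand side is $\amod t$, while the right-hand side is $\tf{var}(a)[a\sm\amod t]$ (a variable is translated via $\tf{var}$), which \rulefont{sub\tf{var}} rewrites to $\amod t$ directly. If $t'$ is any other variable, or $\dotzero$ --- more generally any subterm of $\amod{t'}$ in which $a$ does not occur free --- then first-order substitution leaves it unchanged, and \rulefont{sub\#} applies: the free atoms of such a subterm lie in $\atomsdown\setminus\{a\}\subseteq(b\ a)\act\atomsdown$, so it is a legal instance of the bound unknown $Z$ and we obtain $Z[a\sm\amod t]\pnleq Z$. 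The remaining term-formers $\f{succ}(t'')$, $t_1+t_2$, $t_1*t_2$ follow from \rulefont{sub\dotsucc}, \rulefont{sub{\dotplus}}, \rulefont{sub{\dottimes}} together with the inductive hypotheses and compatibility.

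For the formula induction the atomic case $t_1\oeq t_2$ reduces to part~1 via \rulefont{sub{\dotoeq}}, the case $\dotbot$ reduces to \rulefont{sub\#} (as $\f{fa}(\dotbot)=\varnothing$), and $\xi_1\limp\xi_2$ reduces to \rulefont{sub{\dotlimp}}. The quantifier case is the one needing care. To analyse $(\Forall{b}\xi')[a\ssm t]$ I first $\alpha$-rename the bound variable so that it is precisely the fixed fresh atom $b\notin\atomsdown$ appearing in \rulefont{sub\dotforall}; this is legitimate since $b\neq a$ and $b\notin\f{fa}(\amod t)$, and it makes first-order substitution commute with the binder. Now \rulefont{sub\dotforall} rewrites $(\dotforall([b]\amod{\xi'}))[a\sm\amod t]$ to $\dotforall([b](\amod{\xi'}[a\sm\amod t]))$, while the inductive hypothesis on $\xi'$ supplies $\amod{\xi'[a\ssm t]}\pnleq\amod{\xi'}[a\sm\amod t]$.

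The main obstacle is the last step of this case: propagating that inner equality underneath $\dotforall([b]\text{-})$. The congruence axiom \rulefont{{\pnleq}\dotforall} is stated only for the abstracted atom $a\in\atomsdown$ and for an unknown of permission set $\atomsdown$, whereas here we must abstract $b\notin\atomsdown$ and the bodies $\amod{\xi'}[a\sm\amod t]$ legitimately contain $b$ free, so the relevant unknown must have permission set $(b\ a)\act\atomsdown$. The device that resolves this is equivariance: by the permutation $\pi$ in \rulefont{Ax} every axiom instance may be transported along any $\pi$, and applying the swapping $(a\ b)$ to \rulefont{{\pnleq}\dotforall} yields exactly the variant that abstracts $[b]$ and whose unknown has permission set $(b\ a)\act\atomsdown$. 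With this variant the step closes by compatibility and transitivity as in the other cases, and the two statements follow by their respective inductions.
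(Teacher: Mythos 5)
Your overall strategy---induction on $t$ and on $\xi$, pushing $[a\sm\amod t]$ one constructor inward with the matching \theory{SUB} axiom and closing with the \theory{EQU} congruences---is exactly the paper's (one-line) ``routine induction'', and your base cases and non-binding inductive steps are fine. The gap is in the $\dotforall$ case, and it is in the step you present as unproblematic rather than the one you flag. To use \rulefont{sub\dotforall} you must instantiate its unknown $Z$ at $\amod{\xi'}$, and \rulefont{{\forall}L} requires $\f{fa}(\amod{\xi'})\subseteq\pmss(Z)=(b\ a)\act\atomsdown=(\atomsdown\cup\{b\})\setminus\{a\}$. This permission set excludes $a$, so the instantiation is illegal precisely when $a$ occurs free in $\xi'$---the only case in which the substitution does anything. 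Equivariance cannot rescue this: in any instance of \rulefont{sub\dotforall} obtained by applying a permutation $\pi$ and then instantiating $Z\ssm r$, $X\ssm s$, the left-hand side is $\tf{sub}_o([\pi(a)]\dotforall([\pi(b)]\pi\act r),\pi\act s)$ and $\pi(a)\notin\f{fa}(\dotforall([\pi(b)]\pi\act r))$ because $\pi(a)\notin\pi\act\pmss(Z)$; since vacuity of an abstraction is an $\alpha$-invariant, no such instance can ever be the term $\tf{sub}_o([a]\dotforall([b]\amod{\xi'}),\amod t)$ in which $[a]$ genuinely binds. (Symptomatically, with $\pmss(Z)=(b\ a)\act\atomsdown$ both sides of \rulefont{sub\dotforall} already reduce to $\dotforall([b]Z)$ via \rulefont{sub\#} and congruence, so the axiom as printed carries no binding content.)

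A second, smaller slip sits in the step you do discuss: transporting \rulefont{{\pnleq}\dotforall} along the swapping $(a\ b)$ produces unknowns ranging over terms with free atoms in $(b\ a)\act\atomsdown$, which again excludes $a$; but the two bodies $\amod{\xi'[a\ssm t]}$ and $\amod{\xi'}[a\sm\amod t]$ contain $a$ free whenever $a\in\f{fv}(t)$. That one is repairable---transport along a cycle sending some $a''\in\atomsdown$ fresh for the bodies to $a$ and $a$ to $b$, so that the transported permission set contains both $a$ and $b$. The \rulefont{sub\dotforall} step is not repairable within the axioms as printed: the proof of the quantifier case only closes if the permission set of $Z$ in \rulefont{sub\dotforall} is read as $\atomsdown\cup\{b\}$ (the permissive rendering of the nominal-algebra abstraction axiom, which places the freshness constraint $b\#X$ on the \emph{substituted} term, not an $a$-freshness constraint on the body). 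You should make that reading explicit, and you should also justify symmetry and transitivity of $\pnleq$, which your rewrite chains use throughout but which \theory{EQU} does not list among its axioms.
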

\begin{proof}
By routine inductions on $t$ and $\xi$.
\end{proof}

\begin{thrm}[Correctness]
\label{thrm.correctness}
If $\Xi\cent\Chi$ is derivable in first-order logic then
$\theory S\cent\amod{\Xi\cent \Chi}$ is derivable in PNL.
\end{thrm}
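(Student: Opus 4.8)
The plan is to argue by induction on the derivation of $\Xi\cent\Chi$ in first-order logic, reducing each $\mathcal L$ rule to a PNL derivation from $\theory S$. Since the translation $\amod{\Xi\cent\Chi}$ universally closes over the free variables and bundles everything under a single $\epsilon(\text{-})$, I would first isolate an \emph{open} auxiliary claim that is better suited to induction: writing $\epsilon(\amod\Xi)$ for $\{\epsilon(\amod\xi)\mid\xi\in\Xi\}$, I would show that whenever $\Xi\cent\Chi$ is derivable in $\mathcal L$ then the PNL sequent $\theory S,\epsilon(\amod\Xi)\cent\epsilon(\amod\Chi)$ is derivable. The theorem then follows by folding the finite context and cocontext back into a single $o$-term using the \theory{FOL} axioms \rulefont{\dotlimp} and \rulefont{\dotbot} (with the de Morgan readings of $\land,\lor$, since $\mathcal L$ has only $\bot,\limp,\forall,\oeq$), and then re-introducing each leading $\dotforall[a_i]$ by PNL \rulefont{{\forall}R} followed by axiom \rulefont{\dotforall}.

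A key bookkeeping decision in the inductive step is to treat each first-order variable uniformly as an unknown of sort $\iota$ and permission set $\atomsdown$ (reading $\amod\xi$ with each free $\tf{var}(a)$ replaced by its unknown), because this makes the eigenvariable side-condition of $\mathcal L$'s \rulefont{{\forall}R} line up exactly with the condition $X\notin\f{fV}(\Phi,\Psi)$ of PNL's \rulefont{{\forall}R}. The propositional rules \rulefont{Ax}, \rulefont{\bot L}, \rulefont{{\limp}L}, \rulefont{{\limp}R}, and \rulefont{Cut} transport directly, using \rulefont{\dotbot} to identify $\epsilon(\amod\bot)$ with $\bot$ and \rulefont{\dotlimp} to identify $\epsilon(\amod{\xi\limp\chi})$ with $\epsilon(\amod\xi)\limp\epsilon(\amod\chi)$. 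The equality rules of $\mathcal L$ are discharged by \theory{EQU}: reflexivity from \rulefont{{\pnleq}0}, replacement of equals from the congruence axioms \rulefont{{\pnleq}2} and \rulefont{{\pnleq}1}, and the bridge between the object-level $\dotoeq$ and the meta-level $\pnleq$ from \rulefont{{\pnleq}\iota} and \rulefont{{\pnleq}\mathit o}.

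The two quantifier cases carry the real work. For \rulefont{{\forall}L} I would use Lemma~\ref{lemm.amod.ssm} to rewrite $\epsilon(\amod{\xi[a\ssm t]})$ as $\epsilon(\amod\xi[a\sm\amod t])$, observe by \rulefont{\dotforall} that $\epsilon(\dotforall[a]\amod\xi)$ is equivalent to $\Forall{X}\epsilon(\amod\xi[a\sm X])$, and instantiate that universal at $\amod t$ by PNL \rulefont{{\forall}L} with $[X\ssm\amod t]$; the \theory{SUB} axioms are what guarantee the object-level substitution $\tf{sub}_o([a]\amod\xi,\amod t)$ actually computes to $\amod\xi[a\sm\amod t]$. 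For \rulefont{{\forall}R}, the translated goal $\epsilon(\dotforall[a]\amod\xi)$ is first turned by \rulefont{\dotforall} into $\Forall{X}\epsilon(\tf{sub}_o([a]\amod\xi,X))$, to which PNL \rulefont{{\forall}R} applies with a fresh eigen-unknown $X$; the remaining obligation $\epsilon(\tf{sub}_o([a]\amod\xi,X))$ is reduced, via the \theory{SUB} axioms and \rulefont{{\pnleq}\mathit o}, to the induction hypothesis with the bound variable already carried as $X$.

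The main obstacle I anticipate is precisely this last passage: moving from a first-order eigenvariable to a PNL unknown without an illicit specialisation-in-reverse. If one were to keep the free variables as atoms, the \rulefont{{\forall}R} case would demand generalising a derivation over a fresh atom into a universal over an unknown, which PNL does not support directly. Carrying the free variables as fresh unknowns from the outset is what dissolves this, at the cost of some care in tracking how the free-variable set grows when \rulefont{{\forall}L} introduces a witness $t$ with new variables; I would handle those by extending the unknown-assignment and noting that the extra unknowns occur only in the instantiating witness, leaving the ambient context $\theory S,\epsilon(\amod\Xi)$ and cocontext $\epsilon(\amod\Chi)$ unchanged.
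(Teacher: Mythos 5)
Your proposal is correct and is essentially the paper's argument: the paper dispatches this theorem with ``a long but routine inspection'' that \theory{EQU}, \theory{SUB}, and \theory{FOL} model the first-order rules, using Lemma~\ref{lemm.amod.ssm} for the quantifier cases, and your induction on derivations is exactly that inspection written out. Your one substantive addition --- carrying the free first-order variables as unknowns of permission set $\atomsdown$ so that the eigenvariable condition of \rulefont{{\forall}R} transports to PNL's $X\not\in\f{fV}(\Phi,\Psi)$, rather than being stuck trying to generalise over an atom --- is a genuine and correctly identified piece of the ``routine'' bookkeeping the paper leaves implicit.
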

\begin{proof}
By a long but routine inspection we can check that \theory{EQU}, \theory{SUB}, and \theory{FOL} allow us to model the behaviour of `real' first-order logic.
We use Lemma~\ref{lemm.amod.ssm}.
\end{proof}

\subsection{Interpretation of first-order logic}

We recall the usual definition of interpretations in first-order logic:
\begin{frametxt}
\begin{defn}
A \deffont{(first-order logic) interpretation} $\mathcal M$ is
a \deffont{carrier set} $M$, and  
elements 
$$
0^\mden \in M, \quad
\f{succ}^\mden\in M\to M, \quad
+^\mden\in (M\times M)\to M, \quad\text{and}\quad
*^\mden\in (M\times M)\to M.
$$
\end{defn}
\end{frametxt}
It is convenient to fix some $\mathcal M$ from here until Theorem~\ref{thrm.arithmetic}.

\begin{defn}
\label{defn.valu.M}
Define $\f{Valu_{\mathbb A_\nu}}(M)$ by:
$$
\label{eq.valu.M}
\f{Valu_{\mathbb A_\nu}}(M)=\{\varepsilon \in \mathbb A_\nu\to M\mid \Exists{A\subseteq\mathbb A_\nu}A\ \text{finite}\ \land \ 
\Forall{a,b\not\in A}\varepsilon(a)=\varepsilon(b)\}
$$
Call elements of $\f{Valu_{\mathbb A_\nu}}(M)$ \deffont{$\mathbb A_\nu$-valuations} (to $M$).
$\varepsilon$ will range over $\mathbb A_\nu$-valuations.

If $x\in M$ write $\varepsilon[a\ssm x]$ for the valuation mapping $b$ to $\varepsilon(b)$ and mapping $a$ to $x$:
$$
\begin{array}{r@{\ }l}
\varepsilon[a\ssm x](a)=&x
\\
\varepsilon[a\ssm x](b)=&\varepsilon(b)
\end{array}
$$
Give $\varepsilon\in \f{Valu_{\mathbb A_\nu}}(M)$ and $X\subseteq\f{Valu_{\mathbb A_\nu}}(M)$ a \deffont{pointwise} permutation action: 
$$
\begin{array}{r@{\ }l}
(\pi\act\varepsilon)(a)=&\varepsilon(\pi\mone(a)) .
\\
\pi\act X =& \{\pi\act \varepsilon\mid \varepsilon\in X\} .
\end{array}
$$
$U,V$ will range over \emph{finitely-supported} subsets of $\f{Valu_{\mathbb A_\nu}}(M)$---so there exists some finite $A\subseteq\mathbb A_\nu$ such that for all $\pi$, if $\pi(a)=a$ for all $a\in A$ then $\pi\act U=U$.
\end{defn}

\begin{rmrk}
$\f{Valu_{\mathbb A_\nu}}(M)$ would normally just be called `the set of valuations'. 
We are more specific since we separately also have valuations on unknowns $X$ (Definition~\ref{defn.valuation}).

PNL atoms are serving as variable symbols of $\mathcal L$.  To
conveniently apply nominal techniques, it is useful to restrict to
valuations that are finite in the sense given in Definition~\ref{defn.valu.M}.
In any case, any term or formula will only contain finitely many
atoms.
\end{rmrk}

\begin{defn}
\label{defn.interpret.L} 
We extend the interpretation  to first-order logic syntax as follows:
\begin{frameqn}
\begin{array}{r@{\ }l}
\denot{\mathcal M}{\varepsilon}{a}
=& \varepsilon(a)
\\
\denot{\mathcal M}{\varepsilon}{0}
=&0^\mden
\\
\denot{\mathcal M}{\varepsilon}{\f{succ}(t)} =&\f{succ}^\mden(\denot{\mathcal M}{\varepsilon}{t})
\\
\denot{\mathcal M}{\varepsilon}{t'+t}=&+^\mden(\denot{\mathcal M}{\varepsilon}{t'},\denot{\mathcal M}{\varepsilon}{t})
\\
\denot{\mathcal M}{\varepsilon}{t'*t}=&
*^\mden(\denot{\mathcal M}{\varepsilon}{t'},\denot{\mathcal M}{\varepsilon}{t})
\\
\denot{\mathcal M}{\varepsilon}{\bot} =& 0
\\
\denot{\mathcal M}{\varepsilon}{\xi'\limp\xi}=& 
\f{max}\{1{-}\denot{\mathcal M}{\varepsilon}{\xi'},\denot{\mathcal M}{\varepsilon}{\xi}\} 
\\
\denot{\mathcal M}{\varepsilon}{\Forall{a}\xi}=&
\f{min}\{\denot{\mathcal M}{\varepsilon[a\ssm x]}{\xi}\mid x\in M\}
\\
\denot{\mathcal M}{\varepsilon}{t'\oeq t}=& 1\text{ if }
\denot{\mathcal M}{\varepsilon}{t'} = \denot{\mathcal M}{\varepsilon}{t}
\text{ and }0\text{ otherwise}
\end{array}
\end{frameqn}
\end{defn}

\begin{defn}
Call the formula $\xi$ \deffont{valid} in ${\mathcal M}$ when 
$\denot{\mathcal M}{\varepsilon}{\xi} = 1$ for all $\varepsilon$. 

Call $\xi_1, \ldots, \xi_k \vdash \chi_1, \ldots, \chi_l$ \deffont{valid} 
in ${\mathcal M}$ when 
$(\xi_1\land\ldots\land\xi_k)\limp(\chi_1\lor\ldots\lor\chi_l)$ is valid.
\end{defn}

\subsection{A theory of arithmetic in $\mathcal L$}

\begin{figure*}[t]
$$
\begin{array}[t]{lr@{\ }l}
\rulefont{ps0}
&\Forall{a}&\f{succ}(a)\oeq 0 \limp \bot 
\\
\rulefont{pss}
&\Forall{a',a}&\f{succ}(a)\oeq \f{succ}(a') \limp  a\oeq a'
\\
\rulefont{p{+}0}
&\Forall{a}&a+0\oeq a
\\
\rulefont{p{+}succ}
&\Forall{a',a}&a'+\f{succ}(a)\oeq \f{succ}(a')+a
\\
\rulefont{p{*}0}
&\Forall{a}&a*0\oeq 0
\\
\rulefont{p{*}succ}
&\Forall{a',a}&a'*\f{succ}(a)\oeq (a'*a)+a
\\[1ex]
\rulefont{pind}
&&\xi[a\ssm 0]\limp 
\\
&&\ \ \Forall{a}(\xi\limp \xi[a\ssm \f{succ}(a)]) \limp 
\\
&&\ \ \ \ \Forall{a}\xi\quad\qquad\qquad\qquad\qquad\qquad\text{(every $\xi$, every $a$)}
\end{array}
$$
\caption{\theory{arithmetic}: axioms for arithmetic in first-order logic}
\label{fig.fol.arithmetic}
\end{figure*}

\begin{frametxt}
\begin{defn}
\label{defn.model.arithmetic}
Define a first-order theory of \deffont{arithmetic} by the axioms in Figure~\ref{fig.fol.arithmetic}.

An interpretation is a \deffont{model} of arithmetic when 
$\denot{\mathcal M}{}{\xi} = 1$ 
for $\xi$ each of \rulefont{ps0}, \rulefont{pss}, \rulefont{p{+}0}, 
\rulefont{p{+}succ}, \rulefont{p{*}0}, \rulefont{p{*}succ}, 
and every instance of \rulefont{pind}.
\end{defn}
\end{frametxt}

\begin{rmrk}
\rulefont{pind} the induction axiom-scheme is of course of particular interest.
We therefore unpack what its validity 
$$
\label{eq.ind}
\denot{\mathcal M}{}{\xi[a\ssm 0]\limp \Forall{a}(\xi\limp
    \xi[a\ssm \f{succ}(a)]) \limp \Forall{a}\xi} = 1
\qquad\text{(every $\xi$, every $a$)}
$$
means, in a little more detail.
For every $a$ and $\xi$:
\begin{itemize*}
\item
If $\denot{\mathcal M}{\varepsilon}{\xi[a\ssm 0]} = 1$, and
\item
if for every $x\in M$,\ \ $\denot{\mathcal M}{\varepsilon[a\ssm x]}{\xi} = 1$
 implies that 
$\denot{\mathcal M}{\varepsilon[a\ssm x]}{\xi[a\ssm \f{succ}(a)]}
= 1$, 
\item
then for every $x\in M$,\ \ $\denot{\mathcal M}{\varepsilon[a\ssm x]}{\xi} = 1$.
\end{itemize*} 
In \rulefont{pind} we take `every $a$', and in \rulefont{PInd} we do not.
This is because in \rulefont{PInd}, $a$ is $\alpha$-convertible,  
\end{rmrk}

\subsection{Building an interpretation for $\dot{\mathcal L}$ out of one for $\mathcal L$}
\label{subsect.building}

Recall the PNL signature $\dot{\mathcal L}$ from Section~\ref{sect.pnl.arithmetic}.
Suppose $\mathcal M$ is a model of $\mathcal L$.
We use it to build an interpretation $\mathcal N$ of 
$\dot{\mathcal L}$.

\begin{defn}
Extend $\mathcal L$ to $\mathcal L{+}M$ where we add all elements of $M$ as constants, and extend the interpretation to interpret these constants as themselves in $M$.
(So if $x\in M$ then $x$ is a constant symbol in $\mathcal L{+}M$ and 
$\denot{\mathcal M}{\varepsilon}{x} = x$.)

Define an $\mathbb A_\nu$-valuation $\varepsilon_0\in\f{Valu_{\mathbb A_\nu}}(M)$ by 
$$
\varepsilon_0(a)=0^\mden\text{\quad always.}
$$ 

If $t$ is a term, we write $\denot{\mathcal M}{}{t}$ for the function 
$\lam{\varepsilon} \denot{\mathcal M}{\varepsilon}{t}$. 
If $\xi$ is a formula, we write $\denot{\mathcal M}{}{\xi}$ for the function 
$\lam{\varepsilon} \denot{\mathcal M}{\varepsilon}{\xi}$. 

We now define an interpretation $\mathcal N$ for $\dot{\mathcal L}$. 
We give a denotation to the base sorts $\iota$ and $o$ of $\dot{\mathcal L}$, as follows:
\begin{frameqn}
\begin{array}{r@{\ =\ }l@{\qquad}r@{\ =\ }l}
\iota^\nden & \{\denot{\mathcal M}{}{t} \mid t\text{ a term of }\mathcal L{+}M\}
&
\pi\act\denot{\mathcal M}{}{t}&\denot{\mathcal M}{}{\pi\act t} 
\\
o^\nden &
\{\denot{\mathcal M}{}{\xi} \mid \xi\text{ a formula of }\mathcal L{+}M\}
&
\pi\act\denot{\mathcal M}{}{\xi}&\denot{\mathcal M}{}{\pi\act \xi}
\end{array}
\end{frameqn}
\label{defn.sig.dot.L}
We give a denotation to the term-formers and proposition-formers of $\dot{\mathcal L}$, as follows:

\begin{frameqn}
\hspace{-1em}\begin{array}{r@{\,=\,}l}
{\tf{var}}^\nden\, a\, \varepsilon & \varepsilon(a) 
\\
\dotzero^\nden \,\varepsilon &0^\mden
\\
(\dotsucc^\nden \,u)\,\varepsilon &\f{succ}^\mden(u\varepsilon)
\\
\dotplus^\nden\,(u,v)\,\varepsilon &+^\mden(u\varepsilon,v\varepsilon)
\\
\dottimes^\nden\,(u,v)\,\varepsilon &*^\mden(u\varepsilon,v\varepsilon)
\\
\tf{sub}_\iota^\nden\,([a]u,v)\,\varepsilon & u(\varepsilon[a{\ssm} v\varepsilon])
\\
{\dotbot}^\nden \,\varepsilon & 0
\end{array}
\quad
\begin{array}{r@{\,=\,}l}
\tf{sub}_o^\nden\,([a]u,v)\,\varepsilon  & u(\varepsilon[a{\ssm} v\varepsilon])
\\
{\dotlimp}^\nden\,(U,V)\,\varepsilon  & 
\f{max}\{1{-}U\varepsilon,V\varepsilon\}
\\
{\dotforall}^\nden\,([a]U)\,\varepsilon  &  \f{min} \{U(\varepsilon[a{\ssm} x])\mid x \in M\}
\\
{\dot{\oeq}}^\nden\,(u,v)\,\varepsilon  & {\oeq^\mden}(u\varepsilon,v\varepsilon)
\\
\oeq_\iota^\nden\,(u,v) & 1\text{ if }u{=}v\text{ and }0\text{ otherwise}
\\
\oeq_o^\nden\,(U,V) & 1\text{ if }U{=}V\text{ and }0\text{ otherwise}
\\
\epsilon^\nden\,U & U(\varepsilon_0)
\end{array}
\end{frameqn}
Here, $u$ and $v$ range over $\iota^\nden$ and $U$ and $V$ range over $o^\nden$.
We insert brackets where this might increase clarity.  
\end{defn}

\maketab{tab3}{@{\hspace{-1em}}R{3em}@{\ }L{7em}@{\ }R{3em}@{\ }L{20em}}
\begin{rmrk}
For the reader's convenience we indicate the types of some of the symbols above in an informal `crib sheet':
\begin{tab3}
\varepsilon:&\mathbb A_\nu {\to} M
&
{\tf{var}}^\nden:&\mathbb A_\nu {\to} (\mathbb A_\nu{\to} M) {\to} M
\\
u:&(\mathbb A_\nu{\to} M){\to} M
&
U:&(\mathbb A_\nu{\to} M){\to} \{0,1\}
\\
0^\mden:&M
&
\dotzero^\nden:&(\mathbb A_\nu{\to} M){\to} M
\\
{\oeq^\mden}:&M^2{\to}\{0,1\}
&
{\dot{\oeq}}^\nden:&((\mathbb A_\nu{\to}M){\to}M)^2{\to}(\mathbb A_\nu{\to}M){\to}\{0,1\}
\\
&
&
{\dotforall}^\nden:&[\mathbb A_\nu]((\mathbb A_\nu{\to}M){\to}\{0,1\}) \to (\mathbb A_\nu{\to}M){\to}\{0,1\}
\end{tab3}
This crib sheet is only indicative since, for instance, $\varepsilon$ is not \emph{any} function in $\mathbb A_\nu{\to}M$ (see Definition~\ref{defn.valu.M}).
\end{rmrk}

\begin{lemm}
\label{lemm.compose}
\begin{enumerate}
\item
$\denot{\mathcal M}{\varepsilon}{t'[a\ssm t]} =
\denot{\mathcal M}
{\varepsilon[a\ssm \denot{\mathcal M}{\varepsilon}{t}]}{t'}
$. 
\item
$\denot{\mathcal M}{\varepsilon}{\xi[a\ssm t]} = 1$
\ if only if\   
$\denot{\mathcal M}
{\varepsilon[a\ssm \denot{\mathcal M}{\varepsilon}{t}]}
{\xi}
 = 1$. 
\end{enumerate}
\end{lemm}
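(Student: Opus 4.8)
The plan is to prove both parts by structural induction, with part~1 (on terms) feeding the atomic cases of part~2 (on formulas). Before starting I would record the standard \emph{coincidence} property for $\mathcal L$: the value $\denot{\mathcal M}{\varepsilon}{t}$ depends only on the restriction of $\varepsilon$ to the finitely many free variables of $t$, and similarly for formulas. This follows by a trivial induction from Definition~\ref{defn.interpret.L}, and it is exactly what is needed in the quantifier case below.

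For part~1 I would induct on the structure of $t'$. The two base cases are the variables. If $t'=a$ then $a[a\ssm t]=t$, so the left-hand side is $\denot{\mathcal M}{\varepsilon}{t}$, while the right-hand side is $\varepsilon[a\ssm \denot{\mathcal M}{\varepsilon}{t}](a)=\denot{\mathcal M}{\varepsilon}{t}$; if $t'=b$ with $b\neq a$ then $b[a\ssm t]=b$ and both sides equal $\varepsilon(b)$. The remaining cases ($0$, $\f{succ}(s)$, $s'+s$, $s'*s$) push the substitution past the term-former, e.g. $\f{succ}(s)[a\ssm t]=\f{succ}(s[a\ssm t])$, and then apply the inductive hypothesis to the immediate subterms; each is a one-line unfolding of Definition~\ref{defn.interpret.L}. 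I would phrase part~2 as the equality $\denot{\mathcal M}{\varepsilon}{\xi[a\ssm t]}=\denot{\mathcal M}{\varepsilon[a\ssm \denot{\mathcal M}{\varepsilon}{t}]}{\xi}$ of values in $\{0,1\}$, which immediately yields the stated `$=1$ if and only if $=1$' form.

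For part~2 I would induct on $\xi$. The atomic case $t_1\oeq t_2$ reduces directly to part~1 applied to $t_1$ and $t_2$: equal denotations of the two subterms on each side give the same truth value. The cases $\bot$ and $\xi_1\limp\xi_2$ are immediate from the corresponding clauses of Definition~\ref{defn.interpret.L} together with, for $\limp$, the inductive hypotheses on $\xi_1$ and $\xi_2$.

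The quantifier case $\Forall{b}\xi'$ is the only real obstacle, and it is handled by the usual capture-avoidance bookkeeping. By $\alpha$-renaming $b$ I may assume $b\neq a$ and $b\notin\f{fv}(t)$, so that $(\Forall{b}\xi')[a\ssm t]=\Forall{b}(\xi'[a\ssm t])$. Writing $y=\denot{\mathcal M}{\varepsilon}{t}$, I unfold the $\forall$-clause of Definition~\ref{defn.interpret.L} to get $\f{min}\{\denot{\mathcal M}{\varepsilon[b\ssm x]}{\xi'[a\ssm t]}\mid x\in M\}$, apply the inductive hypothesis for $\xi'$ at each valuation $\varepsilon[b\ssm x]$, and then use two facts: first, since $b\notin\f{fv}(t)$, coincidence gives $\denot{\mathcal M}{\varepsilon[b\ssm x]}{t}=y$; second, since $b\neq a$, the two single-point updates commute, $\varepsilon[b\ssm x][a\ssm y]=\varepsilon[a\ssm y][b\ssm x]$. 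Combining these rewrites the minimand as $\denot{\mathcal M}{\varepsilon[a\ssm y][b\ssm x]}{\xi'}$, and folding the $\forall$-clause back up yields $\denot{\mathcal M}{\varepsilon[a\ssm y]}{\Forall{b}\xi'}$, which is exactly the right-hand side. I expect the care to go into justifying the $\alpha$-renaming step cleanly and into having the coincidence lemma in place; both are routine for $\mathcal L$ but must be established before the quantifier computation goes through.
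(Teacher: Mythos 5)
Your proof is correct: it is the classical substitution lemma for first-order logic, proved by the standard double induction with the coincidence lemma and commutation of updates handling the quantifier case. The paper states Lemma~\ref{lemm.compose} without any proof at all (treating it as the routine substitution lemma for $\mathcal L$), so your argument simply supplies the standard details the authors omit; there is no divergence of approach to report.
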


\begin{lemm}
\label{lemm.some.equalities}
The following equalities all hold:
$$
\begin{array}[t]{r@{\ }l} 
\tf{var}^\nden(a)=&\denot{\mathcal M}{}{a}
\\
\dotzero^\nden=&\denot{\mathcal M}{}{0}
\\
\dotsucc^\nden(\denot{\mathcal M}{}{t})=&\denot{\mathcal M}{}{\f{succ}(t)}
\\
\dotplus^\nden(\denot{\mathcal M}{}{t'},\denot{\mathcal M}{}{t})=&\denot{\mathcal M}{}{t'+t}
\\
\dottimes^\nden(\denot{\mathcal M}{}{t'},\denot{\mathcal M}{}{t})=&\denot{\mathcal M}{}{t'*t}
\end{array}
\qquad
\begin{array}[t]{r@{\ }l} 
\tf{sub}_\iota^\nden([a]\denot{\mathcal M}{}{t'},\denot{\mathcal M}{}{t})=&\denot{\mathcal M}{}{t'[a\ssm t]}
\\
\tf{sub}_o^\nden([a]\denot{\mathcal M}{}{\xi},\denot{\mathcal M}{}{s})=& \denot{\mathcal M}{}{\xi[a\ssm s]}
\\
{\dotbot}^\nden=&\denot{\mathcal M}{}{\bot}
\\
{\dotlimp}^\nden(\denot{\mathcal M}{}{\xi'},\denot{\mathcal M}{}{\xi})=&\denot{\mathcal M}{}{\xi'\limp\xi}
\\
{\dotforall}^\nden([a]\denot{\mathcal M}{}{\xi})=&\denot{\mathcal M}{}{\Forall{a}\xi}
\\
{\dotoeq}^\nden(\denot{\mathcal M}{}{r},\denot{\mathcal M}{}{s})=&\denot{\mathcal M}{}{r\oeq s}
\end{array}
$$
\end{lemm}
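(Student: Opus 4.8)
The plan is to observe that each displayed equation is an identity between \emph{functions}: the elements of $\iota^\nden$ and $o^\nden$ are, by the construction of $\mathcal N$ in Subsection~\ref{subsect.building}, maps of the form $\lam{\varepsilon}\denot{\mathcal M}{\varepsilon}{\text{-}}$ from $\mathbb A_\nu$-valuations to $M$ or to $\{0,1\}$. So I would prove every equality by extensionality: fix an arbitrary $\mathbb A_\nu$-valuation $\varepsilon$, apply both sides to $\varepsilon$, and check that the two resulting values coincide. On the left I unfold the defining clause of the relevant $\nden$-operation given in Subsection~\ref{subsect.building}; on the right I unfold the corresponding clause for $\denot{\mathcal M}{\varepsilon}{\text{-}}$ from Definition~\ref{defn.interpret.L}.

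For all the non-substitution cases this is an immediate match of definitions. For instance, $\tf{var}^\nden(a)$ applied to $\varepsilon$ is $\varepsilon(a)=\denot{\mathcal M}{\varepsilon}{a}$; $\dotsucc^\nden(\denot{\mathcal M}{}{t})$ applied to $\varepsilon$ is $\f{succ}^\mden(\denot{\mathcal M}{\varepsilon}{t})=\denot{\mathcal M}{\varepsilon}{\f{succ}(t)}$; and the cases of $\dotzero$, $\dotplus$, $\dottimes$, $\dotbot$, $\dotlimp$, and $\dotoeq$ are identical, since in each the $\nden$-clause applies the same arithmetic or Boolean operation pointwise in $\varepsilon$ that the $\mathcal M$-clause does (for $\dotoeq$, both sides reduce to the equality test on $M$ applied to the two denotations). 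The binder case $\dotforall$ goes the same way: applying ${\dotforall}^\nden([a]\denot{\mathcal M}{}{\xi})$ to $\varepsilon$ gives $\f{min}\{\denot{\mathcal M}{\varepsilon[a\ssm x]}{\xi}\mid x\in M\}$, which is exactly $\denot{\mathcal M}{\varepsilon}{\Forall{a}\xi}$, because the $\nden$-action ranges over the updated valuations $\varepsilon[a\ssm x]$ precisely as the $\mathcal M$-clause for $\Forall{a}\xi$ does.

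The only cases needing more than unfolding are the two substitution equalities, and these are supplied by Lemma~\ref{lemm.compose}. Applying $\tf{sub}_\iota^\nden([a]\denot{\mathcal M}{}{t'},\denot{\mathcal M}{}{t})$ to $\varepsilon$ yields, by its defining clause, $\denot{\mathcal M}{\varepsilon[a\ssm\denot{\mathcal M}{\varepsilon}{t}]}{t'}$, and Lemma~\ref{lemm.compose}(1) identifies this with $\denot{\mathcal M}{\varepsilon}{t'[a\ssm t]}$, the right-hand side evaluated at $\varepsilon$; the $\tf{sub}_o$ equality is the same argument using Lemma~\ref{lemm.compose}(2). I expect no genuine obstacle here, since the content of the lemma is exactly that the pointwise-defined operations of $\mathcal N$ mirror, clause by clause, the semantic clauses of $\mathcal M$. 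The one point deserving care is well-definedness—distinct first-order terms and formulas may denote the same function in $\iota^\nden$ or $o^\nden$, and the $\tf{sub}$ operations act on atoms-abstractions $[a]\denot{\mathcal M}{}{t'}$—but because the statement is phrased throughout in terms of the denotation functions $\denot{\mathcal M}{}{\text{-}}$ rather than raw syntax, the operations are already being applied to functions, and each equality holds on the nose.
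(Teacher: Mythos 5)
Your proposal is correct and follows essentially the same route as the paper: both arguments amount to comparing the defining clauses of Definition~\ref{defn.sig.dot.L} with those of Definition~\ref{defn.interpret.L} (yours pointwise via extensionality in $\varepsilon$, the paper's directly as an equation between $\lam{\varepsilon}$-abstractions), and both isolate the two $\tf{sub}$ cases as the only non-immediate ones, discharging them with Lemma~\ref{lemm.compose}. Nothing is missing.
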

\begin{proof}
We compare Definitions~\ref{defn.sig.dot.L} and~\ref{defn.interpret.L}.
Most cases are immediate; we consider only the slightly less trivial ones:
\maketab{tab1}{@{\hspace{-0em}}R{8em}@{\ \ $=$\ \ }L{12em}L{10em}}
\begin{tab1}
{\tf{var}}^\nden(a)&(\lam{a}\lam{\varepsilon}\varepsilon(a))a
&\text{Definition~\ref{defn.sig.dot.L}}
\\
&(\lam{a}\denot{\mathcal M}{}{a})a &\text{Definition~\ref{defn.interpret.L}}
\\
&\denot{\mathcal M}{}{a}&\text{fact} 
\\[1.5ex]
\tf{sub}_\iota^\nden([a]\denot{\mathcal M}{}{t'},\denot{\mathcal M}{}{t})
&\lam{\varepsilon}\denot{\mathcal M}{}{t'}(\varepsilon[a\ssm \denot{\mathcal M}{}{t}\varepsilon])
&\text{Definition~\ref{defn.sig.dot.L}}
\\
&\lam{\varepsilon}\denot{\mathcal M}{}{t'[a\ssm t]} &\text{Lemma~\ref{lemm.compose}}
\end{tab1}
Other cases are no harder.
\end{proof} 

\begin{lemm}
$\mathcal N$ (Definition~\ref{defn.sig.dot.L}) is a PNL interpretation.
\end{lemm}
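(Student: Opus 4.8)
The plan is to check, one by one, the three requirements of Definition~\ref{defn.pnl.interp}: that each base sort of $\dot{\mathcal L}$ is interpreted by a nonempty permissive-nominal set, that each term-former $\tf f^\nden$ is a well-defined equivariant function of the arity dictated by $\f{ar}$, and that each proposition-former $\tf P^\nden$ is a well-defined finite equivariant function into $\{0,1\}$. The name sort $\nu$ requires no separate argument, since by Definition~\ref{defn.interpretation} it is forced to be interpreted by $\mathbb A_\nu$.

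\emph{Base sorts.} First I would establish the auxiliary identity $\denot{\mathcal M}{\varepsilon}{\pi\act t}=\denot{\mathcal M}{\pi\mone\act\varepsilon}{t}$ (and likewise for formulas) by a routine induction, the base case being $\denot{\mathcal M}{\varepsilon}{\pi(a)}=\varepsilon(\pi(a))=(\pi\mone\act\varepsilon)(a)$. This shows that the action $\pi\act\denot{\mathcal M}{}{t}=\denot{\mathcal M}{}{\pi\act t}$ declared in Definition~\ref{defn.sig.dot.L} coincides with the \emph{pointwise} action $(\pi\act u)(\varepsilon)=u(\pi\mone\act\varepsilon)$ on functions $\f{Valu}_{\mathbb A_\nu}(M)\to M$ (recalling that $M$ and $\{0,1\}$ carry the trivial action). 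Since the pointwise action depends only on the function $\denot{\mathcal M}{}{t}$ and not on the representative $t$, this simultaneously gives closure, well-definedness, and the group-action laws of Definition~\ref{defn.perm.set}. For support, any representative term or formula contains only finitely many atoms, and if a finite $\pi$ fixes all of them then it fixes the whole expression, so $\pi\act\denot{\mathcal M}{}{t}=\denot{\mathcal M}{}{t}$; this finite set is contained in a permission set (e.g.\ $\atomsdown$ enlarged by its finitely many members in $\atomsup$), which therefore supports the element in the sense of Definition~\ref{defn.support}. Nonemptiness is witnessed by $\denot{\mathcal M}{}{0}\in\iota^\nden$ and $\denot{\mathcal M}{}{\bot}\in o^\nden$, so both $\iota^\nden$ and $o^\nden$ are nonempty permissive-nominal sets (Definition~\ref{defn.nominal.set}).

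\emph{Formers.} Their types are read off Definition~\ref{defn.sig.dot.L}. Equivariance is where the real work lies, but each case follows from Lemma~\ref{lemm.some.equalities} together with the facts that level-$1$ permutation distributes over $\mathcal L$-syntax and commutes with $\mathcal L$-substitution. For $\dotforall^\nden$, for instance, I would compute $\pi\act\dotforall^\nden([a]\denot{\mathcal M}{}{\xi})=\pi\act\denot{\mathcal M}{}{\Forall{a}\xi}=\denot{\mathcal M}{}{\Forall{\pi(a)}\pi\act\xi}=\dotforall^\nden([\pi(a)]\pi\act\denot{\mathcal M}{}{\xi})$, which is exactly $\dotforall^\nden(\pi\act([a]\denot{\mathcal M}{}{\xi}))$; the $\tf{sub}$ cases instead use equivariance of capture-avoiding substitution, $(\pi\act t')[\pi(a)\ssm\pi\act t]=\pi\act(t'[a\ssm t])$. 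For the proposition-formers $\oeq_\iota^\nden$ and $\oeq_o^\nden$, equivariance is immediate because a permutation is a bijection, and for $\epsilon^\nden$ it follows from the observation that the constant valuation $\varepsilon_0$ satisfies $\pi\act\varepsilon_0=\varepsilon_0$ for \emph{every} $\pi$, whence $\epsilon^\nden(\pi\act U)=U(\varepsilon_0)=\epsilon^\nden(U)$.

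\emph{Main obstacle.} The one genuinely non-routine point is \emph{well-definedness} of the abstraction-consuming operations $\tf{sub}_\iota^\nden$, $\tf{sub}_o^\nden$, and $\dotforall^\nden$, which are specified on a representative $[a]u$ of an element of $[\mathbb A_\nu]\ns X$ and so must be shown independent of the choices of $a$ and $u$. By Lemma~\ref{lemm.supp.abstraction}(3) it suffices to treat $[a]u=[a']u'$ with $a'\notin\f{supp}(u)$ and $u'=(a'\ a)\act u$; unwinding the pointwise action reduces the required equality to the claim
\[
a'\notin\f{supp}(u)\ \Longrightarrow\ u(\varepsilon[a'\ssm y])=u(\varepsilon)\quad\text{for all }\varepsilon,y,
\]
that an element depends only on the coordinates in its support. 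I would prove this by picking a representative $u=\denot{\mathcal M}{}{t'}$ and a fresh atom $c\notin\f{fv}(t')\cup\{a'\}$: then $(a'\ c)$ fixes $\f{supp}(u)$, so by the corollary to Theorem~\ref{thrm.supp} we get $\denot{\mathcal M}{}{(a'\ c)\act t'}=(a'\ c)\act u=u$, while $a'\notin\f{fv}((a'\ c)\act t')$, so by the standard coincidence lemma of first-order logic $u(\varepsilon)=\denot{\mathcal M}{\varepsilon}{(a'\ c)\act t'}$ does not depend on $\varepsilon(a')$. This coordinate-independence is precisely what makes $\tf{sub}^\nden$ and $\dotforall^\nden$ respect $\alpha$-renaming of the abstracted atom, and it is the step I expect to demand the most care.
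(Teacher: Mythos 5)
Your proposal is correct and follows essentially the same route as the paper, which verifies the conditions of Definition~\ref{defn.pnl.interp} by routine calculation, invokes Lemma~\ref{lemm.some.equalities} for the formers, and uses $\pi\act\varepsilon_0=\varepsilon_0$ for equivariance of $\epsilon^\nden$. You supply considerably more detail than the paper's terse ``routine calculations''---in particular your explicit treatment of well-definedness of $\tf{sub}^\nden$ and $\dotforall^\nden$ on abstraction classes, which the paper leaves implicit---but the underlying argument is the same.
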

\begin{proof}
We must check that:
\begin{itemize*}
\item
\emph{$\iota^\nden$ and $o^\nden$ are permissive-nominal sets.}

By routine calculations.
(In fact, $\iota^\nden$ and $o^\nden$ 
are \emph{nominal} sets; that is, their elements all have finite support.)
\item
\emph{The functions defined in Definition~\ref{defn.sig.dot.L} map elements of 
$\iota^\nden$,\ 
$o^\nden$,\ $[\mathbb A]\iota^\nden$,\ and $[\mathbb A]o^\nden$ correctly to the appropriate sets.}

By Lemma~\ref{lemm.some.equalities}.
\item
\emph{$\epsilon^\nden$ is equivariant from $o^\nden$ to $\{0,1\}$.}

By routine calculations using the fact that $(a\ b)\act\varepsilon_0=\varepsilon_0$.
\qedhere\end{itemize*}
\end{proof}

\begin{lemm}
\label{lemm.undot}
If $\amod{\Xi\cent\Chi}$ is valid in ${\mathcal N}$, then 
$\Xi\cent\Chi$ is valid in $\mathcal M$.
\end{lemm}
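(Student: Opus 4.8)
The plan is to use the interpretation $\mathcal N$ of Definition~\ref{defn.sig.dot.L} to reduce the statement to the evaluation of a single \emph{closed} $\mathcal L$-formula in $\mathcal M$, and then to observe that this evaluation being $1$ is exactly validity of the sequent; the workhorse throughout is Lemma~\ref{lemm.some.equalities}. First I would establish a ``translation-correctness'' fact: for every term $t$ and formula $\xi$ of $\mathcal L{+}M$,
\[
\denot{\mathcal N}{\varsigma}{\amod{t}}=\denot{\mathcal M}{}{t}
\qquad\text{and}\qquad
\denot{\mathcal N}{\varsigma}{\amod{\xi}}=\denot{\mathcal M}{}{\xi},
\]
both sides read as elements of $\iota^\nden$ and $o^\nden$, i.e.\ as functions of $\varepsilon$. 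Since $\amod{t}$ and $\amod{\xi}$ contain no unknowns, the choice of $\varsigma$ is immaterial. This is a routine induction on $t$ and $\xi$: each clause of Definition~\ref{defn.amod} is discharged by unfolding $\denot{\mathcal N}{\varsigma}{\cdot}$ with Definition~\ref{defn.interpret.terms}, applying the inductive hypothesis, and invoking the matching equality of Lemma~\ref{lemm.some.equalities}. For example in the $\forall$-case, $\denot{\mathcal N}{\varsigma}{\dotforall[a]\amod{\xi}}=\dotforall^\nden([a]\denot{\mathcal N}{\varsigma}{\amod{\xi}})=\dotforall^\nden([a]\denot{\mathcal M}{}{\xi})=\denot{\mathcal M}{}{\Forall{a}\xi}$. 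The definable connectives $\land$ and $\lor$ are handled by their standard de Morgan encodings, which commute with $\amod{\cdot}$ and with both denotations.

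Next I would unfold the sequent translation. Write $\psi$ for $(\xi_1\land\ldots\land\xi_k)\limp(\chi_1\lor\ldots\lor\chi_l)$ and let $a_1,\ldots,a_n$ be the free variables of $\Xi,\Chi$, so that $\amod{\Xi\cent\Chi}=\epsilon(\dotforall[a_1]\ldots\dotforall[a_n]\amod{\psi})$. Applying the correctness fact at the core gives $\denot{\mathcal N}{\varsigma}{\amod{\psi}}=\denot{\mathcal M}{}{\psi}$, and then stripping the quantifiers one at a time (innermost first, pushing each $[a_i]$ inside by Definition~\ref{defn.interpret.terms} and using the equality ${\dotforall}^\nden([a]\denot{\mathcal M}{}{\xi})=\denot{\mathcal M}{}{\Forall{a}\xi}$ of Lemma~\ref{lemm.some.equalities}) yields $\denot{\mathcal N}{\varsigma}{\dotforall[a_1]\ldots\dotforall[a_n]\amod{\psi}}=\denot{\mathcal M}{}{\Forall{a_1}\ldots\Forall{a_n}\psi}$. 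Finally, applying $\epsilon^\nden\,U=U(\varepsilon_0)$ from Definition~\ref{defn.sig.dot.L} gives
\[
\denot{\mathcal N}{\varsigma}{\amod{\Xi\cent\Chi}}
=\denot{\mathcal M}{\varepsilon_0}{\Forall{a_1}\ldots\Forall{a_n}\psi}.
\]

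To conclude, assume $\amod{\Xi\cent\Chi}$ is valid in $\mathcal N$; by Definition~\ref{defn.pnl.ment} the left-hand side is $1$, so $\denot{\mathcal M}{\varepsilon_0}{\Forall{a_1}\ldots\Forall{a_n}\psi}=1$. Since $\Forall{a_1}\ldots\Forall{a_n}\psi$ is closed and $\Forall{a}$ is interpreted as a minimum over all assignments of its bound variable (Definition~\ref{defn.interpret.L}), this forces $\denot{\mathcal M}{\varepsilon}{\psi}=1$ for every $\varepsilon$: each $\varepsilon$ agrees on the free variables $a_1,\ldots,a_n$ of $\psi$ with $\varepsilon_0[a_1\ssm\varepsilon(a_1)]\ldots[a_n\ssm\varepsilon(a_n)]$, and the denotation of $\psi$ depends only on those free variables. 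Thus $\psi$ is valid in $\mathcal M$, which is precisely the assertion that $\Xi\cent\Chi$ is valid in $\mathcal M$.

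The main obstacle is the passage through $\epsilon^\nden$: validity in $\mathcal N$ supplies only the truth of $\psi$ at the single valuation $\varepsilon_0$, whereas validity in $\mathcal M$ demands truth at \emph{every} $\varepsilon$. The two are reconciled exactly by the leading block of $\dotforall$'s, which closes the formula before $\epsilon$ evaluates it at $\varepsilon_0$; the standard coincidence property that a closed formula has a valuation-independent denotation then upgrades truth-at-$\varepsilon_0$ to validity. The remaining care is purely bookkeeping around the definable $\land,\lor$ and the ordering of the $a_i$.
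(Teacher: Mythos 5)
Your proof is correct and takes essentially the same route as the paper's: the paper's terse ``we calculate'' is exactly your induction through Lemma~\ref{lemm.some.equalities} followed by the $\epsilon^\nden\,U=U(\varepsilon_0)$ step, and your final upgrade from truth at $\varepsilon_0$ to truth at every $\varepsilon$ is the paper's ``the proposition is closed'' observation. You have simply written out in full the calculation and the coincidence argument that the paper leaves implicit.
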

\begin{proof}
We calculate that if $\amod{\Xi\cent\Chi}$ is valid in $\mathcal N$, then
$$
\ \denot{\mathcal
  M}{\varepsilon_0}{(\xi_1\land\ldots\land\xi_k)\limp(\chi_1\lor\ldots\lor\chi_l)} = 1
$$ 
But the proposition written out above is closed, so for all valuations
$\varepsilon$, 
\ $
\denot{\mathcal M}{\varepsilon}{(\xi_1\land\ldots\land\xi_k)\limp(\chi_1\lor\ldots\lor\chi_l)}= 1 .
$ 
\end{proof}

Recall from Notation~\ref{nttn.S} that we write $\theory S$ for $\theory{EQU}\cup\theory{SUB}\cup\theory{FOL}$.
Recall also from Definition~\ref{defn.amod} the mapping $\amod{\text{-}}$ from first-order logic $\mathcal L$ to PNL terms. 
\begin{prop}
\label{prop.dot.M.arith}
$\mathcal N$ is a model of $\theory S\cup\theory{ARITH}$.
\end{prop}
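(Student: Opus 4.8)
The plan is to check, axiom by axiom, that $\denot{\mathcal N}{\varsigma}{\phi}=1$ for every valuation $\varsigma$ and every $\phi$ among the axioms of $\theory{EQU}$, $\theory{SUB}$, $\theory{FOL}$, and $\theory{ARITH}$. The organising remark is that by Definition~\ref{defn.sig.dot.L} the formers $\pnleq_\iota$ and $\pnleq_o$ are interpreted as honest equality on $\iota^\nden$ and $o^\nden$. Thus $\theory{EQU}$ is essentially free: reflexivity $\rulefont{{\pnleq}0}$ is equality of an element with itself, and the congruence axioms $\rulefont{{\pnleq}2}$, $\rulefont{{\pnleq}1}$, $\rulefont{{\pnleq}\dotforall}$, $\rulefont{{\pnleq}\tf{sub}}$ hold because $\dotsucc^\nden$, $\dotplus^\nden$, $\tf{sub}_\iota^\nden$ and the rest are functions and so preserve equality; the two axioms $\rulefont{{\pnleq}\mathit o}$ and $\rulefont{{\pnleq}\iota}$ follow by unfolding $\epsilon^\nden U=U\varepsilon_0$ and $\dotoeq^\nden(u,v)\varepsilon={\oeq^\mden}(u\varepsilon,v\varepsilon)$ at $\varepsilon_0$.

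Before the remaining groups I would isolate two facts about $\mathcal N$. (i) As $u$ ranges over the elements of $\iota^\nden$ with $\f{supp}(u)\subseteq\atomsdown$, the value $u\varepsilon_0$ ranges over the whole of $M$, since each constant function $\denot{\mathcal M}{}{x}=\lam{\varepsilon}x$ (for $x\in M$) lies in $\iota^\nden$ and has empty support. (ii) A standard nominal fact: if $a\not\in\f{supp}(U)$ then $U(\varepsilon[a\ssm x])=U\varepsilon$ for all $\varepsilon$ and $x$, since freshness of $a$ for the denotation means the denotation does not read the value at $a$ (formally, swap in a genuinely fresh $b$ and use the coincidence lemma for $\mathcal M$). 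With these in hand, $\theory{SUB}$ is checked by unfolding $\tf{sub}_\iota^\nden$, $\tf{sub}_o^\nden$ and the term-formers (cf.\ Lemma~\ref{lemm.some.equalities}): the mechanical axioms ($\rulefont{sub\tf{var}}$, $\rulefont{sub id}$, $\rulefont{sub\dotsucc}$, and those for the binary formers) reduce to definitional identities, while the two with a side condition, $\rulefont{sub\#}$ and $\rulefont{sub\dotforall}$, use $a\not\in\pmss(Z)$ (resp.\ $b\not\in\pmss(X)$) together with $\f{supp}(\varsigma(Z))\subseteq\pmss(Z)$ from Definition~\ref{defn.valuation} to invoke (ii). For $\theory{FOL}$, $\rulefont{\dotbot}$ and $\rulefont{\dotlimp}$ match by unfolding $\dotbot^\nden\varepsilon_0=0$ and $\dotlimp^\nden$ against the $\{0,1\}$-valued readings of $\bot$ and $\limp$, while for $\rulefont{\dotforall}$ one computes $\epsilon(\dotforall([a]Z))=\f{min}\{U(\varepsilon_0[a\ssm x])\mid x\in M\}$ and recognises, via (i), that this is exactly $\denot{\mathcal N}{\varsigma}{\Forall{X}\epsilon(Z[a\sm X])}=\f{min}\{U(\varepsilon_0[a\ssm u\varepsilon_0])\mid u\}$.

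Finally $\theory{ARITH}$. The purely equational axioms $\rulefont{P{+}0}$, $\rulefont{P{+}succ}$, $\rulefont{P{*}0}$, $\rulefont{P{*}succ}$ hold because the matching identity of Figure~\ref{fig.fol.arithmetic} holds in $\mathcal M$ at every element of $M$, read coordinatewise in each $\varepsilon$; the two inequational axioms $\rulefont{PS0}$ and $\rulefont{PSS}$ follow from non-triviality and injectivity of $\f{succ}^\mden$ (axioms $\rulefont{ps0}$, $\rulefont{pss}$ in $\mathcal M$), evaluating at $\varepsilon_0$ for $\rulefont{PS0}$. I expect $\rulefont{PInd}$ to be the one genuine obstacle. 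Fixing $U=\varsigma(Z)\in o^\nden$ and writing $P(x)=U(\varepsilon_0[a\ssm x])$, unfolding $\epsilon(Z[a\sm X])=U(\varepsilon_0[a\ssm u\varepsilon_0])$ and using (i) turns the two antecedents of $\rulefont{PInd}$ into $P(0^\mden)=1$ and $\forall x\in M\,(P(x)=1\limp P(\f{succ}^\mden(x))=1)$, and the conclusion into $\forall x\in M\,P(x)=1$. The hard part is supplying that last implication from the first-order induction scheme $\rulefont{pind}$ in $\mathcal M$: writing $U=\denot{\mathcal M}{}{\xi}$ for a formula $\xi$ of $\mathcal L{+}M$, one replaces the parameters of $M$ occurring in $\xi$ by fresh variables, so that $P$ becomes the predicate defined by a genuine $\mathcal L$-formula under a suitable valuation; the instance of $\rulefont{pind}$ for that formula, which holds since $\mathcal M$ is a model of arithmetic, delivers $\forall x\,P(x)=1$, whence $\denot{\mathcal N}{\varsigma}{\Forall{X}\epsilon(Z[a\sm X])}=\f{min}_{x\in M}P(x)=1$ and $\rulefont{PInd}$ is valid.
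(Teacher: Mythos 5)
Your proof is correct and follows essentially the same route as the paper's, which is a one-case sketch of the same axiom-by-axiom verification hinging on the observation that $\mathcal L{+}M$ has a constant for every element of $M$ (your fact (i), used for \rulefont{\dotforall} and \rulefont{PInd}). You go further than the paper only in writing out \rulefont{PInd} explicitly---correctly reducing it to the first-order scheme \rulefont{pind} with parameters by replacing elements of $M$ with fresh variables under a suitable valuation---where the paper simply asserts that the remaining axioms are ``no harder''.
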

\begin{proof}
By a routine verification.  We consider the axiom
\rulefont{\dotforall} from Figure~\ref{fig.fol}.  We unpack
definitions and see that we must prove that 
for every $\xi$ in
$\mathcal L{+}M$,\ 
\begin{itemize*}
\item
$\Forall{x\in M}\varepsilon_0[a\ssm x] \in \denot{\mathcal N}{}{\xi}$ 
if and only if 
\item
$\varepsilon_0[a\ssm \denot{\mathcal N}{\varepsilon_0}{t}]\in\denot{\mathcal N}{}{\xi}$ 
for every $t$ a term of $\mathcal L{+}M$.
\end{itemize*}
This follows, because $\mathcal L{+}M$ has a constant symbol for every
$x\in M$.  Validity of the other axioms is no harder.
\end{proof}

\begin{frametxt}
\begin{thrm} 
\label{thrm.arithmetic}
$\begin{array}[t]{l}\theory{arithmetic},\Xi\cent\Chi\text{ in first-order logic if and only if} 
\\
\theory S\cup\theory{ARITH}\cent \amod{\Xi\cent\Chi}\text{ in PNL}.
\end{array}
$ 
\end{thrm}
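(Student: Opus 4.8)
The plan is to prove the two directions separately: the left-to-right implication proof-theoretically via the Correctness theorem, and the right-to-left implication semantically via a countermodel.

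\emph{Left to right.} Suppose $\theory{arithmetic},\Xi\cent\Chi$ is derivable in first-order logic. A derivation can mention only finitely many instances $A_1,\ldots,A_m$ of the axioms of Figure~\ref{fig.fol.arithmetic} (the only genuine scheme being \rulefont{pind}, since \rulefont{ps0}--\rulefont{p{*}succ} are already closed), so $A_1,\ldots,A_m,\Xi\cent\Chi$ is derivable in pure first-order logic. Applying the Correctness theorem (Theorem~\ref{thrm.correctness}) to this sequent yields $\theory S\cent\amod{A_1,\ldots,A_m,\Xi\cent\Chi}$ in PNL. First I would establish an \emph{axiom-translation lemma}: for each axiom $A$ of $\theory{arithmetic}$, $\theory S\cup\theory{ARITH}\cent\amod{\cent A}$. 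For the equational Peano axioms this is a direct match of \rulefont{ps0}, \rulefont{pss}, \rulefont{p{+}0}, \rulefont{p{+}succ}, \rulefont{p{*}0}, \rulefont{p{*}succ} against \rulefont{PS0}, \rulefont{PSS}, \rulefont{P{+}0}, \rulefont{P{+}succ}, \rulefont{P{*}0}, \rulefont{P{*}succ} of Figure~\ref{fig.arithmetic}, using \theory{EQU} and \theory{FOL} to pass between $\oeq$ and $\pnleq$ through $\epsilon$, and Lemma~\ref{lemm.amod.ssm} to commute $\amod{-}$ past object-level substitution. Combining these $\theory{ARITH}$-derivable translations with the $\theory S$-derivation above, discharging the $A_i$ by \rulefont{Cut} and unfolding the $\epsilon$, $\dotforall$, and $\dotlimp$ using \theory{FOL}, delivers $\theory S\cup\theory{ARITH}\cent\amod{\Xi\cent\Chi}$.

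\emph{The induction case, which is the hard part.} The only nontrivial instance of the axiom-translation lemma, and the crux of the whole theorem, is the scheme \rulefont{pind}. Here I would instantiate the universally quantified unknown $Z$ of the \emph{single} axiom \rulefont{PInd} of Figure~\ref{fig.arithmetic} with the term $\amod{\xi}$ of sort $o$ via \rulefont{{\forall}L}, which is legitimate since $\f{fa}(\amod{\xi})\subseteq\pmss(Z)=\atomsdown$ and $\amod{\xi}:o$. The antecedents and conclusion of \rulefont{PInd} then read $\epsilon(\amod{\xi}[a\sm\dotzero])$, $\Forall{X}(\epsilon(\amod{\xi}[a\sm X])\limp\epsilon(\amod{\xi}[a\sm\dotsucc(X)]))$, and $\Forall{X}\epsilon(\amod{\xi}[a\sm X])$. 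Using Lemma~\ref{lemm.amod.ssm} to identify $\amod{\xi}[a\sm\amod{t}]$ with $\amod{\xi[a\ssm t]}$, together with the \theory{FOL} axioms \rulefont{\dotlimp} and \rulefont{\dotforall} to turn $\epsilon$ of $\dotlimp$ and $\dotforall$ into genuine PNL $\limp$ and $\forall$, this is exactly the translation of the instance of \rulefont{pind} for $\xi$ and $a$. This step is where the finiteness of the axiomatisation resides: the PNL quantifier $\Forall{Z}$ over an unknown of sort $o$ does the work of the first-order schema `every $\xi$'.

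\emph{Right to left.} I would prove the contrapositive semantically. Suppose $\theory{arithmetic},\Xi\not\cent\Chi$ in first-order logic. By completeness of ordinary first-order logic there is a model $\mathcal M$ of $\theory{arithmetic}$ (Definition~\ref{defn.model.arithmetic}) in which $\Xi\cent\Chi$ is not valid. I would then build the PNL interpretation $\mathcal N$ from $\mathcal M$ as in Subsection~\ref{subsect.building}. By Proposition~\ref{prop.dot.M.arith}, $\mathcal N$ is a model of $\theory S\cup\theory{ARITH}$; and by the contrapositive of Lemma~\ref{lemm.undot}, since $\Xi\cent\Chi$ is not valid in $\mathcal M$, the proposition $\amod{\Xi\cent\Chi}$ is not valid in $\mathcal N$. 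Finally, Soundness (Theorem~\ref{thrm.soundness}) says that anything derivable from $\theory S\cup\theory{ARITH}$ is valid in every model of $\theory S\cup\theory{ARITH}$, in particular in $\mathcal N$; hence $\theory S\cup\theory{ARITH}\not\cent\amod{\Xi\cent\Chi}$, as required. The only ingredient here not already supplied by earlier results is the appeal to completeness of first-order logic, which is standard.
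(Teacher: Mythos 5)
Your proposal is correct and follows essentially the same route as the paper: the left-to-right direction via Theorem~\ref{thrm.correctness} (you usefully spell out the translation of the arithmetic axioms and the instantiation of \rulefont{PInd} at $Z\ssm\amod{\xi}$, which the paper leaves implicit), and the right-to-left direction via Soundness, Proposition~\ref{prop.dot.M.arith}, Lemma~\ref{lemm.undot}, and completeness of first-order logic, which the paper states directly and you state in contrapositive form.
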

\end{frametxt}
\begin{proof}
We prove two implications.
The top-to-bottom implication follows using Theorem~\ref{thrm.correctness}.

For the bottom-to-top implication, we reason as follows:
Suppose 
$\theory S\cup\theory{ARITH}\cent \amod{\Xi\cent\Chi}$
in PNL.
Choose an arbitrary interpretation $\mathcal M$ of first-order logic that is a model of arithmetic, with carrier set $M$.
By Soundness (Theorem~\ref{thrm.soundness}) and Proposition~\ref{prop.dot.M.arith},\ $\amod{\Xi\cent\Chi}$ is valid in $\mathcal N$. 
By Lemma~\ref{lemm.undot} 
$\Xi\cent\Chi$ is valid in $\mathcal M$.
$\mathcal M$ was arbitrary, so by completeness of first-order logic \cite[\S 4.2]{shoen} it follows that $\Xi\cent\Chi$ is derivable. 
\end{proof}

\section{More PNL theories}
\label{sect.more.examples}

So far we have built PNL and used it to finitely axiomatise arithmetic.
In this section we briefly touch on how to express some known `nominal' constructs within PNL. 
 
\subsection{Inductive types}

Permissive-nominal logic can express the principles of nominal abstract syntax developed in \cite{gabbay:newaas-jv}.

Suppose a base sort $\iota$, a name sort $\nu$, and term-formers 
$$
\tf{var}:\nu\to \iota,\ \quad
\tf{app}:(\iota,\iota)\to \iota,\quad\text{and}\quad 
\tf{lam}:[\nu]\iota\to\iota.
$$
Fix an unknown $U:\iota$ and for brevity write $\phi[U\ssm r]$ as $\phi(r)$ for every $\phi$.
Suppose an axiom-scheme, for every $\phi$: 
$$
\begin{array}{l}
\phi(\tf{var}(a))
\limp
\\
\Forall{X}(\phi(X)\limp \phi(\tf{lam}([a]X))) 
\limp
\\
\Forall{X,Y}(\phi(X)\limp \phi(Y)\limp \tf{app}(X,Y))
\limp 
\\
\qquad\qquad\Forall{X}(\phi(X)) 
\end{array}
$$
Here $X$ and $Y$ have sort $\iota$ and we make a fixed but arbitrary choice of atom $a\in\pmss(X)$.

We can also express this finitely, if we axiomatise a sort for predicates (as we did for arithmetic).
Here is the axiom-scheme above made finite by using the theories \theory{EQU}, \theory{SUB}, and \theory{FOL} from Section~\ref{sect.pnl.arithmetic}:
$$
\begin{array}{r@{}l}
\Forall{Z}&\epsilon(Z[a\sm \tf{var}(a)])
\limp
\\
&\Forall{X}(\epsilon(Z[a\sm X])\limp \epsilon(Z[a\sm \tf{lam}([a]X))) 
\limp
\\
&\Forall{X,Y}(\epsilon(Z[a\sm X])\limp \epsilon(Z[a\sm Y])\limp \epsilon(Z[a\sm \tf{app}(X,Y)]))\limp
\\
&\qquad\qquad\Forall{X}\epsilon(Z[a\sm X])
\end{array}
$$

\subsection{The $\protect\new$ quantifier}
\label{subsect.new}

Nominal sets support the $\new$-quantifier \cite{gabbay:newaas-jv}.
PNL also includes the $\new$-quantifier; the way in which it does this is quite interesting, as we shall see in a moment.

$\new$ has some distinctive properties which are reflected in nominal logic (NL) and the logic of FM sets (FM):
\begin{align*}
&\begin{prooftree}
\Forall{x}(\tf P(x) \limp \New{a}\tf Q(a,x))
\Justifies
\Forall{x}\New{a}(\tf P(x)\limp \tf Q(a,x))
\end{prooftree}
&&
\begin{prooftree}
\Forall{x}\New{a}\New{b}(b\,a)\act x{\approx} x 
\Justifies
\New{a}\New{b}\Forall{x}(a\#x\limp b\#x\limp (b\,a)\act x{\approx} x) 
\end{prooftree}
\\
\intertext{Here and below we write a double horizontal line for `is provably equivalent to'.
$\new$ appears absent from Permissive-Nominal Logic (PNL).
It is `hiding' in the permission sets.
Corresponding propositions are, where $a,b\not\in\pmss(X)$:}
&
\begin{prooftree}
\Forall{X}(\tf P(X) \limp \tf Q(a,X))
\Justifies
\Forall{X}(\tf P(X)\limp \tf Q(a,X))
\end{prooftree}
&&
\begin{prooftree}
\Forall{X}(b\ a)\act X\approx X 
\Justifies
\Forall{X}(b\ a)\act X\approx X 
\end{prooftree}
\end{align*}
We see from these examples that two things are happening: first, freshness conditions are hard-coded into the syntax by permission sets---and second, so is the $\new$-quantifier.

It is interesting to consider another example.  In NL/FM:
\begin{alignat*}{4}
\qquad &
\begin{prooftree}
\New{a}\tf P(a)\land \New{a}\tf Q(b)
\Justifies
\New{a}\New{b}(\tf P(a)\land\tf Q(b))
\end{prooftree}
&\quad\quad\qquad&
\begin{prooftree}
\New{a}\tf P(a)\land \New{a}\tf Q(b)
\Justifies
\New{a}(\tf P(a)\land \tf Q(a))
\end{prooftree}
\qquad\qquad\qquad
\\
\intertext{Correspondingly in PNL we have:}
&
\begin{prooftree}
\tf P(a)\land \tf Q(b)
\Justifies
\tf P(a)\land\tf Q(b)
\end{prooftree}
&&
\begin{prooftree}
\tf P(a)\land \tf Q(b)
\Justifies
\tf P(a)\land \tf Q(a)
\end{prooftree}
\end{alignat*}
It is easy to use the rule \rulefont{Ax} from Figure~\ref{Seq} to construct a derivation proving that $\tf P(a)\land\tf Q(b)$ and $\tf P(a)\land \tf Q(a)$ are indeed logically equivalent in Permissive-Nominal Logic. 

The $\pi$ in \rulefont{Ax} expresses that truth is preserved by permutative renaming, or in symbols: $\cent \phi\liff \pi\act\phi$ always.

A permission set $S$ can be viewed in two ways: as giving permission to instantiate using free atoms in $S$---but also as a form of $\new$ for the atoms not in $S$.

\subsection{Freshness $a\#x$ and abstraction}
\label{subsection.freshness}

PNL has a notion of syntactic freshness which we identify as a notion of `free atoms of', and write $\f{fa}(t)$ and $\f{fa}(\phi)$.
Nominal sets also have a \emph{semantic} notion of freshness $a\#x$ given by $a\not\in\f{supp}(x)$.

As Lemma~\ref{lemm.supp.r} demonstrates, intuitively if $a$ is not free in $t$ then $a$ is fresh for the denotation of $t$.
In symbols: $a\not\in\f{fa}(t)\limp a\#\denot{}{}{t}$ (see \cite[Subsection~7.6]{gabbay:nomtnl} for a kind of converse). 

To capture in syntax the effect of a freshness predicate for a name sort $\nu$ on a sort $\alpha$, it suffices to assume \theory{EQU} (or extend PNL with an equality primitive) and to assume a predicate $\#$ of arity $(\nu,\alpha)$ with an axiom 
$$
\Forall{X}(a\#X\liff (b\ a)\act X\pnleq X) .
$$ 
Here $X$ has sort $\alpha$ and $a$ and $b$ have sort $\nu$, and $a\in\pmss(X)$ and $b\not\in\pmss(X)$.
This is essentially equation~13 in \cite{gabbay:newaas-jv}, using permission sets to attain the effect of the $\new$-quantifier. 
See also \cite[Subsection~5.2]{gabbay:nomuae} and \cite[Theorem~5.5]{gabbay:forcie} where similar observations were expressed for nominal algebra. 

Similarly atoms-abstraction $[a]r$ can be axiomatised not using atoms-abstraction as a term-former $\tf{abs}:(\nu,\tau)$ with axiom $\Forall{X}\tf{abs}(b,(b\ a)\act X)\pnleq \tf{abs}(a,X)$ where $b\not\in\pmss(X)$ (cf. \cite[Subsection~5.1]{gabbay:nomuae}). 

However, it is worthwhile to provide atoms-abstraction as primitive in PNL because it gives us access to PNL $\alpha$-equivalence which is a structural part of the PNL derivation system: to rename $a$ in $\tf{abs}(a,X)$ into $\tf{abs}(b,(b\ a)\act X)$ requires equality reasoning and axioms; to rename $a$ in $[a]X$ requires nothing but an $\alpha$-conversion (in this paper, they are actually the same term). 

\section{Cut-elimination}
\label{sect.cut-elimination}

Recall the \rulefont{Cut} rule from Figure~\ref{Seq}.
In this section we prove that \rulefont{Cut} is admissible in the presence of the other rules in Figure~\ref{Seq}.

\begin{defn}
\label{defn.Phi.sub}
Suppose $\f{fa}(r)\subseteq\pmss(X)$ and $r:\sort(X)$.
Define $\Phi[X\ssm r]$ by
$$
\Phi[X\ssm r]=\{\phi[X\ssm r]\mid \phi\in\Phi\}.
$$
\end{defn}

Lemmas~\ref{lemm.pi.theta} and~\ref{lemm.commute.sub} are proved by routine arguments like those in \cite{gabbay:perntu-jv,gabbay:nomu-jv}:
\begin{lemm}
\label{lemm.pi.theta}
$(\pi\act r)\theta\equiv \pi\act (r\theta)$.
\end{lemm}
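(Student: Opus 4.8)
The plan is to prove the identity by induction on the structure of the raw term $\rawr$, reading off both sides directly from the defining clauses of the level 1 permutation action (Definition~\ref{defn.permutation.action}) and the substitution action (Definition~\ref{defn.subst.action}). Since both actions are already known to be well-defined on $\alpha$-equivalence classes, it suffices to carry out the induction on representatives. Note that $\rawr$ ranges over terms, not propositions, so the connective and quantifier clauses do not arise; the argument only involves the term-forming cases.

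First I would dispatch the base case $\rawr\equiv a$, where both $(\pi\act a)\theta\equiv\pi(a)$ and $\pi\act(a\theta)\equiv\pi(a)$ because each action fixes atoms. The cases $\rawr\equiv(\rawr_1,\ldots,\rawr_n)$, $\rawr\equiv\tf f(\raws)$, and $\rawr\equiv[a]\raws$ then follow mechanically by pushing both actions inward and invoking the inductive hypothesis on immediate subterms. In the abstraction case one only needs to observe that the bound atom is handled identically on both sides: $\pi\act[a]\raws\equiv[\pi(a)](\pi\act\raws)$ and $([a]\raws)\theta\equiv[a](\raws\theta)$, so both sides reduce to $[\pi(a)]$ applied to $(\pi\act\raws)\theta$ and to $\pi\act(\raws\theta)$ respectively, which agree by the inductive hypothesis.

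The only case requiring any thought is $\rawr\equiv\pi'\act X$. Here
\[
(\pi\act(\pi'\act X))\theta \equiv ((\pi\circ\pi')\act X)\theta \equiv (\pi\circ\pi')\act\theta(X),
\]
whereas $\pi\act((\pi'\act X)\theta)\equiv\pi\act(\pi'\act\theta(X))$. Equality of the two sides thus reduces to the group-action law $\pi\act(\pi'\act\raws)\equiv(\pi\circ\pi')\act\raws$ instantiated at the term $\raws\equiv\theta(X)$. This law is the syntactic counterpart of the group action in Definition~\ref{defn.perm.set}; it is itself immediate by a routine induction on $\raws$, and its instance for the shape $\pi'\act X$ is built directly into Definition~\ref{defn.permutation.action}.

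The main—and essentially only—obstacle is the bookkeeping in this unknown case: one must check that the outer permutation composes with the moderating permutation $\pi'$ on $X$ in exactly the same way as it composes on the substituted term $\theta(X)$. Once the group-action law for the level 1 action on syntax is in hand, this is automatic, and every remaining step is a mechanical unfolding of the two defining actions, so the proof is genuinely routine in the sense asserted.
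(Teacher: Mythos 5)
Your proof is correct and matches the paper's (unwritten) argument: the paper simply declares Lemma~\ref{lemm.pi.theta} ``proved by routine arguments like those in \cite{gabbay:perntu-jv,gabbay:nomu-jv}'', i.e.\ exactly the structural induction you carry out, with the moderated-unknown case $\pi'\act X$ resolved by the syntactic group-action law $\pi\act(\pi'\act s)\equiv(\pi\circ\pi')\act s$. You correctly identify that case as the only one needing attention, so nothing is missing.
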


\begin{lemm}
\label{lemm.commute.sub}
Suppose $Y\not\in\f{fV}(t)$.
Then 
$$
r[Y\ssm u][X\ssm t] \equiv r[X\ssm t][Y\ssm u[X\ssm t]].
$$
\end{lemm}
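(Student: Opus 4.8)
The plan is to prove the identity by induction on the structure of the term $r$ (on a raw representative), where the only case requiring any thought is that of an unknown $r\equiv\pi\act Z$; the atom, tuple, and term-former cases reduce immediately to the inductive hypothesis by unfolding the clauses of Definition~\ref{defn.subst.action}. Before starting, I would record one auxiliary fact: since $Y\not\in\f{fV}(t)$, we have $t[Y\ssm w]\equiv t$ for every term $w$. This is itself a routine induction on $t$, and it is exactly where the hypothesis of the lemma gets consumed. The other tool I need is Lemma~\ref{lemm.pi.theta}, namely $(\pi\act r)\theta\equiv\pi\act(r\theta)$, which lets me commute permutations past both substitutions.

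For the unknown case $r\equiv\pi\act Z$ I would split into three subcases according to whether $Z$ is $X$, is $Y$, or is a third unknown (recall from Definition~\ref{defn.terms.sorts} that $X$ and $Y$ denote \emph{distinct} unknowns, so the subcases are exhaustive and exclusive). If $Z\equiv X$, the left-hand side computes to $\pi\act t$ (the inner $[Y\ssm u]$ fixes $X$, and the outer $[X\ssm t]$ then replaces it), while the right-hand side gives $(\pi\act t)[Y\ssm u[X\ssm t]]\equiv\pi\act\bigl(t[Y\ssm u[X\ssm t]]\bigr)\equiv\pi\act t$ using Lemma~\ref{lemm.pi.theta} and the auxiliary fact. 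If $Z\equiv Y$, both sides compute to $\pi\act(u[X\ssm t])$, again by Lemma~\ref{lemm.pi.theta}. If $Z$ is neither $X$ nor $Y$, both substitutions fix $Z$ on each side and both reduce to $\pi\act Z$.

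For the abstraction case $r\equiv[a]r'$, I would note that each substitution pushes uniformly under the binder, $([a]s)\theta\equiv[a](s\theta)$, so both sides reduce to $[a]$ applied to the two sides of the inductive hypothesis for $r'$; the capturing behaviour of level 2 substitution under $[a]\text{-}$ (the Remark following Definition~\ref{defn.subst.action}) is harmless here because it occurs identically on both sides.

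The main obstacle is therefore not mathematical depth but bookkeeping in the unknown case, and in particular the observation that the hypothesis $Y\not\in\f{fV}(t)$ is used precisely to discharge the stray $[Y\ssm u[X\ssm t]]$ in the subcase $Z\equiv X$; without it the two sides would genuinely differ. Since PNL \emph{terms} contain no unknown-binding construct (the quantifier $\forall$ builds propositions, not terms), there is no $\f{nontriv}$ side-condition to manage and the induction closes cleanly. Should the analogous statement be wanted for propositions, the same argument applies, with the one extra case $\Forall{Z}\phi$ handled by first $\alpha$-renaming the bound $Z$ to a fresh unknown lying outside $\f{fV}(t)\cup\f{fV}(u)\cup\{X,Y\}$ so that the substitution clause for $\forall$ is applicable.
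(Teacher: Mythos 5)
Your proposal is correct and is exactly the argument the paper intends: the paper gives no details, stating only that Lemma~\ref{lemm.commute.sub} is ``proved by routine arguments like those in'' the cited nominal-terms papers, and your structural induction---with the unknown case split on $Z\equiv X$, $Z\equiv Y$, or neither, Lemma~\ref{lemm.pi.theta} to commute $\pi$ past the substitutions, and the auxiliary fact $t[Y\ssm w]\equiv t$ consuming the hypothesis $Y\not\in\f{fV}(t)$---is precisely that routine argument. Your closing remark about the propositional case (needed for Lemma~\ref{lemm.instantiate}) and the $\alpha$-renaming of the bound unknown to satisfy the $\f{nontriv}$ side-condition is also the right way to extend it.
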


\begin{lemm}
\label{lemm.instantiate}
Suppose $\f{fa}(r)\subseteq\pmss(X)$ and $r:\sort(X)$.
Then 
$$
\Phi\cent\Psi\quad\text{implies}\quad
\Phi[X\ssm r]\cent \Psi[X\ssm r].
$$
\end{lemm}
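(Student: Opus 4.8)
The plan is to argue by induction on the derivation of $\Phi\cent\Psi$ (Figure~\ref{Seq}), showing that applying $[X\ssm r]$ to every proposition in a sequent sends derivable sequents to derivable sequents. The whole argument rests on substitution commuting with the syntax-formers. For the connectives this is immediate from Definition~\ref{defn.subst.action}: $\bot[X\ssm r]\equiv\bot$ and $(\phi\limp\psi)[X\ssm r]\equiv(\phi[X\ssm r])\limp(\psi[X\ssm r])$, so the cases \rulefont{\bot L}, \rulefont{{\limp}L}, \rulefont{{\limp}R}, and \rulefont{Cut} follow by applying the inductive hypothesis to each premise and reassembling the same rule. For \rulefont{Ax}, the conclusion $\Phi,\phi\cent\pi\act\phi,\Psi$ becomes $\Phi[X\ssm r],\phi[X\ssm r]\cent(\pi\act\phi)[X\ssm r],\Psi[X\ssm r]$, and Lemma~\ref{lemm.pi.theta} (read on propositions) gives $(\pi\act\phi)[X\ssm r]\equiv\pi\act(\phi[X\ssm r])$, so this is again an instance of \rulefont{Ax}.

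The substantive cases are \rulefont{{\forall}L} and \rulefont{{\forall}R}, where $[X\ssm r]$ interacts with the quantified unknown. I write the rule's bound unknown as $Y$ and its witness (in \rulefont{{\forall}L}) as $s$, to keep them apart from the $X$ and $r$ of the lemma. Since terms and propositions are $\alpha$-equivalence classes (Definition~\ref{defn.terms.and.propositions}), I would first rename $Y$ so that $Y\neq X$ and $Y\not\in\f{fV}(r)$. This ensures $Y\not\in\f{nontriv}([X\ssm r])$ (as $\f{nontriv}([X\ssm r])\subseteq\{X\}\cup\f{fV}(r)$), and hence $(\Forall{Y}\phi)[X\ssm r]\equiv\Forall{Y}(\phi[X\ssm r])$ by the last clause of Definition~\ref{defn.subst.action}.

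For \rulefont{{\forall}L}, whose premise is $\Phi,\phi[Y\ssm s]\cent\Psi$ with $\f{fa}(s)\subseteq\pmss(Y)$ and $s:\sort(Y)$, I would use Lemma~\ref{lemm.commute.sub} (applicable since $Y\not\in\f{fV}(r)$) to rewrite $(\phi[Y\ssm s])[X\ssm r]\equiv\phi[X\ssm r][Y\ssm s[X\ssm r]]$. The inductive hypothesis on the premise then yields exactly the premise of \rulefont{{\forall}L} for $\Forall{Y}(\phi[X\ssm r])$ with witness $s[X\ssm r]$; it remains to check that this witness still satisfies the capture side-condition, i.e. $\f{fa}(s[X\ssm r])\subseteq\pmss(Y)$. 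This holds because each occurrence of $X$ in $s$ appears as some $\pi\act X$ contributing $\pi\act\pmss(X)$ to $\f{fa}(s)$, and after substitution contributes $\f{fa}(\pi\act r)=\pi\act\f{fa}(r)\subseteq\pi\act\pmss(X)$ by Lemma~\ref{lemm.fa.pi.r} and the hypothesis $\f{fa}(r)\subseteq\pmss(X)$; hence $\f{fa}(s[X\ssm r])\subseteq\f{fa}(s)\subseteq\pmss(Y)$. For \rulefont{{\forall}R}, with premise $\Phi\cent\phi,\Psi$ and eigenvariable condition $Y\not\in\f{fV}(\Phi,\Psi)$, the inductive hypothesis gives $\Phi[X\ssm r]\cent\phi[X\ssm r],\Psi[X\ssm r]$ and I reapply \rulefont{{\forall}R}; the condition survives because substitution introduces no free unknowns outside $(\f{fV}(\Phi,\Psi)\setminus\{X\})\cup\f{fV}(r)$, and $Y$ avoids both sets by construction.

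I expect the quantifier bookkeeping to be the only real obstacle: choosing $Y$ fresh for $X$ and $\f{fV}(r)$, commuting the two substitutions through Lemma~\ref{lemm.commute.sub}, and re-establishing the capture side-condition of \rulefont{{\forall}L} together with the eigenvariable condition of \rulefont{{\forall}R}. Everything else reduces to the observation that $[X\ssm r]$ commutes with the term- and proposition-formers and with the permutation action.
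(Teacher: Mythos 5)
Your proposal is correct and follows essentially the same route as the paper's (very terse) proof: induction on derivations, with Lemma~\ref{lemm.pi.theta} handling the permutation in \rulefont{Ax} and Lemma~\ref{lemm.commute.sub} handling the interaction of the two substitutions in \rulefont{{\forall}L}; your additional bookkeeping (freshly $\alpha$-renaming the bound unknown, re-checking the $\f{fa}$ side-condition and the eigenvariable condition) is exactly the detail the paper leaves implicit under ``routine''.
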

\begin{proof}
By a routine induction on derivations.
The case of \rulefont{Ax} uses Lemmas~\ref{lemm.pi.theta} and~\ref{lemm.commute.sub}.
The case of \rulefont{\forall L} uses Lemma~\ref{lemm.commute.sub}.
\end{proof}

\begin{lemm}
\label{lemm.pi.right}
\begin{enumerate*}
\item
If there exists a derivation $\Deriv$ of $\Phi\cent\psi,\,\Psi$ then there exists a derivation of $\Phi\cent\pi\act\psi,\,\Psi$.
\item
If there exists a derivation $\Deriv$ of $\Phi,\,\phi\cent\Psi$ then there exists a derivation of $\Phi,\,\pi\act\phi\cent\Psi$.
\end{enumerate*}
\end{lemm}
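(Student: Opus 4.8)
The plan is to argue by a single induction on the derivation $\Deriv$. Rather than prove the two items separately I would prove one strengthened statement from which both follow: given a derivation of any sequent together with a choice of a \emph{finite set of designated formula-occurrences}, each tagged with a side (antecedent or succedent) and its own permutation, there is a derivation of the sequent in which every designated formula $\chi$ is replaced by $\sigma\act\chi$ for its tag $\sigma$, all other formulas left fixed. Items (1) and (2) are the instances with a single designated succedent, resp.\ antecedent, formula.

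The two axiom rules are the base cases. For \rulefont{\bot L} the designated formula is either a fixed side formula or $\bot$, and since $\pi\act\bot\equiv\bot$ the same instance applies. For \rulefont{Ax}, written $\Phi,\phi_0\cent\pi_0\act\phi_0,\Psi_0$, I match the (possibly permuted) active pair: if the left active $\phi_0$ is tagged by $\sigma$ and the right active $\pi_0\act\phi_0$ by $\rho$, the required sequent still relates a formula to a permuted copy of itself, so it is again an instance of \rulefont{Ax} with the single composite permutation $\rho\circ\pi_0\circ\sigma\mone$; designated side formulas are irrelevant to the axiom. For the logical rules, a designated \emph{side} formula is threaded through unchanged: I apply the induction hypothesis to the premise(s), carrying the same tags, and reapply the rule. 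When the designated formula is \emph{principal} I push its permutation through the connective. For \rulefont{{\forall}L} I use that $(\pi\act\phi)[X\ssm r]\equiv\pi\act(\phi[X\ssm r])$ (the proposition version of Lemma~\ref{lemm.pi.theta}): permuting the premise $\Phi,\phi[X\ssm r]\cent\Psi$ and reapplying \rulefont{{\forall}L} with the \emph{same} witness $r$ gives $\Phi,\Forall{X}(\pi\act\phi)\cent\Psi$, and the side condition $\f{fa}(r)\subseteq\pmss(X)$ is untouched. The case \rulefont{{\forall}R} is dual, using that a level 1 permutation leaves free unknowns unchanged (immediate from Definition~\ref{defn.fa}), so the eigenvariable condition $X\notin\f{fV}(\Phi,\Psi)$ survives; for \rulefont{{\limp}L} and \rulefont{{\limp}R} the principal $\phi\limp\psi'$ is replaced by $(\pi\act\phi)\limp(\pi\act\psi')$ and its two immediate subformulas become designated in the premise(s).

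The main obstacle, and the reason for the strengthening, is exactly the principal case of \rulefont{{\limp}R}: its single premise $\Phi,\phi\cent\psi',\Psi''$ must have \emph{both} $\phi$ (now antecedent) and $\psi'$ (succedent) permuted at once. Handling this by two successive appeals to a one-formula statement fails, since the second appeal would land on a freshly constructed derivation rather than on a subderivation of $\Deriv$; permuting a whole designated set in one appeal avoids this. The induction stays well founded because it is on the size of $\Deriv$ and each premise is a strict subderivation — the designated set may grow as compound formulas are decomposed, but that does not affect termination. Finally I would remark that the construction reapplies only the rule that was already present and never inserts a \rulefont{Cut}, so the lemma holds already in the cut-free fragment; this is what lets it be used in the proof of cut-elimination without circularity.
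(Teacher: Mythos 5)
Your proof is correct, and it reaches the same essential content as the paper's (commute the permutation through each rule, with \rulefont{{\forall}L} handled via $(\pi\act\phi)[X\ssm r]\equiv\pi\act(\phi[X\ssm r])$, i.e.\ Lemma~\ref{lemm.pi.theta}), but the induction is set up differently. The paper simply performs a \emph{simultaneous} induction on $\Deriv$ for the two one-formula statements, noting that simultaneity is needed because \rulefont{{\limp}L} and \rulefont{{\limp}R} move material between the two sides; at \rulefont{{\limp}R} this implicitly requires chaining two applications of the transformation, which is licit because (as the paper observes when using this lemma in cut-elimination) the transformation does not increase the derivation's size, so the second application still falls under the inductive hypothesis once one measures by height rather than by the subderivation relation. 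You identify exactly this subtlety and resolve it differently: by strengthening the statement to permute a whole finite set of tagged occurrences in a single pass, so that the premise of \rulefont{{\limp}R} is handled by one appeal to the inductive hypothesis on a genuine subderivation. Your version buys a cleaner well-foundedness argument at the cost of a heavier statement; the paper's is terser but leans on height-preservation. Your closing remark that the construction introduces no \rulefont{Cut} is also the right thing to record, since the lemma is invoked inside the cut-elimination argument.
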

\begin{proof}
By a simultaneous induction on $\Deriv$.
The case of \rulefont{\forall L} uses Lemma~\ref{lemm.pi.theta}.
(We need the \emph{simultaneous} induction for \rulefont{{\limp}L} and \rulefont{{\limp}R}, since parts of the proposition move between left and right.)
\end{proof}

\begin{nttn}
An instance of \rulefont{Cut} rests on two sub-derivations.
It is convenient to call them the \deffont{left branch} and \deffont{right branch} as illustrated:
$$
\begin{prooftree}
\begin{prooftree}
\leadsto
\Phi,\ \phi\cent \Psi
\using\text{\it Left branch}
\end{prooftree}
\qquad
\begin{prooftree}
\leadsto
\Phi\cent \phi, \Psi
\using\text{\it Right branch}
\end{prooftree}
\justifies
\Phi\cent\Psi
\using\rulefont{Cut}
\end{prooftree}
$$
\end{nttn}

\begin{thrm}[Cut-elimination]
\label{thrm.cut}
If $\Phi\cent\Psi$ is derivable with a derivation that uses \rulefont{Cut}, then it is derivable with a derivation that does not use \rulefont{Cut}. 
\end{thrm}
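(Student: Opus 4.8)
The plan is to run Gentzen's standard cut-elimination argument, which transfers to permissive-nominal logic almost verbatim: the rules of Figure~\ref{Seq} depart from those of ordinary first-order logic only in the permutation $\pi$ that decorates \rulefont{Ax} and in the freshness side-conditions on the quantifier rules, and both departures are exactly what Lemmas~\ref{lemm.pi.right} and~\ref{lemm.instantiate} are designed to absorb. Because $\Phi$ and $\Psi$ are \emph{sets} of propositions, weakening and contraction are silently handled by set-union and need no separate structural rules.

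First I would assign each proposition a \emph{cut-degree} $\partial$, setting $\partial(\tf P(r))=\partial(\bot)=0$, $\partial(\psi_1\limp\psi_2)=1+\partial(\psi_1)+\partial(\psi_2)$, and $\partial(\Forall{X}\psi)=1+\partial(\psi)$. The core is a reduction lemma: if there are cut-free derivations of $\Phi\cent\phi,\,\Psi$ and of $\Phi,\,\phi\cent\Psi$, then $\Phi\cent\Psi$ has a cut-free derivation. I would prove this by a lexicographic induction on the pair $(\partial(\phi),\,h_1{+}h_2)$, where $h_1,h_2$ are the heights of the two derivations. Theorem~\ref{thrm.cut} then follows by eliminating a topmost \rulefont{Cut}---one both of whose branches are already cut-free---via the lemma, and iterating until no cuts remain.

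The heart of the proof is the case analysis on the last rules of the two branches. In the \emph{axiom cases}, where one branch is \rulefont{Ax}, the nominal flavour surfaces: if the cut-formula is the left-principal formula $\phi$ of an axiom $\Phi,\,\phi\cent\pi\act\phi,\,\Psi$, cutting does not merely return the other branch, because a $\pi$ has intervened; here Lemma~\ref{lemm.pi.right} transports the other derivation along $\pi$ and closes the case (when the cut-formula is a mere spectator of the axiom, the conclusion is itself an axiom). In the \emph{commutation cases}, where the cut-formula is not principal in at least one branch, I permute the cut upward past the last rule of that branch, strictly decreasing $h_1{+}h_2$ at fixed $\partial(\phi)$; the only delicate point is commuting past \rulefont{{\forall}R}, whose eigenvariable condition $X\notin\f{fV}(\Phi,\Psi)$ must be preserved, renaming $X$ by $\alpha$-equivalence if necessary. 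The \emph{principal cases}, where $\phi$ is introduced on both sides, are the genuine logical reductions: for $\phi=\psi_1\limp\psi_2$ I replace the cut by two cuts, on $\psi_1$ and on $\psi_2$, each of strictly smaller degree; for $\phi=\Forall{X}\psi$ with \rulefont{{\forall}L} witness $r$, I apply Lemma~\ref{lemm.instantiate} to substitute $r$ for $X$ throughout the left premise---legitimate precisely because of the eigenvariable condition, which forces $\Phi[X\ssm r]=\Phi$ and $\Psi[X\ssm r]=\Psi$---yielding a cut on $\psi[X\ssm r]$ of degree $\partial(\psi)<\partial(\Forall{X}\psi)$. Atomic cut-formulas $\tf P(r)$ and $\bot$ are never principal on the right, so they only ever trigger commutations that bottom out at \rulefont{Ax} or \rulefont{\bot L}.

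The main obstacle I anticipate is bookkeeping rather than conceptual depth. Two points require genuine care. First, Lemma~\ref{lemm.pi.right} and Lemma~\ref{lemm.instantiate} must be known to preserve cut-freeness (their proofs, being inductions that rebuild a derivation of the same shape, do), so that the residual derivations fed back into the induction really are cut-free. Second, the principal $\forall$-reduction relies on the instantiated cut-formula having strictly smaller degree, which is clear since substitution adds no connectives; this is what lets the outer induction on $\partial(\phi)$ discharge the new cut regardless of its height. Granting these, the lexicographic induction closes exactly as in the classical first-order case.
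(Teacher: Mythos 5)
Your proposal is correct and follows essentially the same route as the paper: both push cuts upward under a lexicographic measure led by the size of the cut formula, both discharge the essential $\forall$-case via Lemma~\ref{lemm.instantiate} and the \rulefont{Ax}-with-$\pi$ case via Lemma~\ref{lemm.pi.right}, and both observe that the nominal side-conditions affect only terms, not the shape of derivations. The only difference is cosmetic---you take the sum of subderivation heights as the secondary component where the paper tracks how far the cut formula persists up the derivation---and your explicit remarks that the two lemmas preserve cut-freeness and that the eigenvariable condition makes $\Phi[X\ssm r]=\Phi$ are worthwhile details the paper leaves implicit.
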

\begin{proof}
The proof is as for first-order logic.
The only differences are a $\pi$ in \rulefont{Ax} and a side-condition $\f{fa}(r)\subseteq\pmss(X)$ in \rulefont{\forall L}.
These affect terms and have no effect on the structure of derivations; for the purposes of this proof they are irrelevant.
 
We commute instances of \rulefont{Cut} upwards, as usual, following the method of \cite[pages 139-145]{elements} or \cite{gabbay:prottl}.
At each step, the following measure based on the depth of subderivations and the size of the cut formula, decreases:
\begin{itemize*}
\item
The size of the cut formula, and
\item
the longest path up the derivation the cut, that the formula persists, 
\end{itemize*}
lexicographically ordered.
\begin{itemize}
\item
The commutation cases between rules for $\limp$ and $\forall$ are as standard for first-order logic. 
\item
The essential case for $\limp$ is as standard.
\item
For the essential case for $\forall$, suppose the subderivation has the following form: 
$$
\begin{prooftree}
\begin{prooftree}
\Phi,\ \phi[X\ssm r]\cent \Psi
\justifies
\Phi,\ \Forall{X}\phi\cent\Psi
\using\rulefont{\forall L}
\end{prooftree}
\qquad
\begin{prooftree}
\begin{prooftree}
\leadsto
\Phi\cent \phi,\ \Psi
\using\Deriv
\end{prooftree}
\justifies
\Phi\cent \Forall{X}\phi,\ \Psi
\using\rulefont{\forall R}
\end{prooftree}
\justifies
\Phi\cent \Psi
\using\rulefont{Cut}
\end{prooftree}
$$
By Lemma~\ref{lemm.instantiate} there is a derivation $\Deriv[X\ssm r]$ of $\Phi\cent\phi[X\ssm r],\ \Psi$.
We eliminate the essential case as follows: 
$$
\hspace{-2.5em}
\begin{prooftree}
\Phi,\ \phi[X\ssm r]\cent \Psi
\qquad
\begin{prooftree} \leadsto \Phi\cent \phi[X\ssm r],\ \Psi \using \Deriv[X\ssm r] \end{prooftree}
\justifies
\Phi\cent \Psi
\using\rulefont{Cut}
\end{prooftree}
$$
\item
Suppose the subderivation has the following form: 
$$
\begin{prooftree}
\begin{prooftree}
\phantom{h}
\justifies
\Phi,\ \phi\cent \pi\act\phi,\Psi
\using\rulefont{Ax}
\end{prooftree}
\qquad
\begin{prooftree}
\leadsto
\Phi,\ \pi\act \phi \cent \Psi
\using\Deriv
\end{prooftree}
\justifies
\Phi,\ \phi\cent \Psi
\using\rulefont{Cut}
\end{prooftree}
$$
We use Lemma~\ref{lemm.pi.right} to obtain a derivation $\Deriv'$ of $\Phi,\ \phi\cent \Psi$ (the transformations involved in the proof of Lemma~\ref{lemm.pi.right} do not increase the inductive measure).
\end{itemize}
\end{proof}

\section{Related work}
\label{sect.related.work}

\subsection{Other `nominal' syntaxes and logics}
\label{subsect.using.nominal.sets}

Compared to other `nominal' logics, PNL emphasises ergonomics. 
When we use
PNL to axiomatize and build proofs in arithmetic or in set theory, we
want our hypotheses to speak about natural numbers or sets, not to speak
about atoms and freshness. 

Thus, although notions of atom and freshness are important, they should be implicit; i.e. handled automatically by the logic. 
This is the purpose of permission sets, which allow us to handle freshness and $\alpha$-renaming separately from logical deduction and equality reasoning.

\paragraph*{Axiomatisations}
The two best-known `nominal' logics are probably the \emph{nominal logic} of \cite{pitts:nomlfo-jv} and \emph{FM set theory}.
Both of these are Hilbert-style theories---sets of axioms---in first-order logic.
They are axiomatic theories of sets.

FM set theory contains axioms whose intended model is a sets cumulative hierarchy, whereas nominal logic contains axioms only for sets with a finitely-supported permutation action, with no assumption that they be composed of other sets.
For the purposes of this paper, the difference is not important.

\emph{Qua} logic, PNL is a logic whereas nominal logic and FM set theory are axiomatisations.
In addition and closely related to this, we can `just $\alpha$-rename' and `just choose a fresh atom'---as mentioned above we have  $\alpha$-renaming and freshness without appealing to equality reasoning and axioms.
 
\paragraph*{Proof-theories for the $\protect\new$-quantifier}
Natural deduction rules for $\new$ are proposed e.g. in \cite[Proposition~4.10]{gabbay:newaas-jv}, but these are not closed under substitution.
The second author created a proof-theory for $\new$ \cite{gabbay:frelog} with a good notion of proof-normalisation and a completeness proof, 
followed by an alternative treatment with Cheney \cite{gabbay:seqcnl}.

These gave $\new$ an operational behaviour as `pick a locally fresh name'; 
Cheney then developed another sequent system which gave $\new$ an operational behaviour as `pick a globally fresh name' \cite{cheney:simptn}.

The logic of \cite{cheney:simptn} includes
12 infinite axiom-schemes (Figures~3 and~4 of \cite{cheney:simptn}) 
describing the behaviour of atoms-abstraction from \cite{gabbay:newaas-jv}.
Thus, $\alpha$-equivalence (for atoms) is axiomatically handled and does not participate in the proof-theory.

PNL handles both $\alpha$-equivalence and the $\new$-quantifier very compactly, without recourse to axioms, and indeed requiring neither equality reasoning nor a $\new$-quantifier.

\paragraph*{One-and-a-halfth order logic} 
This logic is designed to represent schematic first-order reasoning (first-order derivations in the presence of `unknown predicates').
It corresponds roughly to the axiomatisation of first-order logic in Section~\ref{sect.pnl.arithmetic}.

\paragraph*{Semantic nominal terms}
In \cite{gabbay:semnt} we show how to interpret level 2 variables (unknowns) as infinite lists of distinct level 1 variables (atoms).
This allows us to build permissive-nominal term syntax as nominal abstract syntax-style inductive datatypes as proposed in \cite{gabbay:newaas-jv}.
The aim of this paper is to discuss the logic; not to analyse how its syntax could best be built.

\subsection{From nominal terms-in-freshness-context to PNL terms-with-permission-sets}
\label{subsect.tifc.pnl}

Nominal terms were introduced in \cite{gabbay:nomu-jv} where a decidable and efficient unification algorithm was demonstrated (see \cite{calves:polnua,levy:effnua,calves:phd} for the state of the art).
Nominal terms have been used in equational specification languages; in rewriting \cite{gabbay:nomr-jv} and in universal algebra (the logic of equality) \cite{gabbay:nomuae}.

PNL differs from nominal terms in three ways:
\begin{itemize*}
\item
Nominal terms use a finite \emph{freshness context} $a\#X$ whereas PNL uses permission sets, following \cite{gabbay:perntu,gabbay:perntu-jv} (a `permission set' $S$ can equally well be considered as a freshness sets $\mathbb A\setminus S$).
\item
PNL predicates include universal quantification over unknowns $\forall X$.
\item
PNL term syntax includes $\f{shift}$-permutations (the implications of this are discussed in Subsection~\ref{subsect.all.terms}).
\end{itemize*}
These features optimise PNL for being a first-order style logical foundation for mathematics with binding.

Another way to view this paper is as follows: PNL is the `obvious' extension of nominal algebra \cite{gabbay:nomuae} (an equational logic based on nominal terms), to a first-order logic.

But how then do we arrive at \emph{permissive} nominal techniques, starting from nominal algebra?
Suppose we have an equality axiom $a\#X\cent \tf f([a]X)=\tf g(X)$.
We want to write a corresponding first-order axiom.
There are two obvious routes to follow:
\begin{enumerate*}
\item
Assume first-order logic with a sort of atoms $\nu$ and some axioms (like nominal logic or FM set theory) and write 
$$
\Forall{z,x}z\#x\limp \tf f([z]x)=\tf g(x).
$$
Here $z$ and $x$ are variables and freshness $\#$ can be expressed using the axioms of the logic.

The problem with this is that we lose proof-theory; we are just working in a Hilbert-style axiom system.
\item
Imagine a first-order logic with nominal terms in which freshness conditions are attached to quantifiers, so that we can write 
$$
\Forall{X{:}\nu_{\#a}}\tf f([a]X)=\tf g(X). 
$$
Here $\nu_{\#a}$ means `elements of $\nu$ for which $a$ is fresh'.

The problem with this is that we have a poor theory of $\alpha$-equivalence; the freshness context for $X$ does not allow us to rename $[a]X$ to $[b](b\ a)\act X$, because there \emph{is} no $b$ fresh for $a$ (more on this in \emph{Implementing PNL} below).
\end{enumerate*}
Concerning the second option, we can add `freshening' axioms or derivation rules to the effect that $\Forall{X:\nu_{\#a}}\phi$ be logically equivalent to $\Forall{X{:}\nu_{\#a,b}}\phi$, and so on---this is in essence what the `freshening' rule \rulefont{fr} of nominal algebra (rule in Figure~2 of \cite{gabbay:nomuae}) does.
It should be possible to construct a version of PNL along these lines; the disadvantage would be that once $X$ is instantiated, we can no longer add further fresh atoms, unless we reintroduce \rulefont{fr} into PNL, but even then we would not recover full permissive-nominal $\alpha$-equivalence.

In PNL we take the idea to the $\omega$th degree; taking a limit of this `freshening' operation to obtain infinitely many fresh atoms, we arrive at permission sets.

\subsection{Non-nominal logics}

\paragraph*{First- and higher-order logic}
As discussed in the Introduction, we see PNL as sitting somewhere in-between these two logics.
It is more powerful---and we would claim more ergonomic---than first-order logic, because term-formers can bind.
Its advantage over higher-order logic is the smaller and simpler models and generally more first-order character.
Its term-syntax supports a decidable unification algorithm: both without $\f{shift}$ \cite{gabbay:perntu-jv} and with \cite{gabbay:nomtnl}.

\paragraph*{Logics based on the $\nabla$-quantifier}
A family of logics exists based on higher-order patterns and the $\nabla$-quantifier \cite{Miller:protgj,tiu:logrgj,gacek:comgjr}.
The intended meaning of e.g. $\nabla x.r{=}s$ is `$\lambda x.r{=}\lambda x.s$'.
Thus for instance the intended denotation of $\nabla x.\nabla y.x{=}y$ is $\lam{x}\lam{y}x{=}\lam{x}\lam{y}y$, and this is false.

As this example suggests, logics based on $\nabla$ use \emph{raising} and patterns (in brief: higher-order variables applied to finite lists of distinct variables, as in $xx_1\ldots x_n$) to obtain the effect of capturing substitution and variable dependencies, whereas we use permission sets and a two-level term syntax.
Our reading is that $\nabla$ is a way of peeling a single $\lambda$-abstraction uniformly off all terms and pushing it `into the meta-level'.
Or, to put it another way: $\nabla$ generates a fresh $\lambda$-abstracted variable.

The main philosophical difference here is that $\nabla$ is designed to assume $\alpha$-equivalence and treats variables as a `wire' which must always be bound, either by $\lambda$ in a term or possibly by a top-level $\nabla$; in contrast nominal techniques treat names as global and permutable and break $\alpha$-equivalence down into names and permutation.
In a separate journal paper submitted for publication, we relate these by translating permissive-nominal logic to higher-order logic in the style of e.g. a translation of permissive-nominal term unification to higher-order pattern unification \cite{gabbay:perntu-jv} or nominal algebra to algebra over higher-order terms \cite{gabbay:unialt}.

However, note that raising can cause a linear expansion in the size of a term (because what is represented by $X$ in this paper would be represented by $x x_1\ldots x_n$ in a logic based on raising), and can also cause `silly' $\beta$-redexes (since the mechanism which encodes dependency is the same mechanism which encodes computation).
This is one of the motivations for CMTT discussed below.

\paragraph*{Contextual modal type theory (CMTT)}
CMTT \cite{nanevski:conmtt} is a two-level system; typing contexts split into two halves; $\Delta$ and $\Phi$.
The two levels are different from the two levels used in (permissive-)nominal terms.
Variables $u:A[\Phi]\in\Delta$ range over representations of code; variables $x:A\in \Phi$ range over denotation.

This addresses a problem that is mostly orthogonal to what PNL tries to achieve. 
To the extent that this could be represented in PNL, it would be represented at the level of sorts---one sort for code, another for denotation (values).

Logics based on CMTT are consistent and have a well-studied proof-theory, so individual semantic models can be constructed using normal forms, but the question ``what is the general class of structures which these syntactic models represent?''' has no answer we know of.
That is, no general sets-based class of models has has been given for CMTT. 
Developing such models---perhaps using techniques borrowed from nominal sets---would be interesting future work. 

\paragraph*{Further logics}
Coming from other threads of research in computer science are logics designed to enrich first-order logic directly with binders without thinking specifically about inductive reasoning. 
We note in particular \emph{binding logic} \cite{dowek:binl} and \emph{$\lambda$-logic} \cite{beeson:laml}.

Binding logic enriched first-order terms with binders but forbade capture and turned out to be a little too weak.

$\lambda$-logic takes a direct approach of enriching first-order terms with $\lambda$-abstraction. 
The approach to binding taken by PNL is somewhat more general and is certainly different, in that it allows us to treat names as `bindable constants'.
That is, we can compare names for \emph{in}equality as names, while at the same time we can give them the behaviour of variables by axiomatising e.g. substitution for them, if we wish.

\section{Further remarks, further work}
\label{sect.further.remarks}

We have seen how permissive-nominal logic (PNL) extends first-order logic with term-formers that can bind.
We have given PNL a nominal-sets based semantics and shown it sound and complete.
We have considered a finite axiomatisation of arithmetic, based on a finite axiomatisation of first-order logic in PNL, and proven it correct.
Finally, we have proved cut-elimination.

In most respects PNL behaves just like first-order logic.
However, its `nominal' constructs let it ergonomically perform many of the tasks that would require much more powerful constructs in e.g. higher-order logic.

We do not claim that PNL makes nominal terms like those used in \cite{gabbay:nomu-jv,gabbay:nomr-jv} obsolete.
The argument for permissive-nominal terms is that they are a more abstract and powerful mathematical model with which to do proofs; for we use them to `just quotient' by $\alpha$-equivalence, and we use them to reconcile $\alpha$-equivalence of atoms with $\forall X$.
But that story is entirely compatible with prevous work.
For instance, though permission sets have the form $(\atomsdown \cup A)\setminus B$ (Definition~\ref{defn.atoms}), in practice we only seem to specify restrictions like $a\in \pmss(X)$ and $b\not\in \pmss(X)$---and these look like freshness environments.
In the discussion of Subsection~\ref{subsect.all.terms} we saw why: permission sets control capture, and we only care about controlling capture for the finitely many atoms mentioned explicitly in an axiom.
Furthermore the use of $\f{shift}$-permutations (Definition~\ref{defn.permutation}) means that the exact choice of permission set `does not really matter'---more on this below.
In a sense, freshness contexts live on in this paper and remain useful, as an emergent property of how we interact with a more abstract underlying mathematical model given by permissive-nominal terms.

We do not claim that PNL is the ultimate logic, whatever that means. 
However, we do see PNL as a significant step forward in the continuing search for logics suitable for axiomatising the systems with binding which are so common in computer science, as briefly discussed in Section~\ref{sect.related.work}.
We hope that PNL will turn out to be a `sweet spot' amongst such systems---fairly simple, yet usefully expressive and with good theoretic properties.

We will now briefly discuss some of the design decisions and design alternatives available to us in creating logics in the spirit of PNL.

\paragraph*{Unknowns of name sort, and atoms}

A swapping with unknowns, as in $(X\ Y)\act r$ where $X$ and $Y$ have a name sort $\nu$, is not primitive syntax in PNL.
This is \emph{atoms as variables} as in \cite{gabbay:seqcnl} and \cite{pitts:nomlfo-jv}, as opposed to the \emph{atoms as constants} approach of nominal terms \cite{gabbay:nomu-jv} which is inherited by this paper.

The axioms of nominal logic \cite[Section~5]{pitts:nomlfo-jv} can be copied over to endow term-formers $\tf{abs}$ and $\tf{swap}$ (see below) with the right properties; since the logic of \cite{pitts:nomlfo-jv} is already a set of axioms, there is no harm in doing this also in PNL.
Alternatively, we can `promote' behaviour from atoms to unknowns: 
Suppose a sort $\alpha$ and a name sort $\tf{Atm}$.
Suppose $\sort(X)=\alpha$ and $a\in\pmss(X)$.
Suppose a term-former $\tf{abs}:(\tf{Atm},\alpha)[\tf{Atm}]\alpha$.
The axiom 
$\Forall{X,Y}(X{\approx} a\limp \tf{abs}(X,Y){\approx} [a]Y)$
`promotes' atoms-abstraction, to an abstraction by unknowns of sort atom (over terms of sort $\alpha$).
Similarly for a term-former $\tf{swap}:(\tf{Atm},\tf{Atm},\alpha)\alpha$.

The question phrased in semantic terms is as follows: 
\begin{itemize*}
\item
Should every inhabitant of the denotation of a name sort $\nu$ to be referenced in PNL syntax by an atom?  
\item
Should every inhabitant of the semantics of name sorts, to be referenced by a closed term?
\end{itemize*}
The answers to these questions matter, but not for this paper.
It is not unusual (indeed, it is very common) for there to be more elements in a type than there are closed terms.
Furthermore, we have proved completeness in Subsection~\ref{subsect.completeness}, so any extra elements in name sorts cannot `make anything false'.

Yet it is reasonable to ask in future work whether we could exclude `non-standard atoms' in the same way that for example we might try to exclude non-standard numbers from models of first-order theories of arithmetic.
(Note that this is a general issue with a two-level syntax, and is not specific to PNL.)

We believe that this is possible. 
The idea is in the proof-theory for $\new$ from `Fresh Logic' \cite{gabbay:frelog}; see \rulefont{Exhaust\mathbb A} in Figure~3, and Subsection~5.5.

\paragraph*{Extending sorts}
It would be a good idea to introduce sort-formers and polymorphism into the sorting system, so that e.g. we can conveniently axiomatise a substitution action on an infinite class of sorts.
We see no difficulty in doing this---it is a definitional extension of what we already have.

Another interesting extension is to assume, for every sort $\alpha$, an associated name sort $\nu_\alpha$.
This would allow us to talk about `the level 1 variables associated with $\alpha$' in the same way that we can already talk about `the level 2 variables of sort $\alpha$'.

\paragraph*{Design of permission sets}

There is design freedom in the choice of permission sets.
We briefly sketch some of the options.

We can have \emph{more} permission sets. 
For instance, we can take $\mathbb A$ uncountable and permission sets all countably infinite sets.
We can also add finite permission sets, enabling us e.g. to reason about properties that only hold of (level 1) closed terms, or terms with a finitely bounded number of free atoms.
 
We can have \emph{fewer} permission sets.
For instance we could take permission sets to be $\atomsdown\setminus A$, or $\pi\act\atomsdown$ for finite $\pi$.

We can also have \emph{much} fewer permission sets---one, to be precise. 
PNL would work just as well if we took $\atomsdown$ as the single unique permission set.
The effect of larger or smaller permission sets can then be obtained using permutations.
For example if $\pmss(X)=\atomsdown$ and $\pmss(Y)=\f{shift}_\tau\act\pmss(X)$ and $\sort(X)=\tau=\sort(Y)$ then the effect of $\Forall{Y}\tf P(Y)$ can be obtained by the logically equivalent $\Forall{X}\tf P(\f{shift}_\tau\act X)$.
Using further $\f{shift}$-permutations and conjugation by finite permutations, any of the permission sets of Definition~\ref{defn.atoms} can be obtained.

Note that we never want $\mathbb A$ to be a permission set.
If we had that, then we would not be able to `choose a fresh atom' and e.g. would be unable to $\alpha$-convert $a$ in $[a]X$ if $\pmss(X)=\mathbb A$. 

Taking a more abstract view, a natural generalisation of Definition~\ref{defn.atoms} is an \emph{equivariant nominal join semi-lattice that does not contain $\top$}.
So specifically for sets of atoms, this means that if $S$ and $T$ are permission-sets then so are $\pi\act S$ and $S\cup T$, and $\mathbb A$ is not a permission set.
To illustrate how this works, note that if $S$ is a permission set and $\mathbb A\setminus S$ is finite then it easily follows using equivariance and sets unions that $\mathbb A$ is a permission set.  So to insist that $\mathbb A$ is not a permission set, is really to insist that every permission set is coinfinite.

The design decision made in Definition~\ref{defn.atoms} is simple, effective, and direct, and it allows us to express capture-avoidance conditions easily without complex `emulations' involving $\f{shift}$.

\paragraph*{PNL without $\f{shift}$}
We can restrict PNL by dropping $\f{shift}$-permutations (but retaining permission sets as defined), yielding
a logic that could be called \emph{PNL without shift}.
This is what was considered in the conference version of this paper \cite{gabbay:pernl}.

This is less ergonomic, but in a certain sense it is just as powerful.
It all depends on whether we want to be able to change our mind about a permission set in mid-derivation. 

This is a similar issue as appears e.g. in the design of a sequent system, whether we allow weakening as an explicit sequent rule (so that we can weaken mid-derivation), or integrate weakening into the axiom rule (so we have to anticipate the other propositions needed in the sequent).

The isomorphism between $\pmss(X)$ and $\pmss(X)\cup\{a\}$ for $a\not\in\pmss(X)$ is explicit in full PNL and an implicit fact in PNL without shift.

Unification of permissive-nominal terms without $\f{shift}$ was considered in \cite{gabbay:perntu-jv}.
Subsequently to writing this paper, that theory was re-cast using $\f{shift}$ \cite{gabbay:nomtnl}.

Note that nominal algebra satisfies an HSPA result whereas permissive-nominal algebra with shift satisfies an HSP result; details are elsewhere \cite{gabbay:nomahs,gabbay:nomtnl} but what is relevant to this discussion is that the extra expressivity which $\f{shift}$ gives, can make a real, mathematically measurable, difference.

\paragraph*{Implementing PNL}

An implementation of PNL could follow the lines of a first-order theorem-prover, since the proof-rules in Figure~\ref{Seq} are so like those of first-order logic.
The term-language would be richer and would include names and binding.

There would be many design choices, some of which we have touched on above: polymorphism in sorts; choice of permission sets (perhaps even adding variables to permission sets); whether or not to include $\f{shift}$; whether or not to exclude `non-standard' atoms using an \rulefont{Exhaust\mathbb A} rule like that in Figure~3 of \cite{gabbay:frelog}; and so on.

Another point is how much of the infinity of permission sets we should expose to the user.
To discuss this further, we must draw together several strands that have run through this paper from the beginning.

At the start of this paper we introduced permission sets, which guarantee infinite supplies of fresh atoms for every unknown. This culminated in Definition~\ref{defn.terms.and.propositions} with the permissive-nominal `just quotient' $\alpha$-equivalence, which is different from the notion of $\alpha$-equivalence used in nominal terms in e.g. \cite{gabbay:nomu-jv,gabbay:nomr-jv,gabbay:nomuae}. 
In Subsection~\ref{subsect.tifc.pnl} we gave a sense in which PNL is obtained from nominal terms and nominal algebra by adding universal quantifiers while taking a limit of extending freshness contexts in the sense of nominal terms.

But then at the start of this section we noted that for any \emph{concrete} derivation we only care about the finitely many atoms explicitly mentioned, thus for any concrete derivation we only care about finite freshness information after all.

So we have a choice, when implementing PNL, whether to \emph{present} the user with $\atomsdown$ and $\atomsup$ directly as we did in Definition~\ref{defn.atoms}, or to present a nominal terms syntax in which a (possibly but not necessarily finite) context of freshness assertions $\Delta$ is carried and may be extended as needed (nominal algebra does this using a freshness rule \rulefont{fr} \cite{gabbay:nomuae}; an idea taken from \cite{gabbay:frelog}).

A specific disadvantage is that we would lose the permissive-nominal `just quotient syntax' theory of $\alpha$-equivalence used in Definition~\ref{defn.terms.and.propositions}.

At the moment it seems unclear how much this matters from the point of view of an implementation.
After all, in an implementation we will have a specific goal with specific and (finitely many) atoms for which the user has chosen specific names. 
So will the user even appreciate explicit access to an infinite stock of fresh atoms?
Or, will the user prefer a freshness context to be extended as needed?
Note that the implementation might need fresh names when $\alpha$-renaming during resolution, so a resolution step might extend the freshness context with finitely but unboundedly many new names.
An advantage of presenting $\atomsup$ explicitly is that these names are honestly presented to the user from the start. 

As with any logic, there are many ways to present it.  Yet, the underlying mathematics remains essentially the same.

\paragraph*{Summary}

PNL addresses problems of mathematical specification with names and binding.
It provides a first-order logic environment which allows us to formally express the `informal meta-level', complete with names and binding.
As such, the most exciting potential application of PNL is as a logical foundation---as a meta-theory for mathematics---intermediate in power between first- and higher-order logic.
We believe that, perhaps with some fairly modest extensions, it would make an expressive, ergonomic, and practical alternative meta-language for mechanised mathematics.

\hyphenation{Mathe-ma-ti-sche}

\end{document}